\documentclass[pdflatex,sn-mathphys-num]{sn-jnl}

\usepackage{graphicx}%
\usepackage{multirow}%
\usepackage{amsmath,amssymb,amsfonts}%
\usepackage{amsthm}%
\usepackage{mathrsfs}%
\usepackage[title]{appendix}%
\usepackage{xcolor}%
\usepackage{textcomp}%
\usepackage{manyfoot}%
\usepackage{booktabs}%
\usepackage{algorithm}%
\usepackage{algorithmicx}%
\usepackage{algpseudocode}%
\usepackage{listings}%

\usepackage{amssymb}
\usepackage{enumerate}
\usepackage{graphicx}
\usepackage{amsmath,mathrsfs}
\usepackage{multicol}
\usepackage{tikz}
\usetikzlibrary{automata,positioning}
\usepackage{wrapfig}
\usepackage{bussproofs}
\usepackage{times}
\usepackage{color}
\usepackage[all]{xy}
\usepackage{imakeidx}

\newtheorem{thm}{Theorem}[section]
 \newtheorem{cor}[thm]{Corollary}
 \newtheorem{lem}[thm]{Lemma}
 \newtheorem{prop}[thm]{Proposition}

 \theoremstyle{definition}
 \newtheorem{defn}[thm]{Definition}

 \theoremstyle{remark}
 
 \newtheorem*{ex}{Example}
 \numberwithin{equation}{section}

\newcommand{\ra}{\Rightarrow}

\newcommand{\naraba}{\rightarrow}

\newcommand{\inset}[2]{\left\{\, {#1} \,|\, {#2} \,\right\}}
\newcommand{\setof}[1]{\{{#1}\}}

\newcommand{\md}{\mathbb{M}}
\newcommand{\mdd}{\mathbb{M}_{\mathbf{ser }}}
\newcommand{\mdt}{\mathbb{M}_{\mathbf{ref}}}

\newcommand{\hkd}{\mathsf{H}(\mathbf{K}_D) }
\newcommand{\gkd}{\mathsf{G}\mathbf{(K}_D) }

\newcommand{\hkdd}{\mathsf{H}\mathbf{(KD}_D) }
\newcommand{\gkdd}{\mathsf{G}\mathbf{(KD}_D) }

\newcommand{\hktd}{\mathsf{H} \mathbf{(KT}_D) }
\newcommand{\gktd}{\mathsf{G}\mathbf{(KT}_D) }

\newcommand{\gktdplus}{\mathsf{G}\mathbf{(KT}^+_D) }

\begin{document}


\title{Uniform
Agent-interpolation  
of Distributed Knowledge
}


\author*[1]{\fnm{Youan} \sur{Su}}\email{su.youan@lnu.edu.cn}

\affil*[1]{\orgdiv{School of Philosophy}, \orgname{Liaoning University}, \orgaddress{\street{No.58, Daoyinan Street}, \city{Shenyang}, \country{China}}}


\abstract{ 



Uniform interpolation property (UIP) is a strengthening of Craig interpolation property. 
It can be understood as the definability of propositional quantifiers.
This paper develops the sequent calculi provided in Murai and Sano (2020), combining  with the methods studied by B{\'\i}lkov{\'a} (2007) to show the uniform interpolation for
epistemic logic   $\mathbf{K}$, $\mathbf{KD}$ and  $\mathbf{KT}$ with distributed knowledge. 
 A purely syntactic algorithm is presented to determine a uniform interpolant formula. In the definition of an interpolant formula, not only propositional variables but also agent symbols are  taken into consideration.
}

\keywords{Modal logic, Distributed knowledge, Uniform interpolation, Proof theory, Sequent calculus}


\maketitle

\section{introduction}

Craig interpolation theorem  states that: 
 whenever $\vdash A \rightarrow B$, there exists an interpolant formula $C$ such that  $\vdash A \rightarrow C$ and $\vdash C \rightarrow B$, and all propositional variables appearing in $C$ are shared by $A$ and $B$.
 The  Uniform Interpolation Property (UIP)  is a strengthening of Craig interpolation, first established  by Pitts   \cite{pitts1992interpretation}. 
The existence of uniform interpolants can also be viewed as the possibility of a simulation of certain propositional quantifiers.
For an arbitrary formula $A$ and any propositional variable $p_1,\ldots,p_m$, there is a post-interpolant $\exists p_1,\ldots,p_m(A)$ not containing $p_1,\ldots,p_m$, such that
$\vdash A \rightarrow \exists p_1,\ldots,p_m(A)$, and $\vdash A \rightarrow B$ implies $\vdash \exists p_1,\ldots,p_m(A) \rightarrow B$ for each $B$ not containing $p_1,\ldots,p_m$.
Similarly, for any $B$ and $p_1,\ldots,p_n$ there is a pre-interpolant $\forall p_1,\ldots,p_n(B)$ not containing $p_1,\ldots,p_n$, such
that $\vdash \forall p_1,\ldots,p_n(B) \rightarrow B$ and $\vdash A \rightarrow B$ implies $\vdash A \rightarrow \forall p_1,\ldots,p_n(B)$ for each $A$ not containing
$p_1,\ldots,p_n$.

Pitts \cite{pitts1992interpretation} first  established the UIP as a strengthening of Craig interpolation for intuitionistic propositional logic, based on a sequent calculus that absorbs all structural rules.
In recent years, UIP has gained much attention in the study of modal logic. 
UIP in modal logic $\mathbf{K}$ was shown by Visser \cite{Visser1996} using bounded bisimulations and by Ghilardi \cite{Ghilardi1995} using an algebraic approach.
Wolter \cite{wolter1997} proved that modal logic $\mathbf{S5}$ has the UIP.
It is also known that $\mathbf{K4}$ and $\mathbf{S4}$ do not satisfy UIP \cite{Ghilardi1995,bilkova2007uniform}.
Regarding multi-agent modal logic, Wolter \cite{wolter1997} showed that UIP for any normal mono-modal logic can be generalized to its multi-agent case.
Fang et al. \cite{FANG201951} proved that $\mathbf{K_n}$
$\mathbf{D_n}$,
$\mathbf{T_n,}$
$\mathbf{K45_n,}$
$\mathbf{KD45_n,}$
$\mathbf{S5_n}$ satisfy UIP.

There have also been some studies of UIP in modal logic based on proof-theoretic approaches.
B{\'\i}lkov{\'a}  developed the method originating from Pitts \cite{pitts1992interpretation} to show UIP in  modal logics $\mathbf{K}$ and $\mathbf{KT}$ \cite{bilkova2007uniform}, provability logic  $\mathbf{GL}$ and $\mathbf{Grz}$
\cite{bilkova2016}.
UIP for $\mathbf{K}$, $\mathbf{D}$, $\mathbf{T}$, and $\mathbf{S5}$ via nested sequents and hypersequents has been established by van der Giessen et al. \cite{Giessen2024_uip_proof} 
Furthermore, F{\'e}r{\'e}e et al. \cite{Feree2024} mechanized  the computation of UIP and proved correctness in the
Coq proof assistant for three modal logics:
$\mathbf{K}$, G{\"o}del-L{\"o}b logic $\mathbf{GL}$, intuitionistic strong L{\"o}b logic $\mathbf{iSL}$.
However, these  studies mentioned above did not include  agent symbols in the interpolant formula.


Hintikka 
uses the interpolation theorem to show that
`` logic does
indeed play an important role in explanations and in explaining"\cite[p.161]{Hintikka2007}.
He pointed out that the interpolation theorem is ``by itself a means of explanation"  \cite[p.167]{Hintikka2007}.
In the first-order language,
``the interpolation formula that in a sense serves as `the'
explanation serves as the antecedent of the covering law" \cite[p.178]{Hintikka2007} (which refers to Hempel’s covering law theory of explanation. \cite{Hempel1965}).   
\footnote{The author was indebted to Prof. Katsuhiko Sano for uncovering this point.}
When we connect interpolation with logical explanation as Hintikka suggested, it is very natural to also take agents into consideration in a multi-agent scenario.
Technically, taking agent symbols into the common language of the interpolant formula in the Craig interpolation theorem has been studied in for example \cite{murai20,Su2021_diel_lori}. 
Since UIP is a strengthening of Craig interpolation (however, it is noted that UIP is not established for the first-order language), by
taking agent symbols into consideration, 
it will be helpful when we focus on the influence exerted by a certain group of agents on a logical explanation.
Also, in a different context, UIP is also related to the notion of “forgetting” in knowledge representation and reasoning \cite{Lin1997,hans2009}. The uniform agent interpolation theorem is able to capture the idea that some groups of agents are forgotten in reasoning.


This paper combines the sequent calculi provided in Murai and Sano \cite{murai20}  with the methods studied by B{\'\i}lkov{\'a} \cite{bilkova2007uniform} to show the uniform interpolation for
epistemic logic   $\mathbf{K}_D$, $\mathbf{KD}_D$ and  $\mathbf{KT}_D$ with distributed knowledge. 
Two new terminating sequent calculi are provided.
It provides a 
proof-theoretic proof of UIP for these systems, and presents a purely syntactic algorithm for determining uniform interpolant formulas.
In the definition of an interpolant formula, not only propositional variables but also agent symbols are  taken into consideration.

The contents are organized as follows:
 In section \ref{sec:syntax}, we present our syntax and our Hilbert systems. 
  Section \ref{sec:seq and struc} introduces the main sequent calculi $\gkd$, $\gkdd$, and $\gktd$, and proves their proof-theoretic properties. 
   In section \ref{sec:finite model pro}, it provides an argument for the finite model property  of the proposed sequent calculi, hence it also gives an alternative proof of the admissibility of the cut rule.
  Section \ref{sec:main thm of gkd and gkdd} proves the main theorem of uniform interpolation properties for $\gkd$ and $\gkdd$.
 Section \ref{sec:T} presents a sequent calculus  $\gktdplus$ with a loop-preventing mechanism and shows its structural properties.
 Then, the section \ref{sec:main thm of T} proves main theorem of uniform interpolation properties for $\gktd$ via  $\gktdplus$. Finally, the uniform interpolation properties of both pre-interpolant and post-interpolant formulas with multiple
agent symbols and propositional variables are obtained for these systems.

\section{Syntax}
\label{sec:syntax}

We fix a finite set $\mathsf{Agt}$ of agents, set of  a countable set $\mathsf{Prop}$ of propositional variables.
The set of all  non-empty subsets of $\mathsf{Agt}$ is $\mathsf{Grp}$.
The set of formulas of the language $\mathcal{L}$ is defined inductively as:

\begin{center}
$\alpha ::= p   \,| \bot \,| \alpha\wedge \alpha \,|  \alpha\lor \alpha\,| \alpha\rightarrow \alpha| \neg \alpha \,| D_G \alpha,
\footnote{
 In B{\'\i}lkov{\'a} \cite{bilkova2007uniform},
propositional constants ( $\bot$ or $\top$) are not primitive in the syntax. Generally, we can define $\bot$ as for example $p\wedge \neg p$. If we do so, we need to carefully remove the undesired propositional variables  occurring in the interpolant formulas, to avoid a potential violation of the sharing condition.
However,
as was mentioned in Ono \cite{Ono1998}, $\Box \bot\rightarrow \bot$ is not provable in ${\bf K}$,  it is not always possible to remove all propositional constants. 
To avoid such a difficulty, $\bot$ is taken as  primitive in this syntax.}
 $\end{center}
such that
$p\in\mathsf{Prop} $ and $G\in \mathsf{Grp}.$  
Greek alphabet in uppercase letters, for example, $\Gamma,\Delta$, 
will be used to represent multi-sets of formulas. 
The formula $\langle D_{G}\rangle \alpha$ is $\neg  D_{G} \neg  \alpha$.
Let $n\in\mathbb{N}$ and $0\leq n$, $\overrightarrow{\alpha_n}$ stands for  formulas $\alpha_1,\cdots, \alpha_n$, 
$\overrightarrow{D_{G_n}\alpha_n}$ stands for  formulas $D_{G_1}\alpha_1,\cdots, D_{G_n}\alpha_n$. 
We use $\mathsf{V}(\alpha)$ and $\mathsf{Agt}(\alpha)$ to denote the set of all propositional variables and agent symbols in a formula $\alpha$.
 Similarly, given a multiset $\Gamma$ of formulas, $\mathsf{V}(\Gamma)$ and $\mathsf{Agt}(\Gamma)$ 
  denote the set of all propositional variables and agent symbols  in a formula in $\Gamma$. 
  

\begin{defn}
 A formula in the form of $D_G \alpha$ is called an {\it outmost $G$-boxed formula}, or simply  {\it outmost boxed formula}.
Let $H,G\in \mathsf{Grp}$, a formula  in the form of $D_H \beta$ for $H\subseteq G$
   is called a {\it $G$-sub formula}.
Furthermore, given a finite multiset $\Gamma$ of formulas, let $G,H\in \mathsf{Grp}$, $a\in \mathsf{Agt}$, we define:
$D_G \Gamma= \{D_G \alpha | \alpha\in \Gamma \}$,
 $\Gamma^{\natural_G}= \{ D_G \alpha |  D_G \alpha\in \Gamma \},$ 
 $\Gamma^{\flat_G} = \{    \alpha |  D_G \alpha\in \Gamma  \} $,
$\Gamma^{\natural_{\ni a}}= \{ D_G \alpha |  a\in G ~\text{and}~D_G \alpha\in \Gamma \}$, 
$\Gamma^{\flat_{\subseteq G}} = \{   \alpha |  D_H\alpha\in \Gamma ~\text{for some}~H\subseteq G  \} $.

\end{defn}





\begin{defn}
We define the set $\mathsf{Sub}(\alpha)$ of all subformulas of the formula $\alpha$ inductively as follows:
\begin{scriptsize}
\[
\begin{array}{rcl}
\mathsf{Sub}(p) & := & \{p\}, \\
\mathsf{Sub}(\bot) & := & \{\bot\}, \\
\mathsf{Sub}(D_G\alpha) & := & \mathsf{Sub}(\alpha) \cup \{D_G\alpha\}, \\
 \mathsf{Sub}(\neg\alpha) & := & \mathsf{Sub}(\alpha) \cup \{\neg\alpha\}, \\
\mathsf{Sub}(\alpha \circ \beta) & := & \mathsf{Sub}(\alpha) \cup \mathsf{Sub}(\beta) \cup \{\alpha \circ \beta\},
\end{array}
\]
\end{scriptsize}
where $\circ \in \{\land, \lor, \rightarrow\}$. Given a set $\Gamma$ of formulas,
$\mathsf{Sub}(\Gamma) := \bigcup \{ \mathsf{Sub}(\alpha) \mid \alpha \in \Gamma \}.$

\end{defn}

\begin{defn}
    The {\sl weight} of a $\mathcal{L}$-formula $\alpha$, noted as $\mathsf{wt}(\alpha)$ is inductively defined as: 
    \begin{scriptsize}
    \begin{center}
         $\mathsf{wt}(p)=\mathsf{wt}(\bot)=1$
         
    $\mathsf{wt}(\neg \alpha)=\mathsf{wt}(D_G \alpha)=\mathsf{wt}(\alpha)+1$
    
     $\mathsf{wt}(\alpha\wedge \beta)=\mathsf{wt}(\alpha\lor \beta)= \mathsf{wt}(\alpha\rightarrow \beta)=\mathsf{wt}(\alpha)+\mathsf{wt}(\beta)+1$
    \end{center}
\end{scriptsize}

    Given a multiset $\Gamma$ of formula, $\mathsf{wt}(\Gamma)$ denotes the sum of all $\mathsf{wt}(\alpha)$ for $\alpha\in \Gamma$.
   
\end{defn}





 
  






\begin{table}[htb]
\caption{Hilbert Systems}
\label{table:hilbert}
\begin{footnotesize}
\begin{tabular}{| l l l l |} 
\hline
\multicolumn{4} {|c|} { Hilbert system $\hkd$ } \\
\hline
 & All instantiations of propositional tautologies & &\\
 ($D_K$) & $D_G (\alpha \naraba \beta) \rightarrow (D_G \alpha \rightarrow D_G \beta)$   &
(Incl) & $D_G \alpha \rightarrow  D_H \alpha ~~ (G\subseteq H)$  \\

 (MP) &From $\alpha$ and $\alpha\rightarrow \beta$, infer $\beta$. &
 (Nec) & From $\alpha$, infer $D_G \alpha$. \\

\hline

\multicolumn{4}{| c |}{ Hilbert system $\hkdd$  } \\
\hline
\multicolumn{4}{|l| }{In addition to all axioms and rules of $\hkd$, we add  } \\
($D_D$) & $\neg D_{\{a\}} \bot$ &&\\
\hline
\multicolumn{4}{| c |}{Hilbert system $\hktd$  } \\
\hline
\multicolumn{4}{|l| }{In addition to all axioms and rules of $\hkd$, we add  } \\
($D_T$) & $ D_G  \alpha \rightarrow \alpha$&&\\
\hline
\end{tabular}
\end{footnotesize}
\end{table}
\noindent







\begin{defn}
     The Hilbert systems of $\hkd$, $\hkdd$, $\hktd$   are defined in Table \ref{table:hilbert}.\footnote{In the system of $\hkdd$, the axiom $(D_D)$ is restricted to a single agent, since the seriality for a single agent's binary relation is not always preserved under the operation of intersection (cf. \cite{agotnes20}).}
     Let $\mathbf{L}\in \{    
        \hkd, \hkdd, \hktd
     \}$. Given a set $\Gamma \cup \setof{\alpha}$ of formulas, when we write $\Gamma \vdash_{\mathbf{L}} \alpha$,
we  mean that $\alpha$ is derivable from $\Gamma$ in $\mathbf{L}$  (if the underlying Hilbert system is clear from the context, we simply write $\Gamma \vdash \alpha$). In particular, when $\Gamma$ is empty, we simply write $\vdash \alpha$ instead of $\emptyset \vdash \alpha$. 

\end{defn}

\section{Sequent Calculi and Structural Properties}
\label{sec:seq and struc}



Next, let us move to Gentzen system.
A  {\it sequent}, denoted by $\Gamma \ra \Delta$, is a pair of finite multisets of formulas. 
The multiset $\Gamma$ is the {\em antecedent} of $\Gamma \ra \Delta$,  
while $\Delta$ is the {\em succedent} of the sequent $\Gamma \ra \Delta$. 
A sequent $\Gamma\ra \Delta$  can be read as ``if all formulas in $\Gamma$ hold then some formulas in $\Delta$ hold.''

The logical rules and initial sequents in the following sequent calculi are the same with those in a system named as {\bf G3cp} \cite[p.49]{Negri2001}.
Modal rules here are developed on the basis of the rules in Murai and Sano \cite{murai20} (which established the Gentzen style rules for distributed knowledge) and Su \cite{su2026_awpl_forth} (which established the rules for UIP in multi-agent modal logic \footnote{For this one, those rules are based on the single modal rules from \cite{bilkova2007uniform} and \cite{hakli2012does}.}).
It is noted that, in the rule $(D_K)$, $n$ can be $0$.

\begin{table}[htb]
\caption{Sequent Calculi of $\gkd,\gkdd,\gktd$.}
\label{table:sequent calculi k,kd}

\begin{footnotesize}

\hrule
\begin{tabular}{ll} 
\multicolumn{2}{c}{ Sequent Calculus $\gkd$: } \\

{ \bf Initial Sequents}   & $\Gamma ,p \ra p, \Delta$\hspace{15pt} $\bot, \Gamma\ra \Delta$  \\ 
 \\







{\bf Logical Rules} & 

\AxiomC{$\Gamma \ra \Delta ,\alpha_1$}
\AxiomC{$\Gamma \ra \Delta ,\alpha_2$}
\RightLabel{\scriptsize $(R\wedge )$}
\BinaryInfC{$\Gamma \ra  \Delta ,\alpha_1\wedge \alpha_2$}
\DisplayProof

\AxiomC{$\alpha_1, \alpha_2,\Gamma \ra \Delta$}
\RightLabel{\scriptsize $(L\wedge )$}
\UnaryInfC{$\alpha_1\wedge \alpha_2,\Gamma \ra \Delta$}
\DisplayProof
\\

\,  &

\AxiomC{$\Gamma \ra \Delta,\alpha_1, \alpha_2$}
\RightLabel{\scriptsize $(R\lor )$}
\UnaryInfC{$\Gamma \ra \Delta,\alpha_1\lor \alpha_2$}
\DisplayProof

\AxiomC{$\alpha_1,\Gamma \ra \Delta$}
\AxiomC{$\alpha_2,\Gamma \ra \Delta$}
\RightLabel{\scriptsize $(L\lor )$}
\BinaryInfC{$\alpha_1\lor \alpha_2, \Gamma \ra \Delta$}
\DisplayProof

\\

\,  &

\AxiomC{$\alpha_1,\Gamma \ra \Delta, \alpha_2$}
\RightLabel{\scriptsize $(R\rightarrow )$}
\UnaryInfC{$\Gamma \ra \Delta,\alpha_1\rightarrow \alpha_2$}
\DisplayProof

\AxiomC{$\Gamma \ra \Delta, \alpha_1$}
\AxiomC{$\alpha_2,\Gamma \ra \Delta$}
\RightLabel{\scriptsize $(L\rightarrow )$}
\BinaryInfC{$\alpha_1\rightarrow \alpha_2, \Gamma \ra \Delta$}
\DisplayProof

\\

\,  &
\AxiomC{$\alpha, \Gamma \ra  \Delta$}
\RightLabel{\scriptsize $(R\neg )$}
\UnaryInfC{$\Gamma \ra \Delta,\neg \alpha$}
\DisplayProof

\AxiomC{$\Gamma \ra \Delta, \alpha$}
\RightLabel{\scriptsize $(L\neg )$}
\UnaryInfC{$\neg \alpha, \Gamma \ra  \Delta$}
\DisplayProof
\\ 
\\



{\bf Modal Rule} &

\AxiomC{$\alpha_1,\ldots,\alpha_n \ra \beta  $ }
\RightLabel{{\scriptsize $(D_{K })$}$\dagger$  for any $i$ $(0\leq i \leq n)$, $G_i\subseteq G$  }
\UnaryInfC{$\Sigma,   D_{G_1}\alpha_1,\ldots,D_{G_n}\alpha_n   \ra D_G \beta, \Omega$ }
\DisplayProof
\\



\multicolumn{2}{l}{ {\footnotesize $\dagger$: $\Sigma$ contains only propositional variables, $\bot$ or outmost-boxed formulas which are not $G$-sub formulas.  $\Omega$ contains only  }
   } \\
   
\multicolumn{2}{l}{ {\footnotesize  propositional variables, $\bot$ or any outmost-boxed formulas. }
 }\\

    
\hline

\multicolumn{2}{c}{ Sequent Calculus $\gkdd$
 } \\
 
\multicolumn{2}{c}{ Adding the following rules  to $\gkd$ 
 } \\

{\bf Modal Rule}  &

\AxiomC{$ \Gamma \ra $ }
\RightLabel{\scriptsize $(D_{D})$$\ddagger$}
\UnaryInfC{$\Sigma, D_{\{a\}}  \Gamma   \ra \Omega$}
\DisplayProof
\\


\multicolumn{2}{l}{ {\footnotesize $\ddagger$: $\Sigma$ contains only propositional variables, $\bot$ or  outmost-boxed formulas which are not $\{a\}$-boxed.  $\Omega$ contains only  }
   } \\
\multicolumn{2}{l}{ {\footnotesize  propositional variables, $\bot$ or outmost-boxed formulas. Also, $\Gamma \neq \emptyset$.}
 }\\

\hline
\multicolumn{2}{c}{ Sequent Calculus $\gktd$
 } \\
\multicolumn{2}{c}{ Adding the following rules  to $\gkd$ 
 } \\

{\bf Modal Rule}  &

\AxiomC{$ D_G \alpha, \alpha, \Gamma \ra \Delta $ }
\RightLabel{\scriptsize $(D_{T})$}
\UnaryInfC{$D_G \alpha,\Gamma   \ra \Delta$}
\DisplayProof

\\

\hline

\end{tabular}
\end{footnotesize}
\end{table}

\begin{defn}
    In $\gkd$, $\gkdd$ and $\gktd$, we say that the formulas (or multisets) not in $\Gamma$ and $\Delta$ are 
     {\it principal} in all rules except $(D_{K})$ and $(D_{D})$ . In the rule of  $(D_{K})$, the formulas 
     (or multisets) not in $\Sigma,\Omega$ are principal. In the rule of $(D_{D})$, the formulas in $\Gamma$ are defined
      as principal formulas. 
    A formula (or multiset) is called {\it context} in a rule if it is not principal. 
\end{defn}

In the rules of $(D_K)$ and $(D_D)$, the multisets $\Sigma$ and $\Omega$ denote the context in the derivation.
For a standard $\mathbf{G3}$-style sequent calculus, for example in \cite{hakli2012does}, we generally let the context 
contain arbitrary formulas. Here, we restrict $\Sigma$ and $\Omega$ to contain only propositional variables, constants and modal formulas to make sure that this rule can be well-fitted with the intended step in the definition of the interpolant formula. The price to pay is that the admissibility of weakening is no longer height-preserving.
 Also, we let $\Sigma$ contain modal formulas that  are totally distinguished from principal formulas, in order to, 
in the one hand, make the syntactic definition sort out the principal formulas correctly in the case (iii) 
of the main theorem \ref{thm:main theorem of gkn};
in the other hand, make weakening rules work well for an arbitrary modal formula.

\begin{defn}
\label{def:g1_deri}

Let $\mathbf{L}\in \{ \mathbf{K}_D , \mathbf{KD}_D, \mathbf{KT}_D \}$.
A derivation ${\mathcal D}$ in $ \mathsf{G}(\mathbf{L})$ is a finite tree generated by the rules of $ \mathsf{G}(\mathbf{L})$ from the initial sequents of $ \mathsf{G}(\mathbf{L})$. We say that the {\it end sequent} of ${\mathcal D}$ is the sequent in the root node of ${\mathcal D}$. The {\it height} $n$ of a derivation is the maximum length of the branches in the derivation from the end sequent to an initial sequent.
A sequent $\Gamma \ra \Delta$ is {\it derivable} in $ \mathsf{G}(\mathbf{L})$ (notation: $ \mathsf{G}(\mathbf{L}) \vdash \Gamma \ra \Delta$) if it has a derivation ${\mathcal D}$ in  $ \mathsf{G}(\mathbf{L})$ whose end sequent is $\Gamma \ra \Delta$. Notation  $ \mathsf{G}(\mathbf{L}) \vdash_{n} \Gamma \ra \Delta$ stands for the height of the derivation of that sequent.


\end{defn}

\begin{defn} 
\label{def:g1_context_principal}
Let $\mathbf{L}\in \{ \mathbf{K}_D , \mathbf{KD}_D, \mathbf{KT}_D \}$  and $ \mathsf{G}(\mathbf{L})$ be one of
the systems of Table \ref{table:sequent calculi k,kd}. 
The sequent calculus $ \mathsf{G}^{c}(\mathbf{L})$  is obtained from adding the following cut rule into 
  $ \mathsf{G}(\mathbf{L})$:

\begin{center}
  \begin{scriptsize}
 \AxiomC{$\Gamma \ra  \Delta, \lambda$}
\AxiomC{$\lambda,\Gamma ^{\prime} \ra \Delta ^{\prime} $}
\RightLabel{\scriptsize $(Cut)$}
\BinaryInfC{$\Gamma,\Gamma^{\prime} \ra \Delta, \Delta  ^{\prime}. $}
\DisplayProof
  \end{scriptsize}
\end{center}

\end{defn}

When we consider the derivability of a sequent,
we will try to construct a proof-tree from it upward.
If we are able to find a tree with every branch is in the form of the initial sequents, we say that this sequent is provable.
This procedure is called,  the {\it backward
proof-search} algorithm. (Check Ono\cite{Ono1998} for more discussions.)



\begin{defn}
We define a well-ordered relation of sequent:
$(\Gamma\ra \Delta)\prec (\Gamma^\prime\ra \Delta^\prime) $ if and only if 
$\mathsf{wt}(\Gamma,\Delta)   <
\mathsf{wt}(\Gamma^\prime,\Delta^\prime)$     
\end{defn}


By observing the weight of premises and conclusions of rules, we have:

\begin{prop}
\label{prop:termination of proof search in GKn}
   An arbitrary  backward proof-search in $\gkd$ and $\gkdd$ always terminates.
\end{prop}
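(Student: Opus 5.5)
The plan is to show that every rule of $\gkd$ and $\gkdd$, read upward from conclusion to premise, strictly decreases the weight $\mathsf{wt}(\Gamma,\Delta)$ of the sequent. Since $\prec$ is induced by weights, which are natural numbers, it is well-founded. Every backward proof-search tree is finitely branching, because each rule has at most two premises. A backward proof-search therefore builds a finitely branching tree in which every branch is a strictly $\prec$-decreasing chain. Such chains are finite, so by K\"onig's lemma the whole search tree is finite, i.e.\ the search terminates.

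The core of the argument is a rule-by-rule comparison of weights.

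\textbf{Logical rules.} Each premise contains the immediate subformulas of the principal formula in place of that formula, with the context unchanged. For $(L\wedge)$, $(R\vee)$ and $(R\rightarrow)$ the premise weight is $\mathsf{wt}(\alpha_1)+\mathsf{wt}(\alpha_2)$ instead of $\mathsf{wt}(\alpha_1)+\mathsf{wt}(\alpha_2)+1$. For the binary rules each premise carries only one of $\alpha_1,\alpha_2$, so its weight is smaller still. For $(L\neg)$ and $(R\neg)$ the weight drops by exactly $1$.

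\textbf{Modal rule $(D_K)$.} The premise $\alpha_1,\ldots,\alpha_n\ra\beta$ has weight $\sum_i \mathsf{wt}(\alpha_i)+\mathsf{wt}(\beta)$. The conclusion has weight at least $\sum_i(\mathsf{wt}(\alpha_i)+1)+\mathsf{wt}(\beta)+1$ plus $\mathsf{wt}(\Sigma,\Omega)\geq 0$. The drop is at least $n+1\geq 1$, and this holds even when $n=0$.

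\textbf{Modal rule $(D_D)$.} The premise $\Gamma\ra$ has weight $\mathsf{wt}(\Gamma)$. The conclusion contains $D_{\{a\}}\Gamma$, of weight $\mathsf{wt}(\Gamma)+|\Gamma|$. The side condition $\Gamma\neq\emptyset$ guarantees a strict decrease.

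Finally, I would note that the initial sequents are leaves and that only finitely many rule instances apply backward to a given sequent. For $(D_K)$ and $(D_D)$ the choices of principal $G$, of which boxed formulas to select, and of how to split off $\Sigma,\Omega$ are all finite, since the sequent is a finite multiset. Each node thus has only finitely many possible next steps, which completes the termination argument.

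The main point to watch is that none of these rules duplicates a principal formula or otherwise increases the weight. This is exactly why $\gktd$ is excluded: its rule $(D_T)$ keeps $D_G\alpha$ in the premise, so the weight grows. For the two systems in question there is no such rule, and the only delicate cases are $n=0$ in $(D_K)$ and the side condition $\Gamma\neq\emptyset$ in $(D_D)$, both of which were handled above.
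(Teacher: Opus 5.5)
Your proposal is correct and takes the same approach as the paper. The paper's whole justification is that every rule of $\gkd$ and $\gkdd$ strictly lowers $\mathsf{wt}$ from conclusion to premise, so the weight-induced order $\prec$ is well-founded along every search branch; you spell this out rule by rule, rightly singling out the $n=0$ instance of $(D_K)$ and the side condition $\Gamma\neq\emptyset$ in $(D_D)$ as the points where strictness must be checked, and you add the finite-branching step that the paper leaves implicit.
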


In the next part, we will see that a backward proof-search in $\gktd $ does not always terminate. 
This fact requires us to provide
another sequent for $\gktd$ with loop-preventing mechanism.

\begin{defn}
    We say a rule is {\it admissible}, if for an instance of the rule, all premises are derivable, then there is derivation of its conclusion. 
    We say a rule is {\it height-preserving admissible} if for an instance of the rule, all premises are derivable with the greatest height $n$, then there is derivation of its conclusion with the height not greater than $n$. 
   A rule is   
    {\it height-preserving invertible} if for an instance of the rule, if the conclusion has a derivation with the height $n$, then each premise has a derivation with height not greater than $n$. Similarly, we say a rule is   
    {\it invertible} if we drop the above condition of height.
    
\end{defn}


For the details of the following proof of structural properties, please check \cite{Negri2001,troelstra2000basic,Kashima2009}.

\begin{prop}
\label{prop:weakening of k,d,t}
Let $\mathbf{L}\in \{ \mathbf{K}_D , \mathbf{KD}_D, \mathbf{KT}_D \}$. The weakening rules are admissible in $\mathsf{G}(\mathbf{L})$.

\begin{center}
    
 \scalebox{0.8}{
\AxiomC{$\Gamma \ra \Delta$}
\RightLabel{\scriptsize $(RW)$}
\UnaryInfC{$\Gamma \ra \Delta , \lambda$}
\DisplayProof
\AxiomC{$\Gamma \ra \Delta$}
\RightLabel{\scriptsize $(LW)$}
\UnaryInfC{$\lambda,\Gamma \ra \Delta$}
\DisplayProof}
\end{center}
\end{prop}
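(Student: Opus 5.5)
We argue by induction on the height $n$ of a derivation of $\Gamma \ra \Delta$, treating $(LW)$ and $(RW)$ together. Since the contexts of $(D_K)$ and $(D_D)$ are restricted, the weakened formula $\lambda$ cannot always be absorbed into the context of the last modal rule. So the induction will not preserve height, and at the modal step we will instead decompose $\lambda$ by a secondary induction on $\mathsf{wt}(\lambda)$. The statement to prove is: for every $\lambda$, if $\mathsf{G}(\mathbf{L}) \vdash_n \Gamma\ra\Delta$ then both $\lambda,\Gamma\ra\Delta$ and $\Gamma\ra\Delta,\lambda$ are derivable. We use lexicographic induction on $(\mathsf{wt}(\lambda), n)$.

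\emph{Initial sequents and propositional rules.} If $\Gamma\ra\Delta$ is an initial sequent ($\Gamma',p\ra p,\Delta'$ or $\bot,\Gamma'\ra\Delta'$), then adding $\lambda$ on either side again gives an initial sequent. If the last rule is a logical rule or $(D_T)$, the contexts $\Gamma,\Delta$ are arbitrary. We apply the induction hypothesis (same $\lambda$, smaller height) to each premise and then reapply the rule.

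\emph{The last rule is $(D_K)$ or $(D_D)$.} This is the main obstacle. We split on the shape of $\lambda$.
\begin{enumerate}
\item If $\lambda$ is a propositional variable, $\bot$, or an outmost-boxed formula placed in the succedent, it is allowed in $\Omega$. We simply add it to the context of the same rule instance; the premise is unchanged.
\item If $\lambda$ is placed in the antecedent and is a variable, $\bot$, or an outmost-boxed formula $D_H\gamma$ that is not a $G$-sub formula (respectively, not $\{a\}$-boxed for $(D_D)$), it is allowed in $\Sigma$, and we add it there.
\item If $\lambda = D_H\gamma$ is in the antecedent with $H\subseteq G$ (respectively, $H=\{a\}$ for $(D_D)$), then $\lambda$ must be principal. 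The premise is $\alpha_1,\ldots,\alpha_n\ra\beta$ (respectively, $\Gamma'\ra$). We weaken this premise by $\gamma$ on the left, using the induction hypothesis with $\mathsf{wt}(\gamma)<\mathsf{wt}(\lambda)$ (or the height induction). We then reapply $(D_K)$ with $D_H\gamma$ as an extra principal formula. The side condition $H\subseteq G$ holds, and for $(D_D)$ the principal multiset remains nonempty.
\item If $\lambda$ is compound but not boxed, namely $\neg\gamma$, $\gamma_1\wedge\gamma_2$, $\gamma_1\vee\gamma_2$ or $\gamma_1\rightarrow\gamma_2$, we build its derivation from weakenings by its immediate subformulas, which have smaller weight. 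For example, for $\Gamma\ra\Delta,\gamma_1\wedge\gamma_2$ we obtain $\Gamma\ra\Delta,\gamma_i$ by the induction hypothesis on $\mathsf{wt}$ for each $i$ and then apply $(R\wedge)$. For $\gamma_1\wedge\gamma_2,\Gamma\ra\Delta$ we weaken twice with smaller-weight formulas and apply $(L\wedge)$. For $(L\rightarrow)$ we need $\Gamma\ra\Delta,\gamma_1$ and $\gamma_2,\Gamma\ra\Delta$, both given by the hypothesis. The connectives $\neg$ and $\vee$ are handled the same way.
\end{enumerate}
In case 4 the weakenings are applied to the original derivation of height $n$ but with lighter formulas, so the lexicographic measure decreases. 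In case 3 the measure decreases as well.

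Finally, $(D_T)$ in $\mathsf{G}(\mathbf{KT}_D)$ has unrestricted context, so it falls under the second step. This completes the induction. The delicate point is case 3: we must check that the restriction on $\Sigma$ forces every added $G$-sub formula (or $\{a\}$-boxed formula) to be principal, and that the rule still applies after it is made principal.
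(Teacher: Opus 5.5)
Your proposal is correct and is essentially the paper's argument: a double (lexicographic) induction on $\mathsf{wt}(\lambda)$ and derivation height, with the case analysis the paper leaves implicit spelled out. In that analysis, $G$-sub (resp.\ $\{a\}$-boxed) formulas are made principal in $(D_K)$/$(D_D)$, and compound non-boxed $\lambda$ is decomposed into lighter weakenings. One small precision concerns case 4, e.g.\ $\gamma_1\wedge\gamma_2$ on the left: the second weakening is applied to the derivation produced by the first, not to the original height-$n$ one. This is still justified because weight is the primary component of your measure, so the induction hypothesis covers lighter formulas at any height.
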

\begin{proof}
    We proceed by double induction on weight of the formula  $\lambda$ and height of the derivation. 
    It is noted that the weakening rules are not height-preserving.\qedhere

\end{proof}

\begin{prop}
\label{prop:invertibility of all logical rules in Gkn}

       All logical rules in $\gkd$  except $(D_{K})$ are height-preserving invertible. 
       All logical rules in $\gkdd$ except $(D_{K})$ and $(D_{D})$ are height-preserving invertible. 
          All logical rules in $\gktd$ except $(D_{K})$ and $(D_{T})$ are height-preserving invertible. 

\end{prop}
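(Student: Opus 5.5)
The plan is the standard G3cp argument (as in Negri and von Plato), checking that the restricted modal rules do not break it. We prove, for each logical rule (the propositional rules $(L\wedge),(R\wedge),(L\lor),(R\lor),(L\rightarrow),(R\rightarrow),(L\neg),(R\neg)$), that if its conclusion is derivable with height $n$ then each premise is derivable with height at most $n$. The proof is by induction on $n$, taking a fixed derivation $\mathcal{D}$ of the conclusion $\Gamma' \ra \Delta'$ in which the relevant compound formula (say $\alpha_1\wedge\alpha_2$ in the antecedent) occurs. Since the statement explicitly excludes the modal rules, we never need to invert $(D_K)$, $(D_D)$ or $(D_T)$. We only need to permute the inversions past them.

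\textbf{Base case.} $\mathcal{D}$ is an initial sequent. Because initial sequents are restricted to atoms ($\Gamma,p\ra p,\Delta$) and $\bot,\Gamma\ra\Delta$, the compound formula is never principal there. Removing it and adding its immediate components gives another initial sequent of height $0$.

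\textbf{Inductive step.} We split on the last rule of $\mathcal{D}$.
\begin{enumerate}
\item The compound formula is principal. Then the premise(s) of that last step are exactly the desired sequents, with smaller height.
\item The last rule is a logical rule or $(D_T)$ in which the compound formula is context. We apply the induction hypothesis to the premise(s) and reapply the same rule. For $(D_T)$ this works because the context $\Gamma,\Delta$ is arbitrary, and the principal $D_G\alpha$ is untouched by the inversion.
\item The last rule is $(D_K)$ or $(D_D)$. The compound formula, e.g.\ $\alpha_1\wedge\alpha_2$, $\neg\alpha$ or $\alpha_1\rightarrow\alpha_2$, is not a propositional variable, not $\bot$ and not outmost-boxed. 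So it cannot lie in $\Sigma$ or $\Omega$. It also cannot be principal, since the principal formulas of these rules are boxed formulas $D_{G_i}\alpha_i$, $D_G\beta$ or $D_{\{a\}}\Gamma$. Hence this case is vacuous.
\end{enumerate}

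The main obstacle is case (iii). The restriction on $\Sigma,\Omega$ was introduced precisely to make $(D_K)$ fit the interpolant construction, and it costs height-preserving weakening. One might worry that, when inverting $(L\rightarrow)$, the premise $\Gamma\ra\Delta,\alpha_1$ would require adding $\alpha_1$ into the context of a modal rule, where it need not be allowed. But by the observation above, a sequent containing a non-boxed compound formula can never be the conclusion of $(D_K)$ or $(D_D)$. Thus no modal step ever has to be modified, and no weakening is invoked, so heights are preserved.

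We must also check that in case (ii) the case distinction is exhaustive when the compound formula has several occurrences. We handle this by treating each occurrence separately in the multiset. Rules with two premises are handled by taking the maximum height, as usual.
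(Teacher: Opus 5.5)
Your proof is correct and follows the argument the paper relies on. The paper states this proposition without proof and defers to the standard height-induction for $\mathbf{G3cp}$ in Negri--von Plato and Troelstra--Schwichtenberg. Your extra observation is what makes that standard argument go through here: every conclusion of $(D_K)$ or $(D_D)$ is critical, so a non-boxed compound formula never occurs there. Hence the induction never passes through a modal step with restricted context, and never needs weakening, which is not height-preserving in these calculi.
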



\begin{prop}
    \label{prop:hp contraction in k,d,t}
    Let $\mathbf{L}\in \{ \mathbf{K}_D , \mathbf{KD}_D, \mathbf{KT}_D \}$  
The contraction rules are height-preserving admissible in $\mathsf{G}(\mathbf{L})$.
    
  \begin{center} 
    \scalebox{0.8}{
 \AxiomC{$\Gamma \ra \Delta , \lambda,\lambda$}
\RightLabel{\scriptsize $(RC)$}
\UnaryInfC{$\Gamma \ra \Delta ,\lambda $}
\DisplayProof
\AxiomC{$\lambda,\lambda,\Gamma \ra \Delta$}
\RightLabel{\scriptsize $(LC)$}
\UnaryInfC{$\lambda,\Gamma \ra \Delta$}
\DisplayProof}
    \end{center}
\end{prop}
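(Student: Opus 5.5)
We prove height-preserving admissibility of $(LC)$ and $(RC)$ simultaneously, by induction on the height $n$ of the derivation of the premise, with a subsidiary induction on $\mathsf{wt}(\lambda)$ where a contraction has to be pushed onto immediate subformulas. The tool for that is Proposition \ref{prop:invertibility of all logical rules in Gkn}: every logical rule except the modal ones is height-preserving invertible.

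For $n=0$ the premise is an initial sequent. Deleting one copy of $\lambda$ leaves an initial sequent: if $\lambda$ is the principal $p$ or $\bot$, another copy of it remains. For the inductive step we look at the last rule $R$ in the derivation of, say, $\lambda,\lambda,\Gamma\ra\Delta$.

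\emph{Case 1: no copy of $\lambda$ is principal in $R$.} If $R$ is a propositional rule or $(D_T)$, apply the induction hypothesis to the premises, where both copies still occur as context, and reapply $R$. If $R$ is $(D_K)$ or $(D_D)$, both copies of $\lambda$ lie in $\Sigma$ or $\Omega$. Deleting one copy of a formula from $\Sigma$ or $\Omega$ preserves the side conditions $\dagger$ and $\ddagger$, since these only restrict the shape of the formulas present. So the same premise yields the contracted conclusion, with the same height.

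\emph{Case 2: a copy of $\lambda$ is principal in a propositional rule.} Take $(L\wedge)$ with $\lambda=\alpha_1\wedge\alpha_2$ as an example. The premise is $\alpha_1,\alpha_2,\alpha_1\wedge\alpha_2,\Gamma\ra\Delta$ with height $n-1$. By height-preserving invertibility of $(L\wedge)$ we obtain $\alpha_1,\alpha_2,\alpha_1,\alpha_2,\Gamma\ra\Delta$ with height $\le n-1$. Two applications of the induction hypothesis give $\alpha_1,\alpha_2,\Gamma\ra\Delta$, and $(L\wedge)$ then gives the goal with height $\le n$. The two-premise rules are handled the same way, inverting in each premise, and so are the negation and implication rules.

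\emph{Case 3: $\lambda$ is principal in a modal rule.}
\begin{itemize}
\item For $(D_T)$ with $\lambda=D_G\alpha$, the premise is $D_G\alpha,\alpha,D_G\alpha,\Gamma\ra\Delta$. The induction hypothesis contracts the two copies of $D_G\alpha$, and $(D_T)$ reapplies. No inversion is needed, because $(D_T)$ keeps its principal formula.
\item For $(D_K)$ with $\lambda=D_G\beta$ on the right, the other copy of $D_G\beta$ lies in $\Omega$, and dropping it is immediate.
\item For $(D_K)$ with $\lambda=D_{G_i}\alpha_i$ on the left, the other copy cannot lie in $\Sigma$, because $\Sigma$ contains no $G$-subformulas. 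So both copies are principal, and the premise has the form $\alpha_i,\alpha_i,\overrightarrow{\alpha}\ra\beta$ with height $n-1$. The induction hypothesis contracts it to $\alpha_i,\overrightarrow{\alpha}\ra\beta$, and $(D_K)$ reapplies.
\item For $(D_D)$, the $\{a\}$-boxed formulas cannot lie in $\Sigma$, so both copies are principal. We contract $\Gamma\ra$ by the induction hypothesis; the result is still nonempty, so $(D_D)$ reapplies.
\end{itemize}

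The main obstacle is this modal case. It is exactly here that the restricted contexts $\Sigma,\Omega$ matter, and we must check that the side conditions force both copies of a left boxed formula to be principal. We also need the multiset form of the premise of $(D_K)$, so that duplicated $\alpha_i$ there are contracted by the induction hypothesis rather than by an inadmissible weakening step.
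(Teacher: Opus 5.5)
Your proof is correct and follows essentially the same route as the paper. Both argue by simultaneous induction on height and weight, use height-preserving invertibility for the propositional principal cases, and split the $(D_K)$ case by whether the contracted boxed formula is context or principal, the latter relying on the restriction on $\Sigma$. Both also exploit the repeated principal formula in the premise of $(D_T)$. You go slightly further than the paper by explicitly and correctly treating the $(D_D)$ case and the right-principal $(D_K)$ case, which the paper leaves implicit.
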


\begin{proof}
    The proof is done  simultaneously by induction on the weight of contraction formulas and the height of derivations of the premises.
        When the contraction formula $\lambda$ is principal in the end of the derivation,
        Proposition \ref{prop:invertibility of all logical rules in Gkn} is needed for the cases of logical rules.
    When the contraction formula $\lambda$ is in the form of $D_H \alpha$ and the last rule in the derivation is 
      $(D_{K})$, we first consider the contraction formula is not principal:
 \begin{center}
    \scalebox{0.8}{
\AxiomC{$\gamma_1,\ldots, \gamma_m\ra \beta$}
\RightLabel{\scriptsize $(D_{K})$}
\UnaryInfC{$\Sigma,  D_H\alpha,D_H\alpha,    D_{G_1}\gamma_1,\ldots,D_{G_m}\gamma_m   \ra D_G \beta, \Omega$}
\DisplayProof
$~\rightsquigarrow~$
\AxiomC{$\gamma_1,\ldots, \gamma_m\ra \beta$}
\RightLabel{\scriptsize $(D_{K})$}
\UnaryInfC{$\Sigma,  D_H\alpha,     D_{G_1}\gamma_1,\ldots,D_{G_m}\gamma_m   \ra D_G \beta, \Omega$}
\DisplayProof }
\end{center}
where $ G_1 \cup\cdots\cup G_m\subseteq G$ and $H\nsubseteq G$. Next, we consider the case in which $\lambda$ is principal:

  \begin{center}
    \scalebox{0.8}{
\AxiomC{$\alpha,\alpha,\gamma_1,\ldots, \gamma_m\ra \beta$}
\RightLabel{\scriptsize $(D_{K})$}
\UnaryInfC{$\Sigma,  D_H\alpha,D_H\alpha,   D_{G_1}\gamma_1,\ldots,D_{G_m}\gamma_m   \ra D_G \beta, \Omega$}
\DisplayProof
$~\rightsquigarrow~$
\AxiomC{I.H.}
\noLine
\UnaryInfC{$\alpha, \gamma_1,\ldots, \gamma_m\ra \beta$}
\RightLabel{\scriptsize $(D_{K})$}
\UnaryInfC{$\Sigma,  D_H\alpha,    D_{G_1}\gamma_1,\ldots,D_{G_m}\gamma_m   \ra D_G \beta, \Omega$}
\DisplayProof }
\end{center}
where $H\cup G_1 \cup\cdots\cup G_m\subseteq G.$
In the right part, we apply the induction hypothesis to the premise of the assumption, 
since the weight of contraction formula is decreased.
It is noted that the repetition of a boxed formula in the premise of $(D_{T})$ is needed to prove the height-preserving admissibility of
contraction rules
\cite[Chapter 9.1]{troelstra2000basic}.
\end{proof}

Next, the admissibility of cut rule is shown.
Since the weakening rules do not satisfy the height-preserving admissibility, 
we cannot directly apply the method for a standard G3-style sequent calculi for example in \cite[Theorem 4.15]{troelstra2000basic}. 
Here, the proof is done by following a similar argument in \cite[Theorem 3.23]{Negri2001}. Also, it is noted that the cut rule is in the form without shared context. 

\begin{prop}
\label{prop:cut admissible of k,d,t}
 
Let $\mathbf{L}\in \{ \mathbf{K}_D , \mathbf{KD}_D, \mathbf{KT}_D \}$    The cut rule is admissible in $\mathsf{G}(\mathbf{L})$.
    


    
\end{prop}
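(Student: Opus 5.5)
The plan is to prove cut elimination syntactically, in the style of \cite[Theorem 3.23]{Negri2001}. The measure is the lexicographic pair consisting of the weight of the cut formula $\lambda$ and the cut height, i.e. the sum of the heights of the derivations of the two premises. It suffices to remove one topmost cut, whose premises are cut-free derivations in $\mathsf{G}(\mathbf{L})$. Since weakening is only admissible and not height-preserving (Proposition \ref{prop:weakening of k,d,t}), I will never assume that weakened derivations keep their height. Whenever a weakening is needed I will apply it only after the inductive hypothesis has already produced a cut-free derivation, so heights never enter the argument again after that point. Height-preserving contraction (Proposition \ref{prop:hp contraction in k,d,t}) and height-preserving invertibility (Proposition \ref{prop:invertibility of all logical rules in Gkn}) will handle the context-free form of the cut.

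The case analysis will go as follows.

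\textbf{Case 1: a premise is an initial sequent.} If $\lambda$ is a context formula there, the conclusion is itself initial. If $\lambda=p$ is principal, the conclusion follows from the other premise by weakening. If $\bot$ is principal on the left, the conclusion is again initial.

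\textbf{Case 2: $\lambda$ is not principal in one premise.}
\begin{itemize}
\item If the last rule there is a propositional rule or $(D_T)$, I permute the cut upward, lowering the cut height. Contractions restore the duplicated contexts of the two-premise rules.
\item If the last rule is $(D_K)$ or $(D_D)$ and $\lambda$ lies in $\Sigma$ or $\Omega$, I discard the other premise entirely. I apply the same modal rule with the enlarged context, which is legitimate when the extra formulas are atoms, $\bot$, or boxed formulas of the right shape. Otherwise I obtain the conclusion by the admissible weakening of Proposition \ref{prop:weakening of k,d,t}, which applies to arbitrary formulas.
\end{itemize}

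\textbf{Case 3: $\lambda$ is principal in both premises.}
\begin{itemize}
\item For the propositional connectives, the standard reductions replace the cut by cuts on the immediate subformulas.
\item For $\lambda=D_H\alpha$ principal on the left by $(D_K)$ with premise $\overrightarrow{\alpha_n}\ra\alpha$ (and $G_i\subseteq H$), and principal on the right by $(D_K)$ with succedent $D_G\beta$ and $H\subseteq G$, I cut $\alpha$ between $\overrightarrow{\alpha_n}\ra\alpha$ and $\alpha,\overrightarrow{\gamma}\ra\beta$. This lowers the weight. Because $G_i\subseteq H\subseteq G$, one application of $(D_K)$ then yields the conclusion, with the remaining contexts absorbed into $\Sigma,\Omega$ or added by weakening.
\item When the right rule is $(D_D)$ with $H=\{a\}$, the inclusion $G_i\subseteq\{a\}$ forces $G_i=\{a\}$. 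I cut $\alpha$ against $\alpha,\Gamma\ra$ and then apply $(D_D)$ to $\overrightarrow{\alpha_n},\Gamma\ra$. If $n=0$ and $\Gamma$ becomes empty, then $\Gamma$ was empty apart from $\alpha$. In that case I instead derive $\ra\alpha$ and $\alpha\ra$, and their cut gives the empty sequent $\ra$, which is impossible in a consistent calculus. This can be checked separately by soundness, or else handled by showing that the original left premise $\ra\alpha$ makes the sequent derivable by weakening.
\item When the right rule is $(D_T)$ with $D_H\alpha$ principal, the premise is $D_H\alpha,\alpha,\Gamma'\ra\Delta'$. First I cut $D_H\alpha$ with this premise at lower height. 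Then I cut $\alpha$, which has lower weight, against the left premise. To do this I need $\Gamma\ra\Delta,\alpha$ from the left derivation. I obtain it from $\overrightarrow{\alpha_n}\ra\alpha$ by $(D_T)$ on each $D_{G_i}\alpha_i$ followed by weakening.
\end{itemize}

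I expect the main obstacle to be the modal cases. The side conditions on $\Sigma$ must be respected: a context formula $D_K\gamma$ on the left that is a $G$-subformula cannot simply be kept in $\Sigma$. In those cases I will rely on the fact that weakening is admissible for arbitrary formulas, and add such formulas afterwards rather than through the rule. The $(D_D)$ case with an empty $\Gamma$ also needs the careful treatment described above.
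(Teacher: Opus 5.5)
Your proposal is correct and follows essentially the paper's route: a Negri--von Plato style syntactic elimination by induction on the weight of the cut formula and the height. The principal modal cases $(D_K)/(D_K)$, $(D_K)/(D_D)$ and $(D_K)/(D_T)$ are each reduced to a lighter cut on $\alpha$ followed by the modal rule. For $(D_T)$ the paper cuts directly against $\overrightarrow{\gamma}\ra\lambda$ and applies $(D_T)$ afterwards, which avoids your weakening and contraction, but the two reductions are equivalent.

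You are more careful than the paper about $G$-sub formulas in the left context and about the vacuous $(D_D)$ subcase with empty antecedent. There is one small slip in Case 1: a cut on $\bot$ against a right initial sequent $\bot,\Gamma'\ra\Delta'$ does not give an initial conclusion, so it should instead be pushed up the left premise as in Case 2.
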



\begin{center}

  \begin{scriptsize}
\begin{prooftree}
\AxiomC{$\vdots~\mathcal {D}_1$}
\RightLabel{$\mathtt{rule} (\mathcal{D}_1)$} 
\UnaryInfC{$\Gamma \ra  \Delta ,\lambda$}
\AxiomC{$\vdots~\mathcal {D}_2$}
\RightLabel{$\mathtt{rule} (\mathcal{D}_2)$} 
\UnaryInfC{$\lambda , \Gamma ^{\prime}  \ra \Delta^{\prime} $}
\RightLabel{$(cut)$} 
\BinaryInfC{$\Gamma,\Gamma ^{\prime}   \ra \Delta, \Delta^{\prime}$}
\end{prooftree}
  \end{scriptsize}
\end{center}

\begin{proof}
It is shown that if a $(cut)$  appears only in the end of a derivation $\mathcal{D}$, then there is a derivation in which no $(cut)$ appears and  ends with the same conclusion as $\mathcal{D}$. 
This can be proved by double induction on the {\it complexity} (the number of logical connectives of the cut formulas of $(cut)$) and the {\it height}, i.e., the number of all the sequents in the derivation. 
The argument is divided into the following three cases: 
\begin{itemize}
\item[(1)] ${\mathcal D_1}$ or $ \mathcal{D}_2$ is an initial sequent.  
\item[(2)] $\mathtt{rule} (\mathcal{D}_1)$ or $\mathtt{rule} (\mathcal{D}_2)$ is a logical or modal rule in which the cut formula is not principal. 
\item[(3)] $\mathtt{rule} (\mathcal{D}_1)$ and $\mathtt{rule} (\mathcal{D}_2)$ are logical or modal rules, and the cut formulas are principal in both rules.
\end{itemize}

The proof proceeds basically  following an ordinary argument for the admissibility of cut 
for classical propositional logic \cite[Theorem 3.23]{Negri2001}.

We only show some examples in the case (3). At first, we consider the case in which $\mathtt{rule} (\mathcal{D}_1)$ is $(D_{K})$, 
$\mathtt{rule} (\mathcal{D}_2)$ is $(D_{D})$ and cut formulas are principal in both rules.

\begin{center}
      \begin{scriptsize}
\noLine
\AxiomC{$\mathcal {D}_1$}
\UnaryInfC{$\Gamma\ra \lambda$}
\RightLabel{\scriptsize $(D_{K})$}
\UnaryInfC{$\Sigma,D_{\{a\}}\Gamma \ra D_{\{a\}}\lambda,\Omega$}
\noLine
\AxiomC{$\mathcal {D}_2$}
\UnaryInfC{$\lambda ,\Gamma' \ra   $}
\RightLabel{\scriptsize $(D_{D})$}
\UnaryInfC{$D_{\{a\}}\lambda,\Sigma^{\prime},   D_{\{a\}} \Gamma^{\prime}   \ra   \Omega^{\prime}$}
\RightLabel{\scriptsize $(cut)$}
\BinaryInfC{$\Sigma, \Sigma^{\prime},  D_{\{a\}}\Gamma , D_{\{a\}}\Gamma^{\prime}  \ra \Omega,\Omega^{\prime}$}
\DisplayProof
      \end{scriptsize}
\end{center}

Then we can transform the derivation into the following:

\begin{center}
      \begin{scriptsize}
\noLine
\AxiomC{$\mathcal {D}_1$}
\UnaryInfC{$\Gamma \ra \lambda$} 
\noLine
\AxiomC{$\mathcal {D}_2$}
\UnaryInfC{$\lambda,\Gamma^{\prime} \ra  $} 
\RightLabel{\scriptsize $(cut)$}
\BinaryInfC{$\Gamma,\Gamma^{\prime} \ra$}
\RightLabel{\scriptsize $(D_{D})$}
\UnaryInfC{$\Sigma, \Sigma^{\prime},  D_{\{a\}}\Gamma , D_{\{a\}} \Gamma^{\prime}  \ra \Omega,\Omega^{\prime}$}
\DisplayProof
      \end{scriptsize}
\end{center}
In the transformed derivation, the application of $(cut)$ can be eliminated owing to the lower complexity of the cut formula.

In the case (3), when $\mathtt{rule} (\mathcal{D}_1)$ and $\mathtt{rule} (\mathcal{D}_2)$ are both   $(D_{K})$, also cut formulas are principal in both rules, the proof is similar to the above case.
If $\mathtt{rule} (\mathcal{D}_1)$ is $(D_{K})$, $\mathtt{rule} (\mathcal{D}_2)$ is $(D_{T})$ and cut formulas are principal in both rules.


\begin{center}
      \begin{scriptsize}
\noLine
\AxiomC{$\mathcal {D}_1$}
\UnaryInfC{$\gamma_1,\ldots,\gamma_m \ra \lambda$}
\RightLabel{\scriptsize $(D_{K})$}
\UnaryInfC{$\Sigma,D_{G_1}\gamma_1,\ldots,D_{G_m}\gamma_m \ra D_{G}\lambda,\Omega$}
\noLine
\AxiomC{$\mathcal {D}_2$}
\UnaryInfC{$D_G \lambda, \lambda,\Gamma^{\prime} \ra \Delta $}
\RightLabel{\scriptsize $(D_{T})$}
\UnaryInfC{$D_G \lambda,  \Gamma^{\prime} \ra \Delta $}
\RightLabel{\scriptsize $(cut)$}
\BinaryInfC{$\Sigma,   D_{G_1}\gamma_1,\ldots,D_{G_m}\gamma_m ,  \Gamma^{\prime}  \ra \Delta,\Omega $}
\DisplayProof
      \end{scriptsize}
\end{center}
We can transform the derivation into the following:
\begin{center}
      \begin{scriptsize}
\noLine
\AxiomC{$\mathcal {D}_1$}
\UnaryInfC{$\gamma_1,\ldots,\gamma_m \ra \lambda$}
\noLine
\AxiomC{$\mathcal {D}_1$}
\UnaryInfC{$\Sigma,D_{G_1}\gamma_1,\ldots,D_{G_m}\gamma_m \ra D_{G}\lambda,\Omega$}
\noLine
\AxiomC{$\mathcal {D}_2$}
\UnaryInfC{$D_G \lambda, \lambda,\Gamma^{\prime} \ra \Delta $}
\RightLabel{\scriptsize $(cut)$}
\BinaryInfC{$\Sigma,D_{G_1}\gamma_1,\ldots,D_{G_m}\gamma_m, \lambda,\Gamma^{\prime} \ra \Omega,\Delta $}
\RightLabel{\scriptsize $(cut)$}
\BinaryInfC{$\gamma_1,\ldots,\gamma_m,\Sigma,D_{G_1}\gamma_1,\ldots,D_{G_m}\gamma_m, \Gamma^{\prime} \ra\Omega,\Delta $}
\RightLabel{\scriptsize $(D_{T})^\ast$}
\UnaryInfC{$\Sigma,D_{G_1}\gamma_1,\ldots,D_{G_m}\gamma_m, \Gamma^{\prime} \ra \Omega,\Delta$}
\DisplayProof
      \end{scriptsize}
\end{center}
where  $(D_{T})^\ast$ denotes applying  $(D_{T})^\ast$ for finite many times.
In the transformed derivation, the uppermost application of $(cut)$ can be eliminated due to the reduced height of the derivation, while the second uppermost application of $(cut)$ can be eliminated owing to the lower complexity of the cut formula.
\end{proof}

Next we show the equipollence between Hilbert systems and sequent calculi.

\begin{defn}
Given a sequent $\Gamma \ra \Delta$, $\Gamma_{\star}$ denotes the conjunction of all formulas in $\Gamma$ $(\Gamma_{\star}\equiv \top$ if $\Gamma$ is empty), $\Delta^{\star}$ denotes the unique formula in $\Delta$  ($\Delta^\star = \bot$ if $\Delta$ is empty).
\end{defn}

\begin{prop}
\label{prop:from g to h}
Let $\mathbf{L}\in \{ \mathbf{K}_D , \mathbf{KD}_D, \mathbf{KT}_D \}$ .
 $\mathsf{G}(\mathbf{L})   \vdash  \Gamma \ra \Delta$ implies $\mathsf{H}(\mathbf{L})\vdash   \Gamma_{\star} \naraba \Delta ^\star$.
\end{prop}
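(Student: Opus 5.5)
We argue by induction on the height of a derivation of $\Gamma \ra \Delta$ in $\mathsf{G}(\mathbf{L})$. Here $\Delta^\star$ is read as the disjunction of the formulas in $\Delta$ ($\bot$ if $\Delta$ is empty), since succedents may contain several formulas. For each initial sequent and each rule we show two things. First, the translation of an initial sequent is a theorem of $\mathsf{H}(\mathbf{L})$. Second, whenever the translations of the premises are theorems, so is the translation of the conclusion. Throughout we freely use that $\mathsf{H}(\mathbf{L})$ proves all propositional tautologies and is closed under (MP). This makes the translation insensitive to the order and grouping of formulas in the multisets.

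The initial sequents $\Gamma, p \ra p, \Delta$ and $\bot,\Gamma\ra\Delta$ translate to tautologies. Each logical rule of $\mathbf{G3cp}$ is classically sound. Concretely, $(\Gamma_\star \wedge \alpha_1 \wedge \alpha_2) \rightarrow \Delta^\star$ tautologically implies $(\Gamma_\star \wedge (\alpha_1\wedge\alpha_2))\rightarrow \Delta^\star$, and similarly for the two-premise rules. These cases are therefore handled by a tautology together with (MP).

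The real work lies in the modal rules.

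\emph{Rule $(D_K)$.} From $\vdash \alpha_1\wedge\cdots\wedge\alpha_n \rightarrow \beta$ we apply (Nec) with $D_G$ and then $(D_K)$ repeatedly, using the derived fact that $D_G$ distributes over finite conjunctions. This gives $\vdash D_G\alpha_1\wedge\cdots\wedge D_G\alpha_n \rightarrow D_G\beta$. Since $G_i\subseteq G$, (Incl) gives $\vdash D_{G_i}\alpha_i \rightarrow D_G\alpha_i$, hence $\vdash D_{G_1}\alpha_1\wedge\cdots\wedge D_{G_n}\alpha_n \rightarrow D_G\beta$. Weakening the antecedent by $\Sigma_\star$ and the consequent by $\Omega$ is propositional. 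When $n=0$ the premise is $\ra \beta$, and (Nec) directly gives $\vdash D_G\beta$.

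\emph{Rule $(D_D)$.} From $\vdash \Gamma_\star \rightarrow \bot$, the same argument with $G=\{a\}$ gives $\vdash D_{\{a\}}\Gamma_\star' \rightarrow D_{\{a\}}\bot$, where $D_{\{a\}}\Gamma_\star'$ is the conjunction of the $D_{\{a\}}\gamma$ for $\gamma\in\Gamma$. This uses $\Gamma\neq\emptyset$, although (Nec) on $\neg\bot$ would cover the empty case anyway. Combining with the axiom $(D_D)$, $\neg D_{\{a\}}\bot$, we obtain $\vdash \neg D_{\{a\}}\Gamma_\star'$. From this the conclusion's translation follows propositionally.

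\emph{Rule $(D_T)$.} From $\vdash (D_G\alpha\wedge\alpha\wedge\Gamma_\star)\rightarrow\Delta^\star$ and the axiom $D_G\alpha\rightarrow\alpha$, a tautology yields $\vdash (D_G\alpha\wedge\Gamma_\star)\rightarrow\Delta^\star$.

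The only mildly delicate step is the $(D_K)$ case: one must derive the distribution of $D_G$ over conjunctions from the $(D_K)$ axiom and (Nec), and one must use (Incl) correctly, in the direction smaller group implies larger group. Once that is in place, every other case is routine propositional bookkeeping.
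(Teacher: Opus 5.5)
Your proof is correct. It is the routine rule-by-rule induction on derivations that the paper evidently has in mind: the paper states this proposition without proof, and its Equipollence argument simply invokes it. You handle the modal cases properly (necessitation plus $(D_K)$ distribution over conjunctions, (Incl) in the direction $D_{G_i}\alpha_i \rightarrow D_G\alpha_i$ for $G_i\subseteq G$, and the axioms $(D_D)$ and $(D_T)$), and your reading of $\Delta^\star$ as the disjunction of $\Delta$ is the right repair of the paper's ``unique formula'' wording, which does not make sense for multi-succedent sequents.
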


\begin{thm}[Equipollence]
\label{prop:equivilence of Hilbert and sequent}
Let $\mathbf{L}\in \{ \mathbf{K}_D , \mathbf{KD}_D, \mathbf{KT}_D \}$.
The following equivalence holds: $\mathsf{H}(\mathbf{L}) \vdash        \alpha$ iff 
 $ \mathsf{G}(\mathbf{L})  \vdash\ra \alpha$. 
\end{thm}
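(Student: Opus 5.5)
The right-to-left direction is an instance of Proposition \ref{prop:from g to h} with $\Gamma$ empty and $\Delta=\{\alpha\}$. If $\mathsf{G}(\mathbf{L})\vdash\ \ra\alpha$, then $\mathsf{H}(\mathbf{L})\vdash \top\naraba\alpha$, and modus ponens with the tautology $\top$ gives $\mathsf{H}(\mathbf{L})\vdash\alpha$. The real work is therefore the left-to-right direction. I will prove it by induction on the length of a Hilbert derivation of $\alpha$, using the cut-elimination result, Proposition \ref{prop:cut admissible of k,d,t}, to handle (MP).

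The proof uses three steps.

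First, I establish generalized initial sequents: $\alpha,\Gamma\ra\Delta,\alpha$ is derivable for every formula $\alpha$. This goes by induction on $\mathsf{wt}(\alpha)$ using the invertible logical rules. The modal case $D_H\beta\ra D_H\beta$ comes from $\beta\ra\beta$ by $(D_K)$ with $G=H$ and $G_1=H\subseteq H$, and the extra context is then added by Proposition \ref{prop:weakening of k,d,t}. With this, every propositional tautology is derivable, since $\mathbf{G3cp}$ is complete for classical propositional logic and its rules are contained in ours.

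Second, I derive the modal axioms.
\begin{itemize}
\item For $(D_K)$: from $\alpha\naraba\beta,\alpha\ra\beta$ (obtained by $(L\naraba)$ from generalized initial sequents), one application of $(D_K)$ with all $G_i=G$ gives $D_G(\alpha\naraba\beta),D_G\alpha\ra D_G\beta$, and two applications of $(R\naraba)$ finish.
\item For (Incl) with $G\subseteq H$: from $\alpha\ra\alpha$, the rule $(D_K)$ with target $D_H$ and premise $D_G\alpha$ gives $D_G\alpha\ra D_H\alpha$, since $G\subseteq H$. This is exactly where the subset side condition of $(D_K)$ is used.
\item For $(D_D)$: $\bot\ra$ is initial, so $(D_D)$ with $\Gamma=\{\bot\}\neq\emptyset$ yields $D_{\{a\}}\bot\ra$, and $(R\neg)$ gives $\ra\neg D_{\{a\}}\bot$.
\item For $(D_T)$: from $D_G\alpha,\alpha\ra\alpha$, the rule $(D_T)$ gives $D_G\alpha\ra\alpha$, and $(R\naraba)$ finishes.
\end{itemize}

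Third, I check the rules.
\begin{itemize}
\item For (Nec): from $\ra\alpha$, applying $(D_K)$ with $n=0$ and empty $\Sigma,\Omega$ gives $\ra D_G\alpha$.
\item For (MP): from $\ra\alpha$ and $\ra\alpha\naraba\beta$, height-preserving invertibility of $(R\naraba)$ (Proposition \ref{prop:invertibility of all logical rules in Gkn}) yields $\alpha\ra\beta$. A cut on $\alpha$ against $\ra\alpha$ then gives $\ra\beta$, and Proposition \ref{prop:cut admissible of k,d,t} removes the cut.
\end{itemize}

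The main obstacle is the restricted context in $(D_K)$ and $(D_D)$: $\Sigma$ and $\Omega$ may contain only atoms, $\bot$ and boxed formulas. This is why the generalized-initial-sequent and axiom derivations must first be made with empty context and only then weakened by Proposition \ref{prop:weakening of k,d,t}, rather than carrying the context through the modal step. I must also check that the side condition on $\Sigma$ (no $G$-sub formulas) is never violated. It is not, because each principal boxed formula sits among the $D_{G_i}\alpha_i$ rather than in $\Sigma$.
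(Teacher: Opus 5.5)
Your proposal is correct and takes essentially the same approach as the paper: right-to-left by Proposition~\ref{prop:from g to h} with empty antecedent, and left-to-right by induction on the Hilbert derivation. The paper only names that induction; your version supplies the details it leaves implicit, namely generalized initial sequents, derivations of each axiom, $(D_K)$ with $n=0$ for (Nec), and cut admissibility (Proposition~\ref{prop:cut admissible of k,d,t}) for (MP).
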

\begin{proof}
The direction from the right to the left  can be proved by applying Proposition~\ref{prop:from g to h}, in which we let the antecedent $\Gamma$ be empty. 
The direction from the left to the right  can be proved by induction on the derivation of $\alpha$. 
\end{proof}

\section{Finite model property}
\label{sec:finite model pro}

Next, we turn to establish the finite model properties for the sequent calculi and also provide a semantic proof of  admissibility of the cut rule.
Let $W$ be a set of states, $ (R_G)_{G\in \mathsf{Grp}}$ be
a familiy of binary relation such that  $R_H \subseteq R_G $  if $G\subseteq H$,
$V$ be a valuation
function from  $\mathsf{Prop}$ to $\mathcal{P}(w)$. Then, 
we call a pair $F=(W, (R_G)_{G\in \mathsf{Grp}})$ a frame and
and $M=(W, (R_G)_{G\in \mathsf{Grp}}, V)$ a model. 
Given a $M$ and a state $w$ in $W$, the satisfying relation  is defined inductively.
The cases of propositional cases are defined as usual,
the modal formula is defined as :
    $M, w \models D_G \alpha \quad \text{iff for all } v \in W, 
    \text{if } (w,v) \in  R_G \text{ then } M, v \models \alpha. $

The class of all models is denoted as  $\md$. 
 $\mdd$ is the class of all models in which 
$R_{\{a\}}$ ($\{a\}\in \mathsf{Grp}$) is serial, that is for any $w\in W$ there
 is a $v\in W$ such that $wR_{\{a\}} v$;
 $\mdt$ is the class of all models in which 
$R_G$ ($G\in \mathsf{Grp}$) is reflexive, that is for any $w\in W$  $wR_G w$. 
Furthermore, $\mathbb{M}^\mathbf{F}$  denotes the class of all finite models.
 We use $|M|$ to denote the underlying set
  (or domain) of a model $M$.

Usually, the truth of a distributed knowledge $D_G \alpha $ in a state $w$   is defined as
the truth of $\alpha$ in all states accessible from $w$  by the intersection of all individual accessibility relations in group $G$.
The models above, in which each accessibility relation is indexed by groups of agents, are usually called {\it pseudo models}.
The difference is that 
a binary relation indexed by $G$ does not necessarily coincide with the intersection of the individual relations of the agents in that group.
In a standard argument (for example in Fagin \cite{fagin96}) about the completeness of the epistemic logic with distributed knowledge, we usually make use of a method called tree-unraveling  to transform  
a model into a well-defined model such that $R_{G}$ is 
 the intersection of $\inset{R_{\setof{a}}}{a \in G}$ by preserving the satisfaction of  formulas (\cite{fagin96,agotnes20}). 
However, for our current purpose, the above model is sufficient.
\footnote{If readers are interested in the definition of tree unraveling,
see Blackburn et.al \cite{blackburn01} for a general method to treat tree-like models. Also, please check Dummett \cite{dummett59} and   Sahlqvist \cite{sahlqvist75} for a development of this method, the argument of $\mathbf{S5}$ in Fagin \cite{fagin96},
and an improvement of this method in \cite{wang20}}.

In the rest  of this section,  
the finite model properties  of sequent calculi $\gkd,\gkdd$ and $\gktd$  and the soundness of these sequent calculi with the (cut) rules 
for the above semantics are shown.
Hence, the decidability and a semantic proof of the admissibility of the cut rules for these systems can be derived.





Given a sequent $\Gamma \Rightarrow \Delta$,
we say that a sequent $\Gamma \Rightarrow \Delta$ is \emph{valid} in a 
class $\mathbb{M}$ of models (denoted by $\mathbb{M} \models \Gamma \Rightarrow \Delta$), 
if $\mathbb{M} \models \Gamma_\star \rightarrow \Delta^\star$.
Recall that  $ \mathsf{G}^{c}(\mathbf{L})$ denotes the  sequent calculus $ \mathsf{G}(\mathbf{L})$
with an additional cut rule.
In the following theorem, 
let  $\mathbb{M}_{ \mathbf{K}_D}$
be $\mathbb{M} $ ,
$\mathbb{M}_{ \mathbf{KD}_D}$
be $\mathbb{M}_{\mathbf{ser}}$ ,
$\mathbb{M}_{ \mathbf{KT}_D}$
be $\mathbb{M}_{\mathbf{ref}}$.

\begin{thm}[Soundness]
    \label{thm:soundness of g plus cut}
Let $\mathbf{L}\in \{ \mathbf{K}_D , \mathbf{KD}_D, \mathbf{KT}_D \}$. Let $\Gamma \Rightarrow \Delta$ be a sequent. If $\mathsf{G}^c(\mathbf{L}) \vdash \Gamma \Rightarrow \Delta$ then $\mathbb{M}_{\mathbf{L}}\models \Gamma \Rightarrow \Delta$.
\end{thm}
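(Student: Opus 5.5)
The soundness claim will be proved by induction on the height of a derivation in $\mathsf{G}^c(\mathbf{L})$. Validity of a sequent $\Gamma \Rightarrow \Delta$ is read pointwise: for every model $M\in\mathbb{M}_{\mathbf{L}}$ and every $w\in|M|$, if $M,w\models\gamma$ for all $\gamma\in\Gamma$, then $M,w\models\delta$ for some $\delta\in\Delta$. This is equivalent to $\mathbb{M}_{\mathbf{L}}\models\Gamma_\star\rightarrow\bigvee\Delta$, which is the intended reading of $\Delta^\star$ for multi-formula succedents. For each rule we show that if its premises are valid in $\mathbb{M}_{\mathbf{L}}$, so is its conclusion.

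\emph{Initial sequents, propositional rules and cut.} The initial sequents $\Gamma,p\Rightarrow p,\Delta$ and $\bot,\Gamma\Rightarrow\Delta$ are valid in every model. The logical rules of $\mathbf{G3cp}$ preserve validity by the usual truth-table argument applied state by state. For $(Cut)$, fix $w$ satisfying $\Gamma,\Gamma'$ and suppose no formula of $\Delta$ holds at $w$. The left premise then gives $M,w\models\lambda$, and the right premise gives some formula of $\Delta'$ at $w$.

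\emph{Modal rule $(D_K)$.} Assume $\alpha_1,\ldots,\alpha_n\Rightarrow\beta$ is valid in $\mathbb{M}_{\mathbf{L}}$, and let $M,w\models D_{G_i}\alpha_i$ for all $i$, with $G_i\subseteq G$. Take $v$ with $wR_Gv$. The frame condition says $R_H\subseteq R_{G'}$ whenever $G'\subseteq H$; with $G_i\subseteq G$ this gives $R_G\subseteq R_{G_i}$. Hence $wR_{G_i}v$, so $M,v\models\alpha_i$ for each $i$. Since $M\in\mathbb{M}_{\mathbf{L}}$, the premise yields $M,v\models\beta$, and therefore $M,w\models D_G\beta$. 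The contexts $\Sigma,\Omega$ play no role, since the conclusion only needs some succedent formula to hold. When $n=0$ the premise is $\Rightarrow\beta$, so $\beta$ holds everywhere and $D_G\beta$ holds at $w$.

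\emph{Modal rules $(D_D)$ and $(D_T)$.} For $(D_D)$ in $\mathbb{M}_{\mathbf{ser}}$, validity of $\Gamma\Rightarrow$ means that no state satisfies all of $\Gamma$. If $M,w\models D_{\{a\}}\gamma$ for every $\gamma\in\Gamma$, seriality gives some $v$ with $wR_{\{a\}}v$, and $v$ satisfies all of $\Gamma$, a contradiction. Thus no state satisfies the antecedent $D_{\{a\}}\Gamma$, and the conclusion is valid whatever $\Sigma,\Omega$ are. For $(D_T)$ in $\mathbb{M}_{\mathbf{ref}}$, $M,w\models D_G\alpha$ together with $wR_Gw$ gives $M,w\models\alpha$, so the premise applies at $w$.

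\emph{Where care is needed.} There is no real obstacle. The one point requiring attention is the direction of the inclusion between relations in $(D_K)$: the frame condition must give $R_G\subseteq R_{G_i}$ exactly when $G_i\subseteq G$, mirroring axiom (Incl). We must also check that every premise is evaluated only in models of the same class $\mathbb{M}_{\mathbf{L}}$, which holds because no rule changes the model.
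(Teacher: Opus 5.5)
Your proof is correct and follows the paper's route: induction on derivations, with the key $(D_K)$ case settled exactly as the paper does it, via $R_G\subseteq R_{G_i}$ for $G_i\subseteq G$. You also spell out the cut case, $(D_D)$ (using seriality of $R_{\{a\}}$) and $(D_T)$ (using reflexivity), which the paper leaves implicit. Your disjunctive reading of $\Delta^\star$ matches how the paper itself uses it in its proof.
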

\begin{proof}
    These can be shown by induction on the derivation.
    We concentrate on the following rule.

    \begin{itemize}
        \item In the rule of $(D_{K })$, given $0\leq i\leq n$ and, we assume that $\mathbb{M}\models  \alpha_1,\ldots,\alpha_n \ra \beta   $ i.e., 
        $\mathbb{M}\models  (\alpha_1\wedge\ldots\wedge \alpha_n )\rightarrow \beta   $.
         We show that $\mathbb{M}\models  \Sigma,   D_{G_1}\alpha_1,\ldots,D_{G_n}\alpha_n   \ra D_G \beta, \Omega$ 
        that is, $\mathbb{M}\models  (\bigwedge\Sigma \wedge  D_{G_1}\alpha_1\wedge \ldots\wedge D_{G_n}\alpha_n  )
         \rightarrow (D_G \beta \lor \bigvee\Omega)$.
          When $n=0$, the proof is straight. We only discuss the case $n$ is not $0$.
         Fix any $M$, a state $w\in |M|$, suppose $M,w\models   \bigwedge\Sigma \wedge  D_{G_1}\alpha_1\wedge \ldots\wedge
          D_{G_n}\alpha_n  $. We show $M,w\models D_G \beta \lor \bigvee\Omega$ which can be obtained from $M,w\models D_G \beta  $.
          Fix a $v\in|M|$ such that $(w,v) \in  R_G $, and we show $M,v\models \beta $.
        Then, for any $i$  we have $   R_G \subseteq   R_{G_{i}} $
         since $G_{i}\subseteq G$. We have
        $ M, w\models D_{G_i} \alpha_i$ for any $i$. 
       Then, we can derive that 
        $M,v\models \alpha_1\wedge\cdots\wedge \alpha_n$. Finally,  $M,v\models \beta $ can be obtained.
       \qedhere
    \end{itemize}
\end{proof}
From this proof,
it is trivial that the above systems without cut are also sound for the corresponding class of models.
Then, we are ready to show the finite model property of these sequent calculi.
The following method is based on Takano \cite{takano2018semantical}.
 In the rest of this section, even
if $\Gamma,\Delta$ are  sets of formulas, we regard $\Gamma\ra \Delta$   as a sequent.

\begin{defn}
We say that a sequent calculus $\mathsf{G}$ enjoys the {\it finite model property} 
with respect to a class $\mathbb{M}^{\mathbf{F}}$ of finite frame, if
$\mathbb{M}^{\mathbf{F}}\models \Gamma\ra \Delta$ implies $\mathsf{G}\vdash \Gamma\ra \Delta.$
\end{defn}

\begin{defn}
Let $\mathbf{L}\in \{ \mathbf{K}_D , \mathbf{KD}_D,\mathbf{KT}_D\}$.
Let $\Omega $ be a finite subformula-closed set of formulas.
Given a sequent $\Gamma\ra\Delta$, we say that it is $\Omega$-saturated in $\mathsf{G}(\mathbf{L})$ if it satisfies the following conditions:
1, $ \mathsf{G}(\mathbf{L}) \nvdash \Gamma\ra\Delta;$
2,  for any formula $\alpha$ in $\Omega$, if $ \mathsf{G}(\mathbf{L}) \nvdash  \alpha,\Gamma\ra \Delta$ then
 $\alpha \in \Gamma$;  3, for any formula $\beta$ in $\Omega$, if $ \mathsf{G}(\mathbf{L}) \nvdash \Gamma\ra \Delta,\beta$ 
    then $\beta \in \Delta$.
\end{defn}

\begin{prop}
    \label{prop:construction model}
Let $\mathbf{L}\in \{ \mathbf{K}_D , \mathbf{KD}_D,\mathbf{KT}_D\}$.
Let $\Gamma\ra\Delta$ be a sequent such that it is not derivable in $ \mathsf{G}(\mathbf{L})$.
Let $\Omega $ be a finite subformula closed set of formulas and $\Gamma,\Delta\subseteq \Omega.$
Then we can construct an $\Omega$-saturated sequent $\Theta\ra \Pi$  in $ \mathsf{G}(\mathbf{L})$ such that $ \Gamma\subseteq\Theta$ and $\Delta\subseteq\Pi.$

\end{prop}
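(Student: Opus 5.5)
The construction is a finite Lindenbaum-style saturation inside $\Omega$. Since $\Omega$ is finite, fix an enumeration $\gamma_1,\ldots,\gamma_k$ of its formulas. Put $\Theta_0 := \Gamma$ and $\Pi_0 := \Delta$; by hypothesis $\mathsf{G}(\mathbf{L}) \nvdash \Theta_0 \ra \Pi_0$. At stage $i+1$ we look at $\gamma_{i+1}$ and add it to one side while keeping underivability:
\begin{itemize}
\item if $\mathsf{G}(\mathbf{L}) \nvdash \gamma_{i+1},\Theta_i \ra \Pi_i$, set $\Theta_{i+1} := \Theta_i \cup \setof{\gamma_{i+1}}$ and $\Pi_{i+1} := \Pi_i$;
\item otherwise, if $\mathsf{G}(\mathbf{L}) \nvdash \Theta_i \ra \Pi_i, \gamma_{i+1}$, set $\Pi_{i+1} := \Pi_i \cup \setof{\gamma_{i+1}}$ and $\Theta_{i+1} := \Theta_i$.
\end{itemize}
Sides are treated as sets, which matches the convention announced just before the definition of saturation; by Proposition~\ref{prop:hp contraction in k,d,t} duplicates are harmless. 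Finally, take $\Theta := \Theta_k$ and $\Pi := \Pi_k$.

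\textbf{Key step: the construction never gets stuck.} We must show that both sequents $\gamma,\Theta_i \ra \Pi_i$ and $\Theta_i \ra \Pi_i,\gamma$ cannot be derivable at once. Suppose they were. Apply $(Cut)$ to them with cut formula $\gamma$, obtaining $\Theta_i,\Theta_i \ra \Pi_i,\Pi_i$ in $\mathsf{G}^c(\mathbf{L})$. By Proposition~\ref{prop:cut admissible of k,d,t} this sequent is derivable in $\mathsf{G}(\mathbf{L})$, and contraction (Proposition~\ref{prop:hp contraction in k,d,t}) then gives $\Theta_i \ra \Pi_i$, contradicting the invariant. So at every stage one of the two cases applies, and by induction $\mathsf{G}(\mathbf{L}) \nvdash \Theta_i \ra \Pi_i$ for all $i$. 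In particular condition 1 holds for $\Theta \ra \Pi$. Clearly $\Gamma \subseteq \Theta \subseteq \Omega$ and $\Delta \subseteq \Pi \subseteq \Omega$.

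\textbf{Conditions 2 and 3: monotonicity.} Let $\alpha = \gamma_j \in \Omega$ and suppose $\mathsf{G}(\mathbf{L}) \nvdash \alpha,\Theta \ra \Pi$. If $\alpha \notin \Theta$, then at stage $j$ the first case failed, so $\mathsf{G}(\mathbf{L}) \vdash \alpha,\Theta_{j-1} \ra \Pi_{j-1}$. Weakening (Proposition~\ref{prop:weakening of k,d,t}) lifts this to $\alpha,\Theta \ra \Pi$, a contradiction. Hence $\alpha \in \Theta$, which is condition 2. Condition 3 is symmetric: if $\beta = \gamma_j \notin \Pi$, then at stage $j$ it went into $\Theta$ (the two cases are exhaustive by the key step). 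So $\beta,\Theta \ra \Pi$ holds by reflexivity of our bookkeeping, and we need $\mathsf{G}(\mathbf{L}) \vdash \Theta \ra \Pi,\beta$. For this use $\beta \in \Theta$ together with admissibility of the identity $\beta \ra \beta$ for arbitrary $\beta$, proved by induction on the weight of $\beta$ as usual for G3 systems, plus weakening.

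\textbf{Expected obstacle.} The delicate point is exactly this reliance on admissible rules that are only weakly available. Weakening is not height-preserving, but only plain admissibility is needed here. The general identity sequent $\beta,\Theta\ra\Pi,\beta$ needs checking for boxed $\beta = D_G\alpha$: we want to apply $(D_K)$ with $\Sigma$ consisting of the remaining atoms and boxed formulas. If $\Theta,\Pi$ contain compound non-modal formulas, first decompose them by invertible logical rules, or apply weakening after deriving $D_G\alpha \ra D_G\alpha$ from $\alpha\ra\alpha$ by $(D_K)$. With cut admissibility, contraction, weakening and identity in hand, the argument goes through uniformly for $\mathbf{K}_D$, $\mathbf{KD}_D$ and $\mathbf{KT}_D$.
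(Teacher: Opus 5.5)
Your argument is correct and uses the same finite Lindenbaum-style saturation as the paper, but it differs in one respect that matters for how the proposition is used later.

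The paper gives priority to the succedent and keeps a third clause (``otherwise, leave $\phi_k$ out''). You give priority to the antecedent, which is immaterial by symmetry. More substantively, you drop the third clause and instead show, via Proposition~\ref{prop:cut admissible of k,d,t} and contraction, that one of the two extensions is always underivable.

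The paper's three-clause construction needs no cut at all:
\begin{itemize}
\item If both $\Theta_k\ra\Pi_k,\phi_k$ and $\phi_k,\Theta_k\ra\Pi_k$ are derivable, weakening lifts both to the final $\Theta\ra\Pi$. So conditions 2 and 3 hold vacuously for $\phi_k$.
\item Condition 3 always follows from the failure of the first clause plus weakening.
\item Condition 2 follows either from the third clause plus weakening or, when $\phi_k$ went into $\Pi$, from the generalized identity sequent.
\end{itemize}

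This is not cosmetic. Through Theorem~\ref{thm:completeness of g without cut}, the paper uses this proposition in a corollary advertised as an alternative, semantic proof of cut admissibility. With your version, that corollary would rest on Proposition~\ref{prop:cut admissible of k,d,t} and would no longer be an independent proof; as an alternative proof it would be circular.

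So your proof is fine for the proposition as stated, and it actually checks the saturation conditions, which the paper leaves implicit. Still, you should add an ``otherwise, leave $\gamma_{i+1}$ out'' clause. Then condition 3 holds by identity if $\beta$ went into $\Theta$, or by weakening if $\beta$ was left out. That removes cut from the key step and keeps Section~\ref{sec:finite model pro} independent of syntactic cut elimination.

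One wording fix: in your condition-3 argument, ``$\beta,\Theta\ra\Pi$ holds by reflexivity of our bookkeeping'' should simply read $\beta\in\Theta$.
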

\begin{proof}

Consider an enumeration of all formulas in $\Omega$ ($\Omega$ is finite) as $\phi_1,\ldots,\phi_n.$
We define $(\Theta_k)_{k\leq n}$ inductively. 
Let $\Theta_0 \ra \Pi_0$ be $\Gamma \ra \Delta$, then we define  $\Theta_{k+1} \ra \Pi_{k+1}$ as follows:


\begin{enumerate}

    \item if $\nvdash \Theta_k\ra \Pi_k,\phi_k$, 
    then $\Theta_{k+1}=\Theta_k$, $\Pi_{k+1}=\Pi_k\cup \{\phi_k\}$;
 
    \item if $\vdash \Theta_k\ra \Pi_k,\phi_k$ and $\nvdash \phi_k,\Theta_k\ra \Pi_k $, then $\Theta_{k+1}=\Theta_k  \cup \{\phi_k\}$, $\Pi_{k+1}=\Pi_k $;
 
    \item otherwise,$\Theta_{k+1}=\Theta_k  $, $\Pi_{k+1}=\Pi_k$.
\end{enumerate}
Finally,
let $\Theta = \Theta_{n+1}$, $\Pi=\Pi_{n+1}$.
\end{proof}

\begin{defn}
Let $\mathbf{L}\in \{ \mathbf{K}_D , \mathbf{KD}_D,\mathbf{KT}_D\}$.
Let $\Omega $ be a finite subformula closed set of formulas.
An $\Omega$-model $M_\Omega =( (W, (R_G)_{G\in \mathsf{Grp}}, V))$ is defined as follows:
\begin{enumerate}
    \item $W=\{ (\Theta,\Pi) | \Theta\cup\Pi\subseteq \Omega$ and $\Theta\ra\Pi$ is $\Omega$-saturated in $\mathsf{G}(\mathbf{L})\}$;
  \item $ (\Theta,\Pi) R_G (\Theta',\Pi')$ if and only if $\Theta ^{\flat G'} \subseteq \Theta'$ for any $G'\subseteq G$ such that $G'\in \mathsf{Grp}$;
    \item $(\Theta,\Pi)\in V(p)  $ if and only if $p\in \Theta$.
\end{enumerate}
It is remarked that, $W$ is finite since $\Omega$ is finite.
\end{defn}

\begin{prop}
   Let $\mathbf{L}\in \{ \mathbf{K}_D , \mathbf{KD}_D, \mathbf{KT}_D \}$. Let $\Omega $ be a finite subformula-closed set of formulas.  An $\Omega $-model in $\mathsf{G}(\mathbf{L})$ is in  $\mathbb{M}^{\mathbf{F}}_{\mathbf{L}}$.

      
\end{prop}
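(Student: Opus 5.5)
The proof checks directly that the finite $\Omega$-model $M_\Omega$ has the frame properties required of $\mathbb{M}^{\mathbf{F}}_{\mathbf{L}}$. Finiteness of $W$ was already remarked after the definition of $M_\Omega$. So three things remain: the inclusion condition on the relations, which is needed for every $\mathbf{L}$; seriality of each $R_{\{a\}}$ for $\mathbf{KD}_D$; and reflexivity of each $R_G$ for $\mathbf{KT}_D$. Non-emptiness of $W$ also holds. The empty sequent $\ra$ is not derivable by the Soundness Theorem~\ref{thm:soundness of g plus cut}, so Proposition~\ref{prop:construction model} extends it to an $\Omega$-saturated sequent.

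\emph{Inclusion.} Let $G\subseteq H$ and $(\Theta,\Pi)R_H(\Theta',\Pi')$. Every $G'\in\mathsf{Grp}$ with $G'\subseteq G$ also satisfies $G'\subseteq H$, so $\Theta^{\flat G'}\subseteq\Theta'$. This is exactly $(\Theta,\Pi)R_G(\Theta',\Pi')$, hence $R_H\subseteq R_G$, as the definition of a frame requires.

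\emph{Reflexivity for $\mathbf{KT}_D$.} Fix a saturated $(\Theta,\Pi)$, a group $G'\subseteq G$, and $D_{G'}\alpha\in\Theta$; we must show $\alpha\in\Theta$. Since $\Omega$ is subformula closed, $\alpha\in\Omega$. Suppose $\alpha\notin\Theta$. Then saturation clause 2 gives $\mathsf{G}(\mathbf{KT}_D)\vdash \alpha,\Theta\ra\Pi$. Write $\Theta=D_{G'}\alpha,\Theta_0$. The derivable sequent is then literally the premise $D_{G'}\alpha,\alpha,\Theta_0\ra\Pi$ of $(D_T)$, and one application of $(D_T)$ yields $\Theta\ra\Pi$. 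This contradicts clause 1 of saturation, so $\alpha\in\Theta$.

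\emph{Seriality for $\mathbf{KD}_D$.} This is the step I expect to need the most care, because $(D_D)$ restricts its context $\Sigma,\Omega$. Fix a saturated $(\Theta,\Pi)$ and let $\Gamma=\Theta^{\flat\{a\}}\subseteq\Omega$. The only $G'\in\mathsf{Grp}$ with $G'\subseteq\{a\}$ is $\{a\}$ itself, so it suffices to find a saturated $(\Theta',\Pi')$ with $\Gamma\subseteq\Theta'$.

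I claim $\Gamma\ra$ is not derivable. If $\Gamma=\emptyset$, this is the empty sequent, which is not derivable by soundness. If $\Gamma\neq\emptyset$ and $\vdash\Gamma\ra$, then $(D_D)$ with empty $\Sigma$ and $\Omega$ gives $\vdash D_{\{a\}}\Gamma\ra$. Here $D_{\{a\}}\Gamma\subseteq\Theta$, treating $\Theta$ as a set. The admissible weakening rules of Proposition~\ref{prop:weakening of k,d,t} then give $\vdash\Theta\ra\Pi$, contradicting saturation. Taking empty context and weakening afterwards is precisely what avoids the restriction on $\Sigma$, which may not contain compound non-modal formulas of $\Theta$.

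Proposition~\ref{prop:construction model} now extends $\Gamma\ra$ to an $\Omega$-saturated $\Theta'\ra\Pi'$ with $\Gamma\subseteq\Theta'$, so $(\Theta,\Pi)R_{\{a\}}(\Theta',\Pi')$. Combining the three properties with finiteness shows that $M_\Omega\in\mathbb{M}^{\mathbf{F}}_{\mathbf{L}}$ in each case.
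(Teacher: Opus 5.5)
Your proof is correct. The paper states this proposition without any proof, so there is no argument to diverge from. Your direct verification supplies what is missing: finiteness, the inclusion condition $R_H\subseteq R_G$, reflexivity from clause 2 of saturation plus one application of $(D_T)$, and seriality from $(D_D)$ with empty context followed by weakening and Proposition~\ref{prop:construction model}. The seriality step uses the same device the paper uses in the $D_G\beta\in\Pi$ case of Lemma~\ref{lem:truth lemma}: apply the modal rule with empty context, then weaken to reach a contradiction with saturation.
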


\begin{lem}
    \label{lem:truth lemma}
Let $\Omega $ be a finite subformula-closed set of formulas. For any $(\Theta,\Pi) \in |M_\Omega|$, for any formula $\alpha$,
we have:
\begin{enumerate}
    \item $\alpha\in \Theta$ implies $M_{\Omega} , (\Theta,\Pi) \models  \alpha$;
    \item   $\alpha\in \Pi$ implies $M_{\Omega} , (\Theta,\Pi) \nvDash  \alpha$.
\end{enumerate}
\end{lem}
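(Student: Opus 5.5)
We prove items 1 and 2 simultaneously by induction on the weight $\mathsf{wt}(\alpha)$. Since $\Theta\cup\Pi\subseteq\Omega$, only $\alpha\in\Omega$ is relevant; subformula-closure of $\Omega$ keeps the immediate subformulas inside $\Omega$, so saturation applies to them. Two facts are used throughout. First, $\Theta\ra\Pi$ is underivable in $\mathsf{G}(\mathbf{L})$, hence so is any sequent obtained from it by contraction, by Proposition~\ref{prop:hp contraction in k,d,t}. Second, by saturation, if $\beta\in\Omega$ and $\beta\notin\Theta$, then $\vdash\beta,\Theta\ra\Pi$, and symmetrically for $\Pi$.

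\textbf{Atomic and propositional cases.} For $\alpha=p$, item 1 holds by the definition of $V$. For item 2, $p\in\Pi$ together with $p\in\Theta$ would make $\Theta\ra\Pi$ an initial sequent, so $p\notin\Theta$. Also $\bot\notin\Theta$, since $\bot,\Theta'\ra\Pi$ is initial; this handles $\bot$. The connective cases are handled by saturation plus the logical rules. For example, take $\alpha_1\wedge\alpha_2\in\Theta$ and suppose $\alpha_1\notin\Theta$. Then $\vdash\alpha_1,\Theta\ra\Pi$, so by $(LW)$ and $(L\wedge)$ we get $\vdash\alpha_1\wedge\alpha_2,\Theta\ra\Pi$, and contraction yields $\vdash\Theta\ra\Pi$, a contradiction. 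Hence $\alpha_1,\alpha_2\in\Theta$ and the induction hypothesis applies. Similarly, $\alpha_1\rightarrow\alpha_2\in\Theta$ gives either $\alpha_1\in\Pi$ or $\alpha_2\in\Theta$, via $(L\rightarrow)$ and weakening (Proposition~\ref{prop:weakening of k,d,t}). The cases for $\Pi$ and for $\neg,\lor$ are dual.

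\textbf{Modal case, item 1.} Suppose $D_G\beta\in\Theta$ and $(\Theta,\Pi)R_G(\Theta',\Pi')$. Taking $G'=G$ in the definition of $R_G$ gives $\beta\in\Theta'$, and the induction hypothesis finishes.

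\textbf{Modal case, item 2.} This is the main obstacle. Let $D_G\beta\in\Pi$. Set $\Gamma_0=\Theta^{\flat_{\subseteq G}}$ and consider $\Gamma_0\ra\beta$. If it were derivable, then $(D_K)$ with conclusion $\Sigma,\Theta^{\natural_{\subseteq G}}\ra D_G\beta,\Omega'$ would give a derivation of $\Theta\ra\Pi$. Here $\Sigma$ collects the atoms, $\bot$ and non-$G$-sub boxed formulas of $\Theta$, and $\Omega'$ collects those of $\Pi$. The non-atomic, non-boxed members of $\Theta,\Pi$ are reintroduced by weakening. This contradicts underivability, so $\Gamma_0\ra\beta$ is underivable. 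Extend it by Proposition~\ref{prop:construction model} to an $\Omega$-saturated $\Theta'\ra\Pi'$ with $\Gamma_0\subseteq\Theta'$ and $\beta\in\Pi'$. By construction $\Theta^{\flat_{G'}}\subseteq\Theta'$ for all $G'\subseteq G$, so $(\Theta,\Pi)R_G(\Theta',\Pi')$. The induction hypothesis gives $\beta$ false at $(\Theta',\Pi')$, hence $D_G\beta$ is false at $(\Theta,\Pi)$.

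The delicate point is reconstructing a legitimate $(D_K)$ instance together with the weakening step, since weakening is admissible but not height-preserving. A further subtlety is that in $\mathbf{KD}_D$ and $\mathbf{KT}_D$ one must check that the constructed $M_\Omega$ lies in the right class. That check belongs to the preceding proposition, which we assume. Reflexivity there would use $(D_T)$-saturation, namely $D_G\beta\in\Theta\Rightarrow\beta\in\Theta$. Seriality would use $(D_D)$: $\Theta^{\flat_{\{a\}}}\ra$ is underivable, and its saturation supplies a successor.
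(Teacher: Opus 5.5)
Your proof is correct and follows the paper's own argument: the decisive step in both is that derivability of $\Theta^{\flat_{\subseteq G}}\ra\beta$ would yield $\Theta\ra\Pi$ by $(D_K)$ and weakening, after which Proposition~\ref{prop:construction model} supplies an $R_G$-successor refuting $\beta$; you additionally spell out the propositional cases that the paper calls standard. One wording slip: the fact you need is that any sequent which \emph{contracts to} $\Theta\ra\Pi$ is underivable, not any sequent obtained \emph{from} it by contraction, and that is in fact how you apply it.
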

\begin{proof}
We proceed by induction on the $\alpha$.
When $\alpha$ is $D_G \beta$, the argument goes like follows.
We assume $D_G \beta\in \Theta$ and aim to show $M_{\Omega} , (\Theta,\Pi) \models D_G \beta$. Fix any $(\Theta',\Pi')$ such that $(\Theta,\Pi) R_G (\Theta',\Pi')$. By the definition we have $\Theta ^{\flat G} \subseteq \Theta'$ for $G\subseteq G$,
 then it derives that $\beta \in \Theta'$. Applying induction hypothesis, we obtain $M_{\Omega} , (\Theta',\Pi') \models  \beta$.
For the condition (2), we assume that $D_G \beta\in\Pi$.
Since $\nvdash \Theta\ra\Pi $, we have $\nvdash \Theta^{\flat G_1},\cdots,\Theta^{\flat G_n} \ra\beta $ for a finite family of $(G_i)_{1\leq i\leq n}$ of all subgroups of $G$.
To see this, it assumes for contradiction that $\vdash  \Theta^{\flat G_1},\cdots,\Theta^{\flat G_n} \ra\beta$. After applying $(D_K)$  and weakening rules, we derive $\vdash\Theta\ra\Pi $ which leads to a contradiction.
 Then, 
we can construct a $(\Theta',\Pi')\in W$ such that 
$ \Theta^{\flat G_1}\cup\cdots\cup\Theta^{\flat G_n}\subseteq \Theta'$
and $\{\beta\}\subseteq \Pi'$ according to Proposition \ref{prop:construction model}. 
Then, we have $\Theta^{\flat G_i}\subseteq \Theta'$ for any $G_i\subseteq G$.
From the definition of binary relation, we have $(\Theta,\Pi) R_G (\Theta',\Pi')$. Furthermore, $M_{\Omega} , (\Theta',\Pi') \nvDash  \beta$
from $\beta\in \Pi'$ according to induction hypothesis.
Other cases are standard.
\end{proof}



\begin{thm}
    \label{thm:completeness of g without cut}
Let $\mathbf{L}\in \{ \mathbf{K}_D , \mathbf{KD}_D,\mathbf{KT}_D\}$. Given a sequent $\Gamma\ra\Delta$,
if  $\mathbb{M}^{\mathbf{F}}_{\mathbf{L}} \models \Gamma\ra \Delta$ then $\mathsf{G}(\mathbf{L} )\vdash \Gamma\ra \Delta. $

\end{thm}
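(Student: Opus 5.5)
We argue by contraposition. Suppose $\mathsf{G}(\mathbf{L})\nvdash \Gamma\ra\Delta$; we build a finite countermodel in $\mathbb{M}^{\mathbf{F}}_{\mathbf{L}}$. Let $\Omega=\mathsf{Sub}(\Gamma\cup\Delta)$, which is finite and subformula-closed, and note $\Gamma,\Delta\subseteq\Omega$. Proposition~\ref{prop:construction model} gives an $\Omega$-saturated sequent $\Theta\ra\Pi$ with $\Gamma\subseteq\Theta$ and $\Delta\subseteq\Pi$. Hence $(\Theta,\Pi)$ is a state of the $\Omega$-model $M_\Omega$, which is finite and, by the proposition preceding Lemma~\ref{lem:truth lemma}, lies in $\mathbb{M}^{\mathbf{F}}_{\mathbf{L}}$. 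By Lemma~\ref{lem:truth lemma}, every $\alpha\in\Gamma$ is true at $(\Theta,\Pi)$ and every $\beta\in\Delta$ is false there. So $M_\Omega,(\Theta,\Pi)\nvDash\Gamma_\star\to\bigvee\Delta$, and the sequent is not valid in $\mathbb{M}^{\mathbf{F}}_{\mathbf{L}}$.

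The substantive checks concern the two cited ingredients, and I would verify each explicitly.

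\emph{The saturated sequent really is saturated.} Underivability is preserved at each stage of the construction: in case 3 both extensions are derivable, so we do not extend at all. Conditions 2 and 3 of saturation hold because, by admissibility of weakening (Proposition~\ref{prop:weakening of k,d,t}), derivability persists as the sets grow, so a formula rejected at stage $k$ stays rejectable later. Without cut this needs care. If both $\Theta_k\ra\Pi_k,\phi$ and $\phi,\Theta_k\ra\Pi_k$ are derivable, then cut admissibility (Proposition~\ref{prop:cut admissible of k,d,t}) together with contraction gives $\Theta_k\ra\Pi_k$, a contradiction. So case 3 never occurs, and each formula of $\Omega$ ends up on one side.

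\emph{The frame conditions.} For $\mathbf{KT}_D$, reflexivity needs $\Theta^{\flat G'}\subseteq\Theta$ for each $G'\subseteq G$. This follows from saturation: if $D_{G'}\beta\in\Theta$ and $\beta\notin\Theta$, then $\beta,\Theta\ra\Pi$ is derivable, and $(D_T)$ with contraction yields $\Theta\ra\Pi$, a contradiction. For $\mathbf{KD}_D$, seriality of $R_{\{a\}}$ needs $\Theta^{\flat\{a\}}\ra$ to be underivable (when nonempty), which holds by $(D_D)$ and weakening. A saturated successor then exists; in the empty case we use any world, and $W$ is nonempty by the very argument above.

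\emph{The box case of the truth lemma.} This is where I expect the main obstacle. Item~1 uses $\Theta^{\flat G'}\subseteq\Theta'$ for all $G'\subseteq G$, which is built into $R_G$, and the monotonicity $R_H\subseteq R_G$ for $G\subseteq H$ holds by construction. Item~2 requires that from $\vdash\Theta^{\flat G_1},\dots,\Theta^{\flat G_n}\ra\beta$ one can apply $(D_K)$ and then weaken up to $\Theta\ra\Pi$. The side condition $\dagger$ forbids leftover $G$-subformulas in $\Sigma$, so first apply $(D_K)$ with all $G$-subformulas of $\Theta$ principal, which is permitted and harmless. Then add the remaining formulas of $\Theta$ and $\Pi$ by the non-height-preserving weakening. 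Once these points are confirmed, the contrapositive argument of the first paragraph completes the proof.
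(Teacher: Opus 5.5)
Your proposal is correct and follows the paper's proof exactly. You saturate $\Gamma\ra\Delta$ inside $\Omega=\mathsf{Sub}(\Gamma\cup\Delta)$ by Proposition~\ref{prop:construction model} and read off the countermodel at $(\Theta,\Pi)$ via Lemma~\ref{lem:truth lemma}. Your added checks (reflexivity via $(D_T)$, seriality via $(D_D)$ plus weakening, and the apply-$(D_K)$-then-weaken step) correctly fill in what the paper leaves implicit.

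One caution: your appeal to cut admissibility to rule out case~3 is unnecessary. Neither saturation nor the truth lemma needs every formula of $\Omega$ to land on one side; conditions 1--3 together with weakening, contraction and the derivability of $\alpha,\Theta\ra\Pi,\alpha$ suffice. Relying on cut would also make the paper's later semantic proof of cut admissibility circular, since that proof rests on this very theorem.
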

\begin{proof}
    At first, we assume that $\mathsf{G}(\mathbf{L} )\nvdash \Gamma\ra \Delta. $ 
    We put  $\Omega  = \mathsf{Sub}(\Gamma\cup\Delta)$. It is obvious that $\Omega $ is 
     finite and subformula-closed. By applying Proposition \ref{prop:construction model}, there exists
     an $\Omega$-saturated sequent $\Theta\ra\Pi$ in $\mathsf{G}(\mathbf{L} )$, such that $\Gamma\subseteq\Theta $ and $\Delta\subseteq \Pi$.
     Then, we apply Lemma \ref{lem:truth lemma} to obtain the result.
\end{proof}

\begin{cor}
Sequent calculi $\gkd,\gkdd$ and $\gktd$ are decidable. 

\end{cor}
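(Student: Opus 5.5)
The corollary will follow by combining the finite model property (Theorem~\ref{thm:completeness of g without cut}) with soundness (Theorem~\ref{thm:soundness of g plus cut}). The statement itself yields a concrete decision procedure. Given a sequent $\Gamma\ra\Delta$, set $\Omega=\mathsf{Sub}(\Gamma\cup\Delta)$. By soundness, $\mathsf{G}(\mathbf{L})\vdash\Gamma\ra\Delta$ implies $\mathbb{M}_{\mathbf{L}}\models\Gamma\ra\Delta$, and hence validity on all finite models of the class. Conversely, by Theorem~\ref{thm:completeness of g without cut}, validity on $\mathbb{M}^{\mathbf{F}}_{\mathbf{L}}$ implies derivability. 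So derivability coincides with validity in the finite models of the class.

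To turn this into an effective procedure, I will bound the size of the countermodels. The countermodel built in Theorem~\ref{thm:completeness of g without cut} is the $\Omega$-model $M_\Omega$, whose worlds are pairs $(\Theta,\Pi)$ with $\Theta\cup\Pi\subseteq\Omega$. Hence it has at most $4^{|\Omega|}$ worlds, and only finitely many groups (since $\mathsf{Agt}$ is finite) and finitely many relevant variables matter. The procedure is therefore: enumerate all models in $\mathbb{M}_{\mathbf{L}}$ of size at most $4^{|\Omega|}$, with relations indexed by $\mathsf{Grp}$ satisfying the inclusion condition (plus seriality or reflexivity as required) and valuations restricted to $\mathsf{V}(\Gamma\cup\Delta)$. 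Then check $\Gamma_\star\rightarrow\Delta^\star$ at each world. Membership in the frame class and model checking are both decidable for finite structures. The sequent is derivable iff no countermodel is found.

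For $\gkd$ and $\gkdd$ there is a simpler alternative via Proposition~\ref{prop:termination of proof search in GKn}. Every rule premise has strictly smaller weight than its conclusion, and each sequent admits only finitely many rule instances backwards. Exhaustive backward proof search therefore builds a finite tree and decides derivability. For $\gktd$ this fails because $(D_T)$ keeps $D_G\alpha$, so the semantic route is needed there.

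The only delicate point is the bound on $M_\Omega$: I must check that its worlds really are only the saturated pairs over $\Omega$, and that $M_\Omega$ belongs to the appropriate class $\mathbb{M}^{\mathbf{F}}_{\mathbf{L}}$. Both facts are the content of the proposition stated just before Lemma~\ref{lem:truth lemma}, which I will assume.
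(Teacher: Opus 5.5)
Your proposal is correct and follows the paper's route: the paper draws this corollary, without further argument, from soundness (Theorem~\ref{thm:soundness of g plus cut}) together with the finite model property (Theorem~\ref{thm:completeness of g without cut}), and your explicit bound on the size of $M_\Omega$ is exactly what makes that inference effective. Your additional syntactic procedure for $\gkd$ and $\gkdd$ via Proposition~\ref{prop:termination of proof search in GKn} is also sound, but saying the semantic route is \emph{needed} for $\gktd$ overstates things: Proposition~\ref{prop:gktplus terminate} combined with Lemma~\ref{lem:equipplloent of gktn and gktnplus} gives a terminating syntactic procedure there too.
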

 

The following result provides an alternative proof for the same result as Proposition \ref{prop:cut admissible of k,d,t}.
\begin{cor}
The cut rule is admissible in  $\gkd,\gkdd$ and $\gktd$.
\end{cor}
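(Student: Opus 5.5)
The argument is semantic: cut will be eliminated by passing through the finite models of Section~\ref{sec:finite model pro}. Let $\mathbf{L}\in\{\mathbf{K}_D,\mathbf{KD}_D,\mathbf{KT}_D\}$. Take an instance of $(Cut)$ in which both premises $\Gamma\ra\Delta,\lambda$ and $\lambda,\Gamma'\ra\Delta'$ are derivable in $\mathsf{G}(\mathbf{L})$. We must show that $\Gamma,\Gamma'\ra\Delta,\Delta'$ is also derivable in $\mathsf{G}(\mathbf{L})$.

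\textbf{Soundness of $\mathsf{G}^c(\mathbf{L})$.} The premises are derivable in $\mathsf{G}(\mathbf{L})$, hence in $\mathsf{G}^c(\mathbf{L})$. One application of $(Cut)$ then yields $\mathsf{G}^c(\mathbf{L})\vdash\Gamma,\Gamma'\ra\Delta,\Delta'$. By Theorem~\ref{thm:soundness of g plus cut}, this sequent is valid in $\mathbb{M}_{\mathbf{L}}$.

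\textbf{Restriction to finite models.} Since $\mathbb{M}^{\mathbf{F}}_{\mathbf{L}}\subseteq\mathbb{M}_{\mathbf{L}}$, the sequent $\Gamma,\Gamma'\ra\Delta,\Delta'$ is in particular valid in every finite model of the class.

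\textbf{Completeness of the cut-free system.} Theorem~\ref{thm:completeness of g without cut} now gives $\mathsf{G}(\mathbf{L})\vdash\Gamma,\Gamma'\ra\Delta,\Delta'$, and this is exactly admissibility of cut. If one prefers to avoid appealing to soundness of the cut rule itself, the same conclusion follows from soundness of the cut-free system. Validity of both premises at every state of a finite model yields validity of the conclusion, by the classical case split on whether $\lambda$ holds at that state.

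\textbf{Main obstacle.} The only delicate point is that Theorem~\ref{thm:completeness of g without cut} genuinely produces a cut-free derivation. So I would check that its ingredients use no cut: Proposition~\ref{prop:construction model}, Lemma~\ref{lem:truth lemma}, and the proposition that $\Omega$-models lie in $\mathbb{M}^{\mathbf{F}}_{\mathbf{L}}$.

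In the truth lemma, the propositional cases need saturation together with the invertibility facts of Proposition~\ref{prop:invertibility of all logical rules in Gkn}. The $D_G$ case uses only $(D_K)$ and the weakening of Proposition~\ref{prop:weakening of k,d,t}.

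The frame conditions need care. For seriality of $R_{\{a\}}$ in $\mathbf{KD}_D$, the rule $(D_D)$ shows that $\Theta^{\flat_{\{a\}}}\ra$ is underivable, so it extends to a saturated sequent. For reflexivity in $\mathbf{KT}_D$, $(D_T)$ together with saturation gives $\Theta^{\flat G'}\subseteq\Theta$.

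Since all of these are established cut-free in the preceding section, the argument closes.
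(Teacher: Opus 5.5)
Your proposal is correct and is exactly the paper's argument: derive the conclusion in $\mathsf{G}^c(\mathbf{L})$, apply soundness (Theorem~\ref{thm:soundness of g plus cut}), restrict to $\mathbb{M}^{\mathbf{F}}_{\mathbf{L}}$, and conclude by cut-free completeness (Theorem~\ref{thm:completeness of g without cut}). Your check that the completeness proof is itself cut-free goes beyond what the paper spells out and holds up, with one small correction: the propositional cases of the truth lemma use the rules in the forward direction together with admissible weakening and contraction (e.g.\ $\vdash \beta,\Theta\ra\Pi$ with $\beta\wedge\gamma\in\Theta$ yields $\vdash\Theta\ra\Pi$), not invertibility.
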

\begin{proof}
    Let $\mathbf{L}\in \{ \mathbf{K}_D , \mathbf{KD}_D,\mathbf{KT}_D\}$.
    Given a sequent $\Gamma\ra\Delta$, such that $\mathsf{G}^{c}(\mathbf{L} )\vdash \Gamma\ra \Delta. $
    According to soundness in Theorem \ref{thm:soundness of g plus cut}, 
    we obtain that $\mathbb{M}_{\mathbf{L}} \models \Gamma\ra \Delta.$ It derives that $\mathbb{M}^{\mathbf{F}}_{\mathbf{L}} \models \Gamma\ra \Delta.$
Then, we obtain $\mathsf{G}(\mathbf{L} )\vdash \Gamma\ra \Delta $ from 
Theorem \ref{thm:completeness of g without cut}.
\end{proof}

\section{Main theorem of $\gkd$ and $\gkdd$}
\label{sec:main thm of gkd and gkdd}

Now, we are ready to show the main theorem of $\gkd$ and $\gkdd$.
The method we apply originates in \cite{pitts1992interpretation} for propositional  intuitionistic logic and was further developed by \cite{bilkova2007uniform} for modal logic.
We first fix a propositional variable $p$ and an agent symbol $a$, then 
define an algorithm to generate the interpolant formula $\mathcal{A}_{(p,a)}(\Gamma;\Delta)$ of a sequent $\Gamma\ra \Delta$.
This syntactic algorithm is defined 
in correspondence with  the proof-search on $\Gamma\ra \Delta$.
The generated interpolant formula $\mathcal{A}_{(p,a)}(\Gamma;\Delta)$ which does not contain $p$ and $a$, still maintains the desired derivability.
The idea is to
define the interpolant formula that reflects the operation to
separate and rewrite 
some parts of the proof-tree which are built on the occurrences of $p$ and $a$.


\begin{defn}
\label{dfn:Ap formula in Gkn}
Let $\mathbf{L}\in \{ \mathbf{K}_D , \mathbf{KD}_D\}$ 
      Let $\Gamma,\Delta$ be finite multi-sets of formulas, $p$ be a propositional variable, $a$ be an agent symbol.
          We say that a sequent $\Gamma \ra \Delta$ is  {\it critical} if $\Gamma$ and $\Delta$ contain only propositional variables, $\bot$ or outmost-boxed formulas.
      An $\mathcal{A}$-formula $\mathcal{A}_{(p,a)}(\Gamma;\Delta)$  is defined inductively as follows.

  \begin{scriptsize}
\[
\begin{array} 
{|c|c|c|c|}\hline & \Gamma\ra\Delta ~\text{is not critical and }~ \Gamma;\Delta\text{ matches} & \mathcal{A}_{(p,a)}(\Gamma;\Delta)\text{ equals} \\
\hline
~~1~~ & \Gamma',\alpha_1\wedge \alpha_2;\Delta & \mathcal{A}_{(p,a)}(\Gamma',\alpha_1,\alpha_2;\Delta) \\
2 & \Gamma;\alpha_1\wedge \alpha_2,\Delta' & \mathcal{A}_{(p,a)}(\Gamma;\alpha_1,\Delta')\land \mathcal{A}_{(p,a)}(\Gamma;\alpha_2,\Delta') \\

3 & \Gamma',\alpha_1\vee \alpha_2;\Delta & \mathcal{A}_{(p,a)}(\Gamma',\alpha_1;\Delta)\land \mathcal{A}_{(p,a)}(\Gamma',\alpha_2;\Delta) \\

~~4~~ & \Gamma;\alpha_1\vee \alpha_2,\Delta' & \mathcal{A}_{(p,a)}(\Gamma';\alpha_1,\alpha_2,\Delta') \\

~~5~~ & \Gamma',\neg \alpha;\Delta & \mathcal{A}_{(p,a)}(\Gamma';\alpha,\Delta) \\

6 & \Gamma;\neg \alpha,\Delta' & \mathcal{A}_{(p,a)}(\Gamma,\alpha;\Delta') \\

7 & \Gamma',\alpha_1\rightarrow \alpha_2;\Delta & \mathcal{A}_{(p,a)}(\Gamma';\Delta,\alpha_1)\land \mathcal{A}_p(\Gamma',\alpha_2;\Delta) \\
8 & \Gamma;\alpha_1\rightarrow \alpha_2,\Delta' & \mathcal{A}_{(p,a)}(\Gamma,\alpha_1; \Delta',\alpha_2) \\

\hline
\end{array}
\]

\[
\begin{array}
{|c|c|c|c|}
\hline 
& \Gamma\ra\Delta ~\text{is critical and }~ \Gamma;\Delta\text{ matches} & \mathcal{A}_{(p,a)}(\Gamma;\Delta)\text{ equals} \\
\hline

~~9~~ & \Gamma',p;\Delta',p & \top \\


~~10~~ &  \Gamma',  D_{\{a\}} \Gamma''  ; \Delta', D_{\{a\}} B ~\text{with}~   \mathsf{G}(\mathbf{L})\vdash  \Gamma'' \ra B&  \top \\


~~10' \dagger~~ &  \Gamma', D_{\{a\}} \Gamma'';\Delta ~\text{with}~ \gkdd\vdash \Gamma''\ra   &  \top \\


11\ddagger& ~~ \Phi, \overrightarrow{D_{G_m}\gamma_m } ; \overrightarrow{D_{H_n}\delta_n }, \Psi ~~  &  \mathsf{X} \\

\hline
\end{array}
\]
\end{scriptsize}

{\footnotesize $\dagger:$ The Line $10'$ is only applied in $\gkdd$.}
 
{\footnotesize 
$\ddagger:$ $\Phi$ and  $\Psi$ are multisets containing  propositional variables or $\bot$.
$\Phi\cup\overrightarrow{D_{G_m}\gamma_m } \cup \overrightarrow{D_{H_n}\delta_n }\cup \Psi$is not empty.}

The formula $\mathsf{X}$ is:
  \begin{scriptsize}
\[
\bigvee_{r\in\Phi/\{p\}}\neg r 
\vee 
\bigvee_{q\in\Psi/\{p\}}q
\vee
\bigvee_{\bot\in\Phi}\neg\bot
\vee
\bigvee_{\bot\in\Psi}\bot
\vee
\bigvee_{D_{G_j}\gamma_j \in \{\overrightarrow{D_{G_m}\gamma_m }\} 
   / \{\overrightarrow{D_{G_m}\gamma_m }\}^{\natural_{\ni a}}  }
\langle D_{G_j} \rangle
\mathcal{A}_{(p,a)} (    \{ \overrightarrow{D_{G_m\gamma_m}} \}^{\flat_{\subseteq G_j}} ;\emptyset) 
\]

\[
  \vee \bigvee_{D_{H_i}\delta_i \in \{\overrightarrow{D_{H_n}\delta_n } \}
  / \{\overrightarrow{D_{H_n}\delta_n } \}^{\natural_{\ni a}} }
  D_{H_i} \mathcal{A}_{(p,a)} (   \{\overrightarrow{D_{G_m}\gamma_m }  \}^{\flat_{\subseteq{H_i}}}; \delta_i   )
\]
\[
  \vee
     \bigvee_{D_{H_i}\delta_i \in  \{\overrightarrow{D_{H_n}\delta_n } \}^{\natural_{\ni a}}
     ~\text{with}~H_i \ne \{a\}}
  D_{H_i/\{a\}} \mathcal{A}_{(p,a)} (   \{\overrightarrow{D_{G_m}\gamma_m }  \}^{\flat_{\subseteq{H_i}}}; \delta_i   )
\]
  \end{scriptsize}

Recall that for any formula $\alpha$, $n\in\mathbb{N}$, $\overrightarrow{\alpha_n}$ stands for  $\alpha_1,\cdots, \alpha_n$, $\overrightarrow{D_{G_n}\alpha_n}$ stands for  formulas $D_{G_1}\alpha_1,\cdots, D_{G_n}\alpha_n$. 
 $\Gamma^{\natural_{\ni a}}= \{ D_G \alpha |  a\in G ~\text{and}~D_G \alpha\in \Gamma \}$, 
  $\Gamma^{\flat_{\subseteq G}} = \{   \alpha |  D_H\alpha\in \Gamma ~\text{for some}~H\subseteq G  \} $.
The formula $\mathcal{A}_{(p,a)}(\Gamma;\Delta)$ is defined
in the following procedure: at first, the lines $1-8$ are repeatedly applied until it reaches a critical sequent 
(the order does not matter, since
all propositional rules are height-preserving invertible by Proposition \ref{prop:invertibility of all logical rules in Gkn}). 
Next, we test whether it   matches the line $9$, $10$ (or $10'$ in $\gkdd$).
If no one is the case, the line $11$ is applied. We repeat the above procedure until $\Gamma;\Delta$ cannot  match any lines in the table, in this case $\mathcal{A}_{(p,a)}(\Gamma;\Delta)$ is defined as $\bot$.
If $X$ does not meet these conditions,  $X$ is $\bot$.
 Especially $\mathcal{A}_{(p,a)}(\emptyset;\emptyset)$ is defined as $\bot$.
\footnote{
The lines $10$ and $10'$  above are necessary to deal with case where
 the derivation occurring in the (iii) of the following main Theorem \ref{thm:main theorem of gkn}  ends with the rules 
$(D_{K})$ or $(D_{D})$ with the formula $D_{\{a\}}$ becoming principal. These cases significantly increase the computation time. However, we need these definitions to transform the formulas to an interpolant formula whose intended derivability does not rely on applying modal rules on the agent $a$, since $a$ like the propositional variable $p$ (that is similarly dealt with in the line 9) should not occur in the interpolant formula.}


We observe that for any ${\mathcal A}$-formulas defined in the right part, its weightiness always decrease when compare to its right part.
We can define a well-order relation of $\mathcal{A}$-formulas as follows:
   $  \mathcal{A}_{(p,a)} (\Gamma;\Delta) \prec  \mathcal{A}_{(p,a)} (\Gamma^\prime;\Delta^\prime )$ if and only if
   $\mathsf{wt}(\Gamma\ra \Delta)  \prec  \mathsf{wt}(\Gamma^\prime\ra \Delta^\prime)$.
Given the fact that all back proof-search in $\gkd$, $\gkdd$ always terminates (in Proposition \ref{prop:termination of proof search in GKn}) and decidability for the Line 11, we can see that 
such a formula can always be determined.

\end{defn}

\begin{ex}
Let $\Gamma$ be $D_{\{1\}} q, D_{\{3\}} p$, $\Delta$ be $D_{\{1,2\}} r$.

\[
\begin{array}
{lll}
\mathcal{A}_{(p,1)}(\Gamma;\Delta) &  =
\mathcal{A}_{(p,1)}(D_{\{1\}} q, D_{\{3\}} p; D_{\{1,2\}} r)
&   \\

  &  =
\langle D_{\{3\}} \rangle \mathcal{A}_{(p,1)} (p;\emptyset)
\lor
 \langle D_{\{2\}} \rangle\mathcal{A}_{(p,1)} ( q; r)
&  \text{from the Line 11} \\

  &  =
\langle D_{\{3\}} \rangle  \bot
\lor
 \langle D_{\{2\}} \rangle (\neg q\lor r)
&  \text{from the Line 11} \\





\end{array}
\]

\end{ex}


\begin{prop}
\label{prop:initial casein main thm in gkn}
Let $\mathbf{L}\in \{ \mathbf{K}_D , \mathbf{KD}_D\}$  Given any multi-sets $\Gamma,\Delta$ of formulas, propositional variables $p$ and $q$, agent symbol $a$  such that $p\neq q$.  We have:
  1, $\mathsf{G}(\mathbf{L})\vdash q\ra \mathcal{A}_{(p,a)}(\Gamma;\Delta,q)$;
2, $\mathsf{G}(\mathbf{L})\vdash \ra \mathcal{A}_{(p,a)}(\Gamma,q;\Delta),q$;
 3, $\mathsf{G}(\mathbf{L})\vdash \ra \mathcal{A}_{(p,a)}(q,\Gamma;\Delta,q)$.
\end{prop}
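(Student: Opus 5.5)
We prove the three claims simultaneously by induction on the well-order $\prec$ of $\mathcal{A}$-formulas, i.e.\ on $\mathsf{wt}(\Gamma\ra\Delta)$, following the computation of $\mathcal{A}_{(p,a)}$ in Definition \ref{dfn:Ap formula in Gkn}. The key observation is that the distinguished $q$ is an atom. It is therefore never principal in Lines 1--8, so it is carried along unchanged into every $\mathcal{A}$-formula produced by these lines. The base cases are thus the critical sequents, where $q$ sits among the propositional variables $\Phi$ (for claim 2) or $\Psi$ (for claim 1), or both (for claim 3).

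\emph{Non-critical step.} Suppose $\Gamma;\Delta$ matches one of Lines 1--8. Then $\mathcal{A}_{(p,a)}(\Gamma;\Delta,q)$ is either a single $\mathcal{A}$-formula $\mathcal{A}_{(p,a)}(\Gamma^\ast;\Delta^\ast,q)$ of smaller weight (Lines 1,4,5,6,8) or a conjunction of two such formulas (Lines 2,3,7). In the first case the induction hypothesis gives the required sequent directly. In the second case we apply the induction hypothesis to each conjunct and then use $(R\wedge)$. For claims 2 and 3, $q$ stays in the antecedent side of the argument, and the same argument goes through. The weakening of Proposition \ref{prop:weakening of k,d,t} is not even needed here.

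\emph{Critical step.} Now let $\Gamma;\Delta,q$ (respectively $\Gamma,q;\Delta$ or $q,\Gamma;\Delta,q$) be critical.
\begin{itemize}
\item If Line 9, 10 or $10'$ matches, the $\mathcal{A}$-formula is $\top$, and every claimed sequent follows by $(R\neg)$ applied to the initial sequent $\bot\ra$ (with $\top=\neg\bot$), followed by weakening.
\item Otherwise Line 11 applies and $\mathcal{A}$ is the disjunction $\mathsf{X}$. Since $q\neq p$, the literal $q$ is a disjunct of $\mathsf{X}$ when $q\in\Psi$, and $\neg q$ is a disjunct of $\mathsf{X}$ when $q\in\Phi$.
\begin{itemize}
\item Claim 1: $q\ra q$ is initial, and applying $(R\lor)$ repeatedly together with $(RW)$ yields $q\ra\mathsf{X}$.
\item Claim 2: from $q\ra q$ we obtain $\ra\neg q,q$ by $(R\neg)$, and then $\ra\mathsf{X},q$ by weakening and $(R\lor)$.
\item Claim 3: both $q$ and $\neg q$ are disjuncts, so $\ra q,\neg q$ gives $\ra\mathsf{X}$ by weakening and $(R\lor)$.
\end{itemize}
\item If instead the procedure returns $\bot$, then no line matches. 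This cannot happen here, since the sequent is nonempty and critical, so Line 11 always applies.
\end{itemize}

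The main obstacle I expect is bookkeeping in Line 11, namely making sure the disjunct $q$ (or $\neg q$) really survives. It is removed only if the relevant variable is $p$, which is excluded by $p\neq q$. I also have to check that when Lines 9/10/10$'$ fire first, the value is $\top$, which is trivially derivable. A secondary point is that weakening is admissible but not height-preserving. Since the induction is on weight, not on height, only admissibility is needed.
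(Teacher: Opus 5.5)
Your proof is correct. You induct along the computation of $\mathcal{A}_{(p,a)}$, and each step holds:
\begin{itemize}
\item The atom $q$ is never principal in Lines 1--8, so it reaches every critical leaf, and $(R\wedge)$ handles the conjunctive lines.
\item Lines 9, 10 and $10'$ yield the trivially derivable $\top$.
\item Otherwise Line 11 must fire, because the critical sequent is nonempty. Since $q\neq p$, it produces $q$ and/or $\neg q$ as a disjunct of $\mathsf{X}$.
\end{itemize}

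The paper states this proposition without any proof, so there is nothing to diverge from. Your argument is the standard Pitts/B{\'\i}lkov{\'a}-style one that the paper implicitly relies on. As a minor simplification, you can avoid weakening altogether, because the initial sequents $\Gamma,q\Rightarrow q,\Delta$ and $\bot,\Gamma\Rightarrow\Delta$ already carry arbitrary context.
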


\begin{thm}
\label{thm:main theorem of gkn}
Let $\mathbf{L}\in \{ \mathbf{K}_D , \mathbf{KD}_D\}$ 
    Let $\Gamma,\Delta$ be finite multi-sets of formulas. For an arbitrary propositional variable $p$ and  an arbitrary agent symbol $a$, there exists a formula $\mathcal{A}_{(p,a)}(\Gamma;\Delta)$ such that:
    \begin{enumerate}[(i)]
        \item  
            $\mathsf{V}( \mathcal{A}_{(p,a)}(\Gamma;\Delta)) \subseteq \mathsf{V}(\Gamma\cup \Delta)\backslash\{p\}    $,  $\mathsf{Agt}( \mathcal{A}_{(p,a)}(\Gamma;\Delta)) \subseteq \mathsf{Agt}(\Gamma\cup \Delta)\backslash\{a\}    $;
        \item $\mathsf{G}(\mathbf{L})\vdash    \Gamma , \mathcal{A}_{(p,a)}(\Gamma; \Delta) \ra \Delta$;
        \item given finite multi-sets $\Pi, \Lambda$ of formulas such that: 
            $p\notin \mathsf{V}(\Pi\cup\Lambda)$, $a\notin \mathsf{Agt}(\Pi\cup\Lambda)$ and $\mathsf{G}(\mathbf{L})\vdash  \Pi,\Gamma \ra \Delta, \Lambda$,
        then 
            $\mathsf{G}(\mathbf{L})\vdash  \Pi \ra  \mathcal{A}_{(p,a)}(\Gamma; \Delta) ,\Lambda$.
    \end{enumerate}
    
\end{thm}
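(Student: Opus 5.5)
We take $\mathcal{A}_{(p,a)}(\Gamma;\Delta)$ to be the formula of Definition~\ref{dfn:Ap formula in Gkn} and prove (i)--(iii) simultaneously by induction on the well-order $\prec$ of $\mathcal{A}$-formulas, i.e.\ on $\mathsf{wt}(\Gamma\ra\Delta)$. The argument follows the line of the definition that computes $\mathcal{A}_{(p,a)}(\Gamma;\Delta)$.

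Item (i) is a direct check. Lines 1--8 only decompose connectives, so the claim follows from the induction hypothesis. Lines 9, 10 and $10'$ yield $\top$, which contains no variables or agents. In Line 11, $p$ is excluded explicitly from the literal disjuncts $\neg r$ and $q$. Boxes $D_{G_j}$ and $D_{H_i}$ with $a\in G_j$ or $a\in H_i$ are either dropped or replaced by $D_{H_i/\{a\}}$; this group is nonempty because we require $H_i\neq\{a\}$. Every $\mathcal{A}$ inside a box is covered by the induction hypothesis.

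Item (ii) is proved by building a derivation along the same cases.
\begin{itemize}
\item Lines 1--8: apply the corresponding logical rule and use the induction hypothesis, with weakening (Proposition~\ref{prop:weakening of k,d,t}) to supply the missing side conjunct.
\item Lines 9, 10, $10'$: close the sequent with an initial sequent, $(D_K)$, or $(D_D)$ respectively, using the stated derivability side condition.
\item Line 11: split the disjunction $\mathsf{X}$ with $(L\lor)$.
  \begin{itemize}
  \item Literal disjuncts close by initial sequents.
  \item A disjunct $D_{H_i}\mathcal{A}(\ldots;\delta_i)$ is handled by $(D_K)$ with conclusion $D_{H_i}\delta_i$. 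The premises are the formulas of $\{\overrightarrow{D_{G_m}\gamma_m}\}^{\flat_{\subseteq H_i}}$ together with the $\mathcal{A}$-formula, and the induction hypothesis gives this premise.
  \item For $D_{H_i/\{a\}}$ we still obtain $D_{H_i}\delta_i$, since $H_i/\{a\}\subseteq H_i$ and $(D_K)$ allows principal antecedent boxes indexed by subgroups.
  \item A disjunct $\langle D_{G_j}\rangle\mathcal{A}$ is moved to the succedent as $D_{G_j}\neg\mathcal{A}$ and handled by $(D_K)$ with the induction hypothesis.
  \end{itemize}
  Throughout, the context restrictions of $(D_K)$ hold because the sequent is critical, and non-$G$-subformulas can be placed in $\Sigma$.
\end{itemize}

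Item (iii) is the main part. We induct additionally on the height of a derivation of $\Pi,\Gamma\ra\Delta,\Lambda$.
\begin{itemize}
\item While $\Gamma\ra\Delta$ is not critical, use height-preserving invertibility (Proposition~\ref{prop:invertibility of all logical rules in Gkn}) on the principal compound formula of $\Gamma$ or $\Delta$, apply the induction hypothesis, and recombine with $(R\land)$. For example, in Line 3 both premises give $\Pi\ra\mathcal{A}(\Gamma',\alpha_i;\Delta),\Lambda$.
\item Compound formulas in $\Pi$ or $\Lambda$ are handled by the rule applied to them, with the induction hypothesis on the premises.
\item When $\Gamma\ra\Delta$ is critical and matches Line 9, 10 or $10'$, the goal $\Pi\ra\top,\Lambda$ is trivially derivable.
\end{itemize}
Otherwise we analyse the last rule of a derivation of $\Pi,\Gamma\ra\Delta,\Lambda$, with $\Pi,\Lambda$ already decomposed.
\begin{itemize}
\item Initial sequent whose principal atom $q\neq p$ lies in $\Gamma$ and $\Delta$, or is shared between $\Gamma$ and $\Lambda$, or between $\Pi$ and $\Delta$: conclude by Proposition~\ref{prop:initial casein main thm in gkn}. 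An initial sequent with atom $p$ must lie entirely within $\Gamma$ and $\Delta$, since $p\notin\mathsf{V}(\Pi\cup\Lambda)$, so Line 9 applies.
\item $(D_K)$ with principal succedent $D_H\beta$ in $\Lambda$: then $a\notin H$. The principal antecedents split into $\Pi$-part $\Pi''$ and $\Gamma$-part, the latter being $\{\overrightarrow{D_{G_m}\gamma_m}\}^{\flat_{\subseteq H}}$ up to weakening. Write $\Pi'',\Gamma^{\flat}\ra\beta$ as $\Pi'',\Gamma^{\flat}\ra\emptyset,\beta$ and apply the induction hypothesis with $\beta$ moved to $\Lambda$. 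This gives $\Pi''\ra\mathcal{A}(\Gamma^{\flat};\emptyset),\beta$, hence $\neg\mathcal{A},\Pi''\ra\beta$. Apply $(D_K)$ with $D_H\neg\mathcal{A}$, choosing a suitable $G_j$ and weakening, to reach the disjunct $\langle D_{G_j}\rangle\mathcal{A}$.
\item $(D_K)$ with principal succedent $D_{H_i}\delta_i\in\Delta$: the induction hypothesis gives $\Pi''\ra\mathcal{A}(\ldots;\delta_i)$. If $a\notin H_i$, apply $(D_K)$ to get $D_{H_i}\mathcal{A}$. If $a\in H_i$, the principal $\Pi$-boxes avoid $a$, so they are indexed by subsets of $H_i/\{a\}$ and $(D_K)$ yields $D_{H_i/\{a\}}\mathcal{A}$. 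When $H_i=\{a\}$, the $\Pi$-part must be empty, so Line 10 would apply, which is excluded.
\item $(D_D)$: treat analogously via Line $10'$ or the $\langle\cdot\rangle$-disjuncts.
\end{itemize}

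We expect the main obstacle to be this last modal case. The difficulty is matching the principal antecedents exactly with $\{\ldots\}^{\flat_{\subseteq G_j}}$ and $\{\ldots\}^{\flat_{\subseteq H_i}}$, and verifying the $\Sigma$/$\Omega$ side conditions of $(D_K)$ when $\Pi$ contains $G$-subformulas. It also requires weakening in the absence of height preservation, so the induction must be on $\mathsf{wt}$ plus a fresh derivation rather than on height alone.
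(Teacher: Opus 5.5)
Your plan follows the paper's proof essentially step for step: induction on $\mathsf{wt}$, an inner induction on height for (iii), and a case split on the last rule. It breaks at the point you yourself flagged, the $(D_K)$ case whose principal succedent $D_H\beta$ lies in $\Lambda$.

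\textbf{The gap.} The induction hypothesis gives $\Pi''\Rightarrow\mathcal{A}_{(p,a)}(\{\overrightarrow{D_{G_m}\gamma_m}\}^{\flat_{\subseteq H}};\emptyset),\beta$. From this, $(D_K)$ with any principal $D_G\neg\mathcal{A}$, $G\subseteq H$, can only produce $\langle D_G\rangle\mathcal{A}_{(p,a)}(\{\ldots\}^{\flat_{\subseteq H}};\emptyset)$. The disjuncts of $\mathsf{X}$, however, are $\langle D_{G_j}\rangle\mathcal{A}_{(p,a)}(\{\ldots\}^{\flat_{\subseteq G_j}};\emptyset)$. Since $\{\ldots\}^{\flat_{\subseteq G_j}}\subseteq\{\ldots\}^{\flat_{\subseteq H}}$, the interpolant inside the disjunct is in general strictly stronger, and weakening cannot bridge this. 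A ``suitable $G_j$'' exists only when one principal $G_j$ contains all the others. The paper's case 1(a) makes the same silent identification.

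\textbf{Counterexample.} The failure is real. Take distinct agents $a,b,c$, $\Gamma=\{D_{\{b\}}p,\,D_{\{c\}}\neg p\}$ and $\Delta=\emptyset$.
Since $\mathcal{A}_{(p,a)}(p;\emptyset)=\mathcal{A}_{(p,a)}(\emptyset;p)=\bot$, Line 11 gives $\mathcal{A}_{(p,a)}(\Gamma;\emptyset)=\langle D_{\{b\}}\rangle\bot\vee\langle D_{\{c\}}\rangle\bot$.
Now let $\Pi=\emptyset$ and $\Lambda=\{D_{\{b,c\}}\bot\}$. The sequent $D_{\{b\}}p,D_{\{c\}}\neg p\Rightarrow D_{\{b,c\}}\bot$ is derivable by $(D_K)$ from $p,\neg p\Rightarrow\bot$.
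But $\Rightarrow\langle D_{\{b\}}\rangle\bot\vee\langle D_{\{c\}}\rangle\bot,\,D_{\{b,c\}}\bot$ is refuted in the one-point model where every $R_G$ is reflexive, and that model is serial. By soundness this sequent is derivable in neither $\mathbf{K}_D$ nor $\mathbf{KD}_D$. So (iii) fails for the formula you chose, and no argument for this case can be completed as written.

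\textbf{The missing idea.} The diamond disjuncts must be indexed by the groups that $(D_K)$ can combine, not by single $G_j$'s. Let them range over all unions $G$ of $a$-free groups among $G_1,\dots,G_m$, with disjunct $\langle D_G\rangle\mathcal{A}_{(p,a)}(\{\ldots\}^{\flat_{\subseteq G}};\emptyset)$.
\begin{itemize}
\item Item (i) is preserved.
\item Item (ii) goes through as before: $(D_K)$ with succedent $D_G\neg\mathcal{A}$ takes exactly the $G$-sub boxes as principal.
\item In the problematic case, choose $G=\bigcup\{G_j: G_j\subseteq H\}$. It satisfies $a\notin G\subseteq H$ and $\{\ldots\}^{\flat_{\subseteq G}}=\{\ldots\}^{\flat_{\subseteq H}}$. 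So $(D_K)$ applied to $\Pi'',\neg\mathcal{A}\Rightarrow\beta$ with principal $D_G\neg\mathcal{A}$, followed by $(R\neg)$, weakening and $(R\vee)$, yields the goal. When no $\Gamma$-box is principal, use the premise $\Pi''\Rightarrow\beta$ directly.
\end{itemize}
Your $\Delta$-principal cases and the $(D_D)$ case are fine as they stand. There the index group is $H_i$ taken from $\Delta$, or a singleton $\{b\}$, and it already yields exactly the right $\flat$-set.
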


\begin{proof}


The proof of (i) can be obtained by inspecting the table in Definition \ref{dfn:Ap formula in Gkn}.
The proof of  (ii) can be proved by induction on the weight of $\mathcal{A}_{(p,a)}(\Gamma;\Delta)$. We prove $\mathsf{G}(\mathbf{L})\vdash    \Gamma , \mathcal{A}_{(p,a)}(\Gamma; \Delta) \ra \Delta$ for each line of the table in  Definition \ref{dfn:Ap formula in Gkn}.
The cases of lines from 1 to 9 are straightforward. In the line 10 and $10''$, we can have:

\begin{center}
      \begin{scriptsize}
  \AxiomC{$\Gamma'' \ra B $}
\RightLabel{$(D_{K} )$}
\UnaryInfC{$  D_{\{a\}} \Gamma''\ra   D_{\{a\}} B$}
\RightLabel{$(Weakening)^\ast$}
\UnaryInfC{$ \Gamma', D_{\{a\}} \Gamma'',\top\ra  \Delta', D_{\{a\}} B$}
\DisplayProof
  \AxiomC{$\Gamma'' \ra  $}
\RightLabel{$(D_{D} )$}
\UnaryInfC{$  D_{\{a\}} \Gamma''\ra    $}
\RightLabel{$(Weakening)^\ast$}
\UnaryInfC{$ \Gamma', D_{\{a\}} \Gamma'',\top\ra  \Delta $}
\DisplayProof
      \end{scriptsize}
\end{center}

In the case of the line 11,
the idea is to first show the case of each conjunct, then combine them together by $(L\lor)$.
For each $r\in \Phi/\{p\}$, $ \mathsf{G}(\mathbf{L}) \vdash \Phi,\neg r,
      \overrightarrow{D_{G_m}\gamma_m }  \ra \overrightarrow{D_{H_n}\delta_n },\Psi $;
  for each $q\in \Psi/\{p\}$, $ \mathsf{G}(\mathbf{L}) \vdash \Phi,q,
       \overrightarrow{D_{G_m}\gamma_m }  \ra \overrightarrow{D_{H_n}\delta_n },\Psi $;
  for each $\bot\in \Psi$, $ \mathsf{G}(\mathbf{L}) \vdash \Phi,\bot,
       \overrightarrow{D_{G_m}\gamma_m }  \ra \overrightarrow{D_{H_n}\delta_n },\Psi $;
  for each $\bot\in \Phi$, $ \mathsf{G}(\mathbf{L}) \vdash \Phi,\neg \bot,
       \overrightarrow{D_{G_m}\gamma_m }  \ra \overrightarrow{D_{H_n}\delta_n },\Psi $.

\begin{itemize}

    \item for each  $D_{G_j}\gamma_j \in \{\overrightarrow{D_{G_m}\gamma_m }\}
    /\{\overrightarrow{D_{G_m}\gamma_m }\}^{\natural_{\ni a}}   $,

  \begin{scriptsize}
    \begin{center}
        \AxiomC{I.H. }
       \noLine
        \UnaryInfC{$\{ \overrightarrow{D_{G_m\gamma_m}} \}^{\flat_{\subseteq G_j}}
,\mathcal{A}_{(p,a)} (   \{ \overrightarrow{D_{G_m\gamma_m}} \}^{\flat_{\subseteq G_j}} ;\emptyset)  \ra $}
    \RightLabel{$(R\neg)$}
        \UnaryInfC{$\{ \overrightarrow{D_{G_m\gamma_m}} \}^{\flat_{\subseteq G_j}} \ra
\neg \mathcal{A}_{(p,a)} (   \{ \overrightarrow{D_{G_m\gamma_m}} \}^{\flat_{\subseteq G_j}};\emptyset)   $}
    \RightLabel{$(D_{K})$}
        \UnaryInfC{$ \Phi, \overrightarrow{D_{G_m\gamma_m}}   \ra
 D_{G_j} \neg \mathcal{A}_{(p,a)} (   \{ \overrightarrow{D_{G_m\gamma_m}} \}^{\flat_{\subseteq G_j}};\emptyset) , \overrightarrow{D_{H_n}\delta_n },\Psi   $}
    \RightLabel{$(L\neg)$}
        \UnaryInfC{$ \Phi,  \langle D_{G_j} \rangle  \mathcal{A}_{(p,a)} (   \{ \overrightarrow{D_{G_m\gamma_m}} \}^{\flat_{\subseteq G_j}};\emptyset) , \overrightarrow{D_{G_m\gamma_m}} ,  \ra
  \overrightarrow{D_{H_n}\delta_n },\Psi   $}
        \DisplayProof
    \end{center}
\end{scriptsize}


\item  
for each $D_{H_i}\delta_i \in \{\overrightarrow{D_{H_n}\delta_n } \}
  / \{\overrightarrow{D_{H_n}\delta_n } \}^{\natural_{\ni a}}$, by induction hypothesis, we obtain:
 

    \begin{center}
        \AxiomC{I.H. }
       \noLine
        \UnaryInfC{$  \{\overrightarrow{D_{G_m}\gamma_m }  \}^{\flat_{\subseteq{H_i}}}
,\mathcal{A}_{(p,a)} (    \{\overrightarrow{D_{G_m}\gamma_m }  \}^{\flat_{\subseteq{H_i}}} ; \delta_i ) \ra \delta_i  $}
    \RightLabel{$(D_{K})$}
        \UnaryInfC{$ \Phi,  \overrightarrow{D_{G_m}\gamma_m }   ,
        D_{H_i}\mathcal{A}_{(p,a)} (    \{\overrightarrow{D_{G_m}\gamma_m }  \}^{\flat_{\subseteq{H_i}}} ; \delta_i ) \ra  
        \overrightarrow{D_{H_n}\delta_n },\Psi   $}
 \DisplayProof
    \end{center}


\item 
for each $ D_{H_i}\delta_i \in  \{\overrightarrow{D_{H_n}\delta_n } \}^{\natural_{\ni a}}$ and $H_i \ne \{a\}$.
From induction hypothesis,     

\begin{center}
    \AxiomC{I.H. }
    \noLine
    \UnaryInfC{$  \{\overrightarrow{D_{G_m}\gamma_m }  \}^{\flat_{\subseteq{H_i}}}   ,  \mathcal{A}_{(p,a) }(  \{\overrightarrow{D_{G_m}\gamma_m }  \}^{\flat_{\subseteq{H_i}}}; \delta_i  )  \ra \delta_i $}
    \RightLabel{$(D_K)$}
    \UnaryInfC{$\Phi, \overrightarrow{D_{G_m}\gamma_m }    ,  D_{H_i/\{a\}}\mathcal{A}_{(p,a) }(  \{\overrightarrow{D_{G_m}\gamma_m }  \}^{\flat_{\subseteq{H_i}}}; 
    \delta_i  )  \ra   \overrightarrow{D_{H_n}\delta_n },\Psi $ } 
    \DisplayProof
\end{center}

\end{itemize}

Finally,  $(L\lor)$ is applied finitely many times, we obtain the following:

   \begin{tiny}
\[
 \mathsf{G}(\mathbf{L}) \vdash
\bigvee_{r\in\Phi/\{p\}}\neg r 
\vee 
\bigvee_{q\in\Psi/\{p\}}q
\vee
\bigvee_{\bot\in\Phi}\neg\bot
\vee
\bigvee_{\bot\in\Psi}\bot
\vee
\bigvee_{D_{G_j}\gamma_j \in \{\overrightarrow{D_{G_m}\gamma_m }\} 
   / \{\overrightarrow{D_{G_m}\gamma_m }\}^{\natural_{\ni a}}  }
\langle D_{G_j} \rangle
\mathcal{A}_{(p,a)} (    \{ \overrightarrow{D_{G_m\gamma_m}} \}^{\flat_{\subseteq G_j}} ;\emptyset)
\]
\[
  \vee \bigvee_{D_{H_i}\delta_i \in \{\overrightarrow{D_{H_n}\delta_n } \}
  / \{\overrightarrow{D_{H_n}\delta_n } \}^{\natural_{\ni a}} }
  D_{H_i} \mathcal{A}_{(p,a)} (   \{\overrightarrow{D_{G_m}\gamma_m }  \}^{\flat_{\subseteq{H_i}}}; \delta_i   )
\]
\[
  \vee
     \bigvee_{D_{H_i}\delta_i \in  \{\overrightarrow{D_{H_n}\delta_n } \}^{\natural_{\ni a}}
     ~\text{with}~H_i \ne \{a\}}
  D_{H_i/\{a\}} \mathcal{A}_{(p,a)} (   \{\overrightarrow{D_{G_m}\gamma_m }  \}^{\flat_{\subseteq{H_i}}}; \delta_i   ),
  \Phi, \overrightarrow{D_{G_m}\gamma_m }  \ra \overrightarrow{D_{H_n}\delta_n },\Psi
\] 
   \end{tiny}


In (iii), we consider the last rules applied in the derivation of  $  \Pi,\Gamma \ra \Delta, \Lambda$.
  When it is an initial sequent, we need the Proposition \ref{prop:initial casein main thm in gkn}.
When the last rule is among propositional logical rules, the proof is straightforward from the height-preserving invertibility in the Proposition \ref{prop:invertibility of all logical rules in Gkn}.

When the last rule is $(D_{K})$, we assume that the derivation is ended in the form of 
$\Pi, \Gamma \ra \Delta, \Lambda$ and 
 $p\notin \mathsf{V}(\Pi\cup\Lambda)$, $a\notin \mathsf{Agt}(\Pi\cup\Lambda)$.
The arguments are divided into the following cases:

\begin{enumerate}
    \item The right principal formula $D_H \alpha$  is in the multiset $\Lambda$.

 \begin{center}

\AxiomC{$\overrightarrow{\pi_m},\overrightarrow{\gamma_n} \ra \alpha$}
\RightLabel{$(D_{K}) $}
\UnaryInfC{$\Pi', \overrightarrow{D_{P_m}\pi_m},\Phi,\overrightarrow{D_{G'_j} \beta_j}, \overrightarrow{D_{G_n}\gamma_n} \ra
D_H \alpha, \Lambda', \overrightarrow{D_{H'_k} \delta_k}, \Psi $}
\DisplayProof

\end{center}
with $ \bigcup_{\overrightarrow{P_m}}\cup \bigcup_{\overrightarrow{G_n}} \subseteq H
$, 
where,
  $\Pi', \overrightarrow{D_{P_m}\pi_m}$ is $\Pi$, and $\Pi'$ contains propositional variables, $\bot$ and
        outmost-boxed formulas  not $H$-sub;
  $\Phi,\overrightarrow{D_{G'_j} \beta_j}, \overrightarrow{D_{G_n}\gamma_n}$ is $\Gamma$, where
 any group of $\overrightarrow{G'_j}$ is  not the subset of $H$,
      $\Phi$ contains only propositional variables or $\bot$;
  $D_H \alpha, \Lambda'$ is $\Lambda$, and $\Lambda'$ contains propositional variables, $\bot$ and outmost-boxed formula; 
  $\overrightarrow{D_{H'_k} \delta_k}, \Psi $ is $\Delta$, and $\Psi$ contains only propositional variables, $\bot$.
    In this case, $a\notin H$ since $ a\notin \mathsf{Agt}(\Lambda)$.

        \begin{enumerate}
        \item Some formulas in $\Gamma$ are principal, that is $\overrightarrow{\gamma_n}$ is not empty.
        In this case, we obtain that  $p$ and $a$  
        are no in formulas $\overrightarrow{\pi_m},\alpha$ from assumption. After applying induction 
        hypothesis, we obtain


        \begin{center}
        \AxiomC{I.H}
        \noLine
        \UnaryInfC{$\overrightarrow{\pi_m} \ra 
         \mathcal{A}_{(p,a) }(\overrightarrow{\gamma_n;}    \emptyset ),\alpha$}
            \RightLabel{$(L\neg) $}
           \UnaryInfC{$\overrightarrow{\pi_m} ,
        \neg \mathcal{A}_{(p,a) }(\overrightarrow{\gamma_n;}    \emptyset )\ra \alpha$}
                 \RightLabel{$(D_{K}) $}
           \UnaryInfC{$\Pi',\overrightarrow{D_{P_m}\pi_m} ,
        D_{G_i} \neg \mathcal{A}_{(p,a) }(\overrightarrow{\gamma_n;}    \emptyset )\ra D_H \alpha,\Lambda'$}
                 \RightLabel{$(R\neg) $}
        \UnaryInfC{$  \Pi',\overrightarrow{D_{P_m}\pi_m} \ra 
       \langle D_{G_i}\rangle 
    \mathcal{A}_{(p,a) }(\overrightarrow{\gamma_n};
    \emptyset ),
    D_H \alpha, \Lambda'$}

        \DisplayProof
        \end{center}
where $i$ is among $n$ and $a\notin D_{G_i}$, since $a\notin H$ and $D_{G_i}\subseteq H$.
Especially, $\overrightarrow{\gamma_n}$ is equal to $\{\overrightarrow{D_{G'_j} \beta_j},
 \overrightarrow{D_{G_n}\gamma_n} \}^{\flat_{ \subseteq H}}$.
 Then, we have   $\Pi',\overrightarrow{D_{P_m}\pi_m} \ra 
       \langle D_{G_i}\rangle 
    \mathcal{A}_{(p,a) }(\{\overrightarrow{D_{G'_j} \beta_j},
 \overrightarrow{D_{G_n}\gamma_n} \}^{\flat_{ \subseteq H}};
    \emptyset ),
    D_H \alpha, \Lambda'$. After applying $(R\lor)$ and weakening rules for many times, we obtain $\Pi',\overrightarrow{D_{P_m}\pi_m} \ra 
         \mathcal{A}_{(p,a) } ( \Phi,\overrightarrow{D_{G'_j} \beta_j}, \overrightarrow{D_{G_n}\gamma_n} ;
    \overrightarrow{D_{H'_k} \delta_k}, \Psi ),
    D_H \alpha, \Lambda'$.


         \item All formulas in $\Gamma$ are not principal.  That is  $\overrightarrow{\gamma_n}$ is  empty and  principal formulas are all in $\Pi$ and $\Lambda$.
         Since $\Pi\ra \Lambda$ is derivable from assumption, we can derive the desired result by applying a right weakening rule.
    \end{enumerate}

        \item The right principal formula $D_H \alpha$  is in the multiset $\Delta$, and $a\notin H$.

       \begin{center}
    
\AxiomC{$\overrightarrow{\pi_m},\overrightarrow{\gamma_n} \ra \alpha$}
\RightLabel{$(D_{K}) $}
\UnaryInfC{$\Pi', \overrightarrow{D_{P_m}\pi_m},\Phi,\overrightarrow{D_{G'_j} \beta_j}, \overrightarrow{D_{G_n}\gamma_n} \ra D_H \alpha,\overrightarrow{D_{H'_k} \delta_k}, \Psi, \Lambda $}
\DisplayProof

\end{center}
with $ \bigcup_{\overrightarrow{P_m}}\cup \bigcup_{\overrightarrow{G_n}} \subseteq H
$, and any $G'_j \not\subseteq H$, where
 $\Pi', \overrightarrow{D_{P_m}\pi_m}$ is $\Pi$, and $\Pi'$ contains propositional variables, $\bot$ and
        outmost-boxed formulas  not $H$-sub.  For any group of $\overrightarrow{P_m}$, it is a subset of $H/\{a\} $ because $a$ cannot occur in $\Pi$;
  $\Phi,\overrightarrow{D_{G'_j} \beta_j}, \overrightarrow{D_{G_n}\gamma_n}$ is $\Gamma$, where
 any group of $\overrightarrow{G'_j}$ is  not the subset of $H$,
      $\Phi$ contains only propositional variables or $\bot$;
   $D_H \alpha, \overrightarrow{D_{H'_k} \delta_k}, \Psi $ is $\Delta$, and $\Psi$ contains only propositional variables, $\bot$.
From assumption, $p$ and $a$ are not in $\overrightarrow{\pi_m}$. Induction hypothesis gives:

\begin{center}
\AxiomC{I.H.}
\noLine
\RightLabel{}
    \UnaryInfC{$\overrightarrow{\pi_m} \ra  \mathcal{A}_{(p,a) }(\overrightarrow{\gamma_n};\alpha)$ }
\RightLabel{$(D_{K}) $}
\UnaryInfC{$\Pi', \overrightarrow{D_{P_m}\pi_m}  \ra D_{H} 
\mathcal{A}_{(p,a) }(\overrightarrow{\gamma_n};\alpha) ,
  \Lambda $}
\DisplayProof
\end{center}

Then, $\overrightarrow{\gamma_n}$ is equal to $\{\overrightarrow{D_{G'_j} \beta_j},
 \overrightarrow{D_{G_n}\gamma_n} \}^{\flat_{ \subseteq H}}$. We have that 
 $\Pi', \overrightarrow{D_{P_m}\pi_m}  \ra D_{H} \mathcal{A}_{(p,a) }(\{\overrightarrow{D_{G'_j} \beta_j},
 \overrightarrow{D_{G_n}\gamma_n} \}^{\flat_{ \subseteq H}};\alpha), \Lambda $. Then, we apply $(R\lor)$ rules many times.


        \item The right principal formula $D_H \alpha$ is in the multiset $\Delta$, and $a\in H$.

\begin{center}
    
\AxiomC{$\overrightarrow{\pi_m},\overrightarrow{\gamma_n} \ra \alpha$}
\RightLabel{$(D_{K}) $}
\UnaryInfC{$\Pi', \overrightarrow{D_{P_m}\pi_m},\Phi,\overrightarrow{D_{G'_j} \beta_j}, \overrightarrow{D_{G_n}\gamma_n} \ra D_H \alpha,\overrightarrow{D_{H'_k} \delta_k}, \Psi, \Lambda $}
\DisplayProof

\end{center}


with $ \bigcup_{\overrightarrow{P_m}}\cup \bigcup_{\overrightarrow{G_n}} \subseteq H
$, 
where $\Pi', \overrightarrow{D_{P_m}\pi_m}$ is $\Pi$, and $\Pi'$ contains propositional variables, $\bot$ and
        outmost-boxed formulas  not $H$-sub.  For any group of $\overrightarrow{P_m}$, it is a subset of $H/\{a\} $ because $a$ cannot occur in $\Pi$;
      $\Phi,\overrightarrow{D_{G'_j} \beta_j}, \overrightarrow{D_{G_n}\gamma_n}$ is $\Gamma$, where
 any group of $\overrightarrow{G'_j}$ is  not the subset of $H$,
      $\Phi$ contains only propositional variables or $\bot$;
  $D_H \alpha, \overrightarrow{D_{H'_k} \delta_k}, \Psi $ is $\Delta$, and $\Psi$ contains only propositional variables, $\bot$.
Depending on whether $H$ is equal to $\{a\}$, we distinguish the following subcases.

\begin{enumerate}
    \item  When $H\ne \{a\}$,   $p$ and $a$  
        does not occur in formulas $\overrightarrow{\pi_m}$ from assumption. Then, according to induction hypothesis, 

\begin{center}
\AxiomC{I.H.}
\noLine
\RightLabel{}
    \UnaryInfC{$\overrightarrow{\pi_m} \ra  \mathcal{A}_{(p,a) }(\overrightarrow{\gamma_n};\alpha$) }
\RightLabel{$(D_{K}) $}
\UnaryInfC{$\Pi', \overrightarrow{D_{P_m}\pi_m}  \ra D_{H/\{a\}} 
\mathcal{A}_{(p,a) }(\overrightarrow{\gamma_n};\alpha) ,
  \Lambda $}
\DisplayProof
\end{center}

where $ \bigcup_{\overrightarrow{P_m}} \subseteq H/\{a\}$, since $a\notin \mathsf{Agt}(\Pi)$ as we have mentioned above. Also, 
$\overrightarrow{\gamma_n}$ is equal to $\{\overrightarrow{D_{G'_j} \beta_j},
 \overrightarrow{D_{G_n}\gamma_n} \}^{\flat_{ \subseteq H}}$.
 Then, we have   $\Pi',\overrightarrow{D_{P_m}\pi_m} \ra 
     D_{H/\{a\}}
    \mathcal{A}_{(p,a) }(\{\overrightarrow{D_{G'_j} \beta_j},
 \overrightarrow{D_{G_n}\gamma_n} \}^{\flat_{ \subseteq H}};
    \alpha ), \Lambda$. Next, we apply $(R\lor)$ many times to obtain the desired result.

\item
When $H= \{a\}$, $\overrightarrow{\pi}$ must be empty. The derivation is ended as: 

 \begin{center}
    
\AxiomC{$ \overrightarrow{\gamma_n} \ra \alpha$}
\RightLabel{$(D_{K}) $}
\UnaryInfC{$\Pi , \Phi,\overrightarrow{D_{G_j} \beta_j}, \overrightarrow{D_{\{a\}}\gamma_n} 
\ra D_{\{a\}} \alpha,\overrightarrow{D_{H'_k} \delta_k}, \Psi,  \Lambda $}
\DisplayProof

\end{center}
Since $ \overrightarrow{\gamma_n} \ra \alpha$ is derivable, $\mathcal{A}_{(p,a) }(\Phi,\overrightarrow{D_{G_j} \beta_j}, \overrightarrow{D_{\{a\}}\gamma_n} 
; D_{\{a\}} \alpha,\overrightarrow{D_{H'_k} \delta_k}, \Psi)$  is $\top$
from  the definition \ref{dfn:Ap formula in Gkn} (the line 10).
  It is obvious that $\mathsf{G}(\mathbf{L}) \vdash \Pi\ra \top, \Lambda.$
\end{enumerate}


       \end{enumerate}

The last part to examine the conditions of (iii) is 
 the case  of $(D_D)$ of $\gkdd$. When the derivation of $\Pi,\Gamma\ra \Delta,\Lambda$ is ended with:

\begin{center}
    
\AxiomC{$\overrightarrow{\pi_m},\overrightarrow{\gamma_n} \ra $}
\RightLabel{$(D_{D}) $}
\UnaryInfC{$\Pi', D_{\{b\}}\overrightarrow{\pi_m},\Phi,\overrightarrow{D_{G'_j} \beta_j}, D_{\{b\}}\overrightarrow{\gamma_n} \ra 
\Delta, \Lambda$
}
\DisplayProof
\end{center}
where  $\Pi', D_{\{b\}}\overrightarrow{\pi_m}$ is $\Pi$, and $\Pi'$ contains propositional variables, $\bot$ and
        outmost-boxed formulas  not $\{b\}$-sub;   
  $\Phi,\overrightarrow{D_{G'_j} \beta_j}, D_{\{b\}}\overrightarrow{\gamma_n}$ is $\Gamma$, where
 any group of $\overrightarrow{G'_j}$ is  not the subset of $\{b\}$;
      $\Phi$ contains only propositional variables or $\bot$.
Depending on whether $b$ is equal to $a$, we consider the following two cases:
\begin{enumerate}
    \item When $b\ne a$, we distinguish two subcases:
    \begin{enumerate}
    
        \item If $\overrightarrow{\gamma_n}$ is not empty,        
     $p$ and $a$ do not occur in $\overrightarrow{\pi_m}$ according to the assumption.
    Then, by induction hypothesis, we obtain:

    \begin{center}
    
\AxiomC{ $\overrightarrow{\pi_m} \ra 
         \mathcal{A}_{(p,a) }(\overrightarrow{\gamma_n;}    \emptyset ) $}
            \RightLabel{$(L\neg) $}
           \UnaryInfC{$\overrightarrow{\pi_m} ,
        \neg \mathcal{A}_{(p,a) }(\overrightarrow{\gamma_n;}    \emptyset )\ra $}
                 \RightLabel{$(D_{K}) $}
           \UnaryInfC{$\Pi',D_{\{b\}}\overrightarrow{\pi_m} ,
        D_{\{b\}} \neg \mathcal{A}_{(p,a) }(\overrightarrow{\gamma_n;}    \emptyset )\ra \Lambda$}
                 \RightLabel{$(R\neg) $}
\RightLabel{$(R\neg) $}
\UnaryInfC{$\Pi',D_{\{b\}}\overrightarrow{\pi_m} 
       \ra  \langle D_{\{b\}}\rangle  \mathcal{A}_{(p,a) }(\overrightarrow{\gamma_n;}    \emptyset ) ,\Lambda$}
 
\DisplayProof
\end{center}

    Then, $\overrightarrow{\gamma_n}$ is equal to $ \{\overrightarrow{D_{G'_j} \beta_j},
 \ \overrightarrow{D_{\{b\}}\gamma_n}\}^{\flat_{ \subseteq \{b\}}}$.
That is, $\Pi', D_{\{b\}}\overrightarrow{\pi_m}, \ra \langle D_{\{b\}}\rangle  \mathcal{A}_{(p,a) }(\{\overrightarrow{D_{G'_j} \beta_j},
 \ \overrightarrow{D_{\{b\}}\gamma_n}\}^{\flat_{ \subseteq \{b\}}} ;    \emptyset ),
  \Lambda$ is derivable in $\gkdd$.
  At last, we apply $(R\lor)$ many times, we obtain that 
  $\Pi', D_{\{b\}}\overrightarrow{\pi_m}, \ra   \mathcal{A}_{(p,a) }( \Phi,\overrightarrow{D_{G'_j} \beta_j}, D_{\{b\}}\overrightarrow{\gamma_n}; \Delta ),  \Lambda$
    
 \item If $\overrightarrow{\gamma_n}$ is empty, then we obtain:

 \begin{center}
    
\AxiomC{$\overrightarrow{\pi_m}  \ra $}
\RightLabel{$(D_{D}) $}
\UnaryInfC{$\Pi', D_{\{b\}}\overrightarrow{\pi_m}  \ra 
 \Lambda$
}
\DisplayProof
\end{center}
Then, we apply $(RW)$ to obtain the desired result.

     \end{enumerate}

    \item When $b=a$, $\overrightarrow{\pi_m} $ is empty since $a$ does not occur in $\Pi$.
    It implies that $\overrightarrow{\gamma_n} \ra $ is derivable. According to  the definition \ref{dfn:Ap formula in Gkn} (the line $10'$ ), 
    $\mathcal{A}_{(p,a) }( \Phi,\overrightarrow{D_{G'_j} \beta_j}, D_{\{b\}}\overrightarrow{\gamma_n}; \Delta )$ is $\top$.
It is evident that     $\gkdd\vdash \Pi\ra \top,\Lambda$.\qedhere
\end{enumerate}
\end{proof}

\section{UIP in Logic $\mathbf{KT}_D$}
\label{sec:T}

\subsection{Proof-theoretic properties of sequent calculus}

The possible loop in derivation of $\gktd$ will bring difficulties in defining $\mathcal{A}$-formula.
In the following, a sequent calculus with
built-in loop-check mechanism will be presented.
This sequent is an expansion of single modal calculus from   \cite{bilkova2007uniform}, which is 
 inspired by the work from
\cite{Heuerding1996}.
A {\bf T}-sequent, denoted by $\Sigma|\Gamma \ra \Delta$ is obtained from adding a finite multiset $\Sigma$  into a sequent $\Gamma\ra \Delta$, where $\Sigma$ containing only outmost-boxed formulas.

\begin{table}[htb]
\caption{System $\gktdplus$.}
\label{table:gktnplus}
\begin{footnotesize}
\hrule
\begin{tabular}{ll}

\multicolumn{2}{c}{Sequent Calculus $\gktdplus$: } \\


{ \bf Initial Sequents}   & $\Sigma|\Gamma ,p \ra p, \Delta$\hspace{15pt} $\Sigma|\bot, \Gamma\ra \Delta$  \\ 
 \\







{\bf Logical Rules} & 

\AxiomC{$\Sigma|\Gamma \ra \Delta ,\alpha_1$}
\AxiomC{$\Sigma|\Gamma \ra \Delta ,\alpha_2$}
\RightLabel{\scriptsize $(R\wedge )$}
\BinaryInfC{$\Sigma|\Gamma \ra  \Delta ,\alpha_1\wedge \alpha_2$}
\DisplayProof

\AxiomC{$\Sigma|\alpha_1, \alpha_2,\Gamma \ra \Delta$}
\RightLabel{\scriptsize $(L\wedge )$}
\UnaryInfC{$\Sigma|\alpha_1\wedge \alpha_2,\Gamma \ra \Delta$}
\DisplayProof
\\

\,  &

\AxiomC{$\Sigma|\Gamma \ra \Delta, \alpha_1, \alpha_2$}
\RightLabel{\scriptsize $(R\lor )$}
\UnaryInfC{$\Sigma|\Gamma \ra \Delta,\alpha_1\lor \alpha_2$}
\DisplayProof

\AxiomC{$\Sigma|\alpha_1,\Gamma \ra \Delta$}
\AxiomC{$\Sigma|\alpha_2,\Gamma \ra \Delta$}
\RightLabel{\scriptsize $(L\lor )$}
\BinaryInfC{$\Sigma|\alpha_1\lor \alpha_2, \Gamma \ra \Delta$}
\DisplayProof

\\

\,  &

\AxiomC{$\Sigma|\alpha_1,\Gamma \ra \Delta, \alpha_2$}
\RightLabel{\scriptsize $(R\rightarrow )$}
\UnaryInfC{$\Sigma|\Gamma \ra \Delta,\alpha_1\rightarrow \alpha_2$}
\DisplayProof

\AxiomC{$\Sigma|\Gamma \ra \Delta, \alpha_1$}
\AxiomC{$\Sigma|\alpha_2,\Gamma \ra \Delta$}
\RightLabel{\scriptsize $(L\rightarrow )$}
\BinaryInfC{$\Sigma|\alpha_1\rightarrow \alpha_2, \Gamma \ra \Delta$}
\DisplayProof

\\

\,  &
\AxiomC{$\Sigma|\alpha, \Gamma \ra  \Delta$}
\RightLabel{\scriptsize $(R\neg )$}
\UnaryInfC{$\Sigma|\Gamma \ra \Delta,\neg \alpha$}
\DisplayProof

\AxiomC{$\Sigma|\Gamma \ra \Delta, \alpha$}
\RightLabel{\scriptsize $(L\neg )$}
\UnaryInfC{$\Sigma|\neg \alpha, \Gamma \ra  \Delta$}
\DisplayProof
\\




   


{\bf Modal Rule}   &

\AxiomC{$\emptyset|\alpha_1,\ldots,\alpha_n \ra \beta  $ }
\RightLabel{{\scriptsize $(D_{K}^+)$}$\dagger$  for any $i$ $(0\leq i \leq n)$, $G_i\subseteq G$  }
\UnaryInfC{$\Sigma,   D_{G_1}\alpha_1,\ldots,D_{G_n}\alpha_n |\Pi  \ra D_G \beta, \Omega$ }
\DisplayProof
\\
 \\

\, & 
\AxiomC{$ D_G \alpha, \Sigma|  \Gamma, \alpha \ra \Delta $}
\RightLabel{\scriptsize $(D_{T}^+)$}
\UnaryInfC{$\Sigma |\Gamma, D_G \alpha    \ra \Delta$}
\DisplayProof \\



\multicolumn{2}{l}{ {\footnotesize $\dagger$: $\Sigma$ contains only outmost-boxed formulas which are not $G$-sub formulas, $\Pi$ contains only propositional variables, $\bot$,  $\Omega$ contains only  }
   } \\
   
\multicolumn{2}{l}{ {\footnotesize  propositional variables, $\bot$ or any outmost-boxed formulas. }
 }\\
\hline

\end{tabular}
\end{footnotesize}
\end{table}






    

    In  $\gktdplus$, we say that the   multisets  in $\Sigma,\Gamma,\Delta, \Omega$ and $\Pi$ are 
     {\it context} in all rules. A formula (or multiset) is called {\it principal} in a rule if it is not in the context.
The definition of derivation, (height-preserving) admissibility and (height-preserving) invertibility are defined similarly as $\gktd$.

\begin{defn}
    Let $\alpha$ be a formula,  $\mathsf{b} (\alpha)$ be the number of outmost boxed subformulas in  $A$. Given a {\it set} $\Gamma$, $\mathsf{b} (\Gamma)$ denotes the sum of all $\mathsf{b} (\alpha)$ for  $\alpha\in \Gamma$. 
Given multi-sets $\Gamma,\Delta,\Gamma^\prime,\Delta^\prime$ of formulas:
    $\langle\mathsf{b}(\Gamma),\mathsf{wt}(\Delta)\rangle <
\langle\mathsf{b}(\Gamma^\prime),\mathsf{wt}(\Delta^\prime)\rangle$
denotes a lexicographical order on a pair of natural number. 
We define a well-ordered relation of {\bf T}-sequent:
$(\Sigma|\Gamma\ra \Delta)\prec (\Sigma^\prime|\Gamma^\prime\ra \Delta^\prime) $ if and only if 
$\langle \mathsf{b}( \Sigma,\Gamma, \Delta),\mathsf{wt}(\Gamma,\Delta)\rangle    < 
\langle \mathsf{b}( \Sigma^\prime,\Gamma^\prime, \Delta^\prime),\mathsf{wt}(\Gamma^\prime,\Delta^\prime)\rangle$.     
\end{defn}

\begin{prop}
\label{prop:gktplus terminate}
A backward proof-searching in $\gktdplus$ always terminates.
\end{prop}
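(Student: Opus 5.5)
The plan is to show that every rule of $\gktdplus$, read bottom-up, sends a conclusion to premises that are strictly smaller in the well-ordering $\prec$ on {\bf T}-sequents just defined. Since $\prec$ is the lexicographic order on pairs of natural numbers, and hence well-founded, every branch of a backward proof-search is finite. Each rule has at most two premises, so by K\"onig's lemma the whole search tree is finite and the search terminates. One caveat first: for the measure to behave, $\mathsf{b}$ has to be read as counting \emph{distinct} outmost-boxed subformulas, since the definition speaks of the set $\Gamma$. I therefore take $\mathsf{b}(\Sigma,\Gamma,\Delta)$ to be the number of boxed formulas in $\mathsf{Sub}(\Gamma\cup\Delta)$ that do not yet occur in $\Sigma$. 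This is the quantity that measures how many $(D_T^+)$-steps remain available. If necessary I would also restrict $(D_T^+)$ so that $D_G\alpha\notin\Sigma$, as in B\'{\i}lkov\'a's loop-check.

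The logical rules and initial sequents come first. For every propositional rule, $\Sigma$ is unchanged and the premise's formulas are subformulas of the conclusion's. So the first component does not increase, and $\mathsf{wt}(\Gamma,\Delta)$ strictly drops, because the principal connective disappears. This gives a lexicographic decrease.

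The modal rule $(D_K^+)$ is handled next. Its premise is $\emptyset|\alpha_1,\ldots,\alpha_n\ra\beta$. Every boxed subformula of the $\alpha_i$ and of $\beta$ is a proper boxed subformula of the conclusion, and the principal box $D_G\beta$ is lost from the succedent. So the set of boxed subformulas is a proper subset of the conclusion's, even after $\Sigma$ is reset to $\emptyset$. Here I expect to need care: the count must be taken relative to the \emph{conclusion's} $\Sigma$. Concretely, I would check that any boxed formula recorded in the conclusion's $\Sigma$ can reappear in the premise only as a strict subformula of $\beta$ or of some $\alpha_i$. If this comparison is delicate, a fallback is to measure with the modal depth of $\Gamma,\Delta$, which strictly drops under $(D_K^+)$, placed as an outermost component before the other two.

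The main obstacle is $(D_T^+)$. Going up, the principal $D_G\alpha$ moves from $\Gamma$ into $\Sigma$ and $\alpha$ is added to $\Gamma$, so $\mathsf{wt}(\Gamma,\Delta)$ only drops by one. The key point is that the number of boxed subformulas not yet in $\Sigma$ strictly decreases, because $D_G\alpha$ is now recorded in $\Sigma$. With the loop-check side condition, $D_G\alpha$ cannot be unpacked again while $\Sigma$ persists, and $\Sigma$ persists through all logical rules. So only finitely many $(D_T^+)$-steps can occur between consecutive $(D_K^+)$-steps, and those drop the modal depth. Combining the three cases under the (possibly augmented) lexicographic measure gives termination.
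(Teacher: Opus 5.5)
There is a genuine gap in the $(D_K^+)$ case, caused by your choice of measure. In the conclusion of $(D_K^+)$ the left principal formulas $D_{G_1}\alpha_1,\ldots,D_{G_n}\alpha_n$ sit to the \emph{left} of the bar, in the $\Sigma$-zone. The zone right of the bar, $\Pi$, holds only atoms and $\bot$. The premise then releases $\alpha_1,\ldots,\alpha_n$ into the $\Gamma$-zone. Your first component counts only boxed formulas of $\mathsf{Sub}(\Gamma\cup\Delta)$, and the weight counts only $\Gamma,\Delta$. So neither sees what is stored in $\Sigma$, and both can increase.

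For the instance with conclusion $D_{\{a\}}D_{\{b\}}q\,|\;\ra D_{\{a\}}r$ and premise $\emptyset\,|\,D_{\{b\}}q\ra r$, your pair goes from $\langle 1,2\rangle$ to $\langle 1,3\rangle$. The modal-depth fallback fails for the same reason. The modal depth of $\Gamma,\Delta$ is $1$ on both sides here, and with $D_{\{a\}}D_{\{b\}}D_{\{c\}}q$ in $\Sigma$ instead it rises from $1$ to $2$. Your justification (``every boxed subformula of the $\alpha_i$ is a boxed subformula of the conclusion'') holds only if $\Sigma$'s subformulas are counted, which your definition excludes. The check you propose, about boxed formulas of $\Sigma$ reappearing, looks at the wrong thing. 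What enters the premise are the boxed subformulas of the bodies $\alpha_i$, which were never counted in the conclusion.

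The repair is to let the first component see $\Sigma$. Take the number of \emph{distinct} boxed formulas in $\mathsf{Sub}(\Sigma\cup\Gamma\cup\Delta)$. This is the paper's $\mathsf{b}(\Sigma,\Gamma,\Delta)$ read as a set count, and your remark that a literal sum would grow under $(D_T^+)$ is correct. Keep $\mathsf{wt}(\Gamma,\Delta)$, which excludes $\Sigma$, as the second component. Then the three cases go through:
\begin{itemize}
\item $(D_K^+)$ strictly lowers the first component. The heaviest of $D_G\beta,D_{G_1}\alpha_1,\ldots,D_{G_n}\alpha_n$ cannot be a subformula of the strictly lighter premise formulas.
\item Logical rules do not raise the first component and strictly lower the weight.
\item $(D_T^+)$ leaves the set unchanged, since $\alpha\in\mathsf{Sub}(D_G\alpha)$ and $D_G\alpha$ stays in $\Sigma$. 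It lowers the weight by exactly one.
\end{itemize}
This is the paper's argument. It also shows that your ``not yet in $\Sigma$'' device and the side condition $D_G\alpha\notin\Sigma$ are unnecessary. A drop of one is a strict drop, and $(D_T^+)$ cannot re-fire on that occurrence because it has left the $\Gamma$-zone. Restricting the rule would also change the calculus whose termination is claimed. Alternatively, modal depth of $\Sigma\cup\Gamma\cup\Delta$ followed by $\mathsf{wt}(\Gamma,\Delta)$ works as well.
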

\begin{proof}
   Consider a rule $(\circ)$ in $\gktdplus$ as

\begin{center}
    \AxiomC{$\Sigma|\Gamma\ra \Delta$}
        \AxiomC{$\dots$}
    \RightLabel{\scriptsize $(\circ)$}
    \BinaryInfC{$\Sigma^\prime|\Gamma^\prime\ra \Delta^\prime $}
    \DisplayProof
\end{center}
We show that   for any premise $\Sigma|\Gamma\ra \Delta$ and the conclusion $\Sigma^\prime|\Gamma^\prime\ra \Delta^\prime $, $(\Sigma|\Gamma\ra \Delta)\prec (\Sigma^\prime|\Gamma^\prime\ra \Delta^\prime) $. 
    Let  $(\circ)$ be an arbitrary logical rule.  We have $\mathsf{b}( \Sigma ,\Gamma , \Delta )=\mathsf{b}( \Sigma^\prime,\Gamma^\prime, \Delta^\prime)$, however $\mathsf{wt}(\Gamma,\Delta)<\mathsf{wt}(\Gamma^\prime,\Delta^\prime)$.
   Let  $(\circ)$ be   $(D_{T}^+)$. We still have   $\mathsf{b}( \Sigma ,\Gamma , \Delta )=\mathsf{b}( \Sigma^\prime,\Gamma^\prime, \Delta^\prime)$, and $\mathsf{wt}(\Gamma,\Delta)<\mathsf{wt}(\Gamma^\prime,\Delta^\prime)$. 
 Let  $(\circ)$ be $(D_{K}^+)$. We have $\mathsf{b}( \Sigma ,\Gamma , \Delta )<\mathsf{b}( \Sigma^\prime,\Gamma^\prime, \Delta^\prime)$.
\end{proof}

\begin{prop} The following weakening rules are  admissible in $\gktdplus$.

    \begin{center}
\begin{tabular}{lll} 

\AxiomC{$\Sigma|\Gamma \ra \Delta$}
\RightLabel{\scriptsize $(RW)$}
\UnaryInfC{$\Sigma|\Gamma \ra \Delta , \alpha$}
\DisplayProof

   & 

\AxiomC{$\Sigma|\Gamma \ra \Delta$}
\RightLabel{\scriptsize $(LW)$}
\UnaryInfC{$\Sigma|\alpha,\Gamma \ra \Delta$}
\DisplayProof
&
\AxiomC{$\Sigma|\Gamma \ra \Delta$}
\RightLabel{\scriptsize $(LW^+)$}
\UnaryInfC{$\Sigma,D_G  \alpha|\Gamma \ra \Delta$}
\DisplayProof

\end{tabular}
\end{center}
\end{prop}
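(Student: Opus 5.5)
The plan is to prove all three weakening rules simultaneously, by induction on the height $n$ of the derivation of the premise $\Sigma|\Gamma \ra \Delta$ in $\gktdplus$, with a subsidiary induction on the weight of the weakening formula $\alpha$. The subsidiary induction is needed because, as in Proposition \ref{prop:weakening of k,d,t}, the restrictions on the context of $(D_K^+)$ prevent a purely height-preserving argument. For a complex formula, $\Pi$ may contain only propositional variables and $\bot$, so we cannot simply add such a formula to the context. Instead we decompose it.

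\textbf{Base case.} If $\Sigma|\Gamma\ra\Delta$ is an initial sequent, then so is each weakened sequent, since initial sequents allow arbitrary context in $\Sigma$, $\Gamma$ and $\Delta$.

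\textbf{Inductive step, rules other than $(D_K^+)$.} If the last rule is a logical rule or $(D_T^+)$, its context is arbitrary. We therefore apply the induction hypothesis to the premise(s), adding $\alpha$ (or $D_G\alpha$ to the $\Sigma$-part), and then reapply the same rule.

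\textbf{Inductive step, $(D_K^+)$ with conclusion $\Sigma',\overrightarrow{D_{G_i}\alpha_i}|\Pi\ra D_H\beta,\Omega$.} We split by the shape of the weakening formula.
\begin{itemize}
\item $(RW)$ with $\alpha$ a propositional variable, $\bot$ or an outmost-boxed formula: $\alpha$ simply joins $\Omega$ and we reapply $(D_K^+)$.
\item $(LW)$ with $\alpha$ a propositional variable or $\bot$: $\alpha$ joins $\Pi$.
\item $(LW^+)$ with a formula $D_G\alpha$: if $G\nsubseteq H$, it joins $\Sigma'$. If $G\subseteq H$, we first weaken the premise $\emptyset|\overrightarrow{\alpha_i}\ra\beta$ on the left with $\alpha$ via the induction hypothesis (lower height), and then apply $(D_K^+)$ with $D_G\alpha$ as an extra principal formula.
\item $(LW)$ with $\alpha = D_G\gamma$ outmost-boxed: we first obtain $\Sigma,D_G\gamma|\Gamma\ra\Delta$ by the $(LW^+)$ case, then weaken with $\gamma$ on the left. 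This weakening uses the induction on weight, since $\gamma$ is lighter than $D_G\gamma$. Finally we apply $(D_T^+)$ to get $\Sigma|\Gamma,D_G\gamma\ra\Delta$.
\item $(LW)$ or $(RW)$ with a compound propositional formula: we use the induction on weight of $\alpha$. For example, for $\alpha=\alpha_1\wedge\alpha_2$ on the left, we weaken with $\alpha_1$ and $\alpha_2$ and apply $(L\wedge)$. For $\alpha_1\vee\alpha_2$ on the left, we weaken each branch and use $(L\vee)$. The cases $\rightarrow$ and $\neg$ are analogous: for $\neg\alpha_1$ on the left, weaken the succedent with $\alpha_1$ and apply $(L\neg)$. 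For $(R\wedge)$, weaken twice and branch. These steps only use weakening by strictly lighter formulas on the same sequent.
\end{itemize}

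\textbf{Expected obstacle.} The main difficulty is the ordering of the inductions, since the decomposition steps do not decrease height. The weight of $\alpha$ should therefore be the primary parameter and height the secondary one. Height is used only in the $(D_K^+)$ case with $G\subseteq H$, and in the logical and $(D_T^+)$ cases where the weakening formula is atomic or boxed. Care is also needed in the $(LW)$ boxed case, because it relies on $(LW^+)$ for the same formula together with $(LW)$ for a lighter one. This is why the three rules must be proved simultaneously rather than separately.
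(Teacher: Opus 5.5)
Your proposal is correct and follows essentially the paper's route. The paper states this proposition without proof, and for the analogous result in $\mathsf{G}(\mathbf{L})$ says only ``double induction on the weight of the formula and the height of the derivation''; your scheme is exactly that, with the details filled in, including the boxed $(LW)$ case routed through $(LW^+)$, a lighter $(LW)$ and $(D_T^+)$. The one point to state explicitly is the one you flagged: at each fixed weight, prove $(LW^+)$ first (it only appeals to itself at lower height and to $(LW)$ at lower weight), and only then the boxed $(LW)$ case, because a plain lexicographic (weight, height) induction does not by itself license that same-weight, same-height appeal.
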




\begin{prop} 
\label{prop:inverti of t plus rule}
All rules in $\gktdplus$ except $(D^+_{K})$ are height-preserving invertible.
    
\end{prop}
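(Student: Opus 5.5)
I will prove height-preserving invertibility rule by rule, by induction on the height $n$ of a derivation of the conclusion. In each case the claim is that every premise is derivable with height at most $n$. The rules split into three groups: the propositional rules, the rule $(D_T^+)$, and the initial sequents, which serve as the base case.

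\emph{Base case.} If the conclusion is an initial sequent, I check that each premise is also initial. For the logical rules, the principal formula is compound. So the atom $p$ shared by both sides, or $\bot$ on the left, lies in the context and survives into each premise. For $(D_T^+)$ the principal formula $D_G\alpha$ is not atomic, so the same reasoning applies.

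\emph{Propositional rules.} I argue as for \textbf{G3cp}, with the $\Sigma$ component carried along unchanged. Suppose the last rule of the derivation has the formula in question as principal. Then the premises are exactly the immediate subderivations, of height $n-1$. Suppose instead the last rule is a logical rule or $(D_T^+)$ in which the formula is not principal. Then I apply the induction hypothesis to its premises and reapply that rule; this preserves height.

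\emph{The modal rule $(D_K^+)$ as the last step.} This case needs the most care, and it is where the side condition of $(D_K^+)$ does the work. In the conclusion of $(D_K^+)$, the middle part $\Pi$ contains only propositional variables and $\bot$. The succedent contains only such formulas and outmost-boxed formulas. So a compound non-modal formula cannot occur on the left. One can occur on the right only if the side conditions exclude it, since $\Omega$ also consists only of atoms, $\bot$ and boxed formulas. Hence no logical rule can have a principal formula inside the conclusion of $(D_K^+)$, and the case is vacuous for the logical rules.

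\emph{The rule $(D_T^+)$.} Here the conclusion is $\Sigma\,|\,\Gamma, D_G\alpha \ra \Delta$ and the premise is $D_G\alpha, \Sigma\,|\,\Gamma,\alpha \ra \Delta$. I would not prove this directly by induction. Instead I would first show that left weakening $(LW)$ and $(LW^+)$ are height-preserving for formulas of this shape.

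\emph{Expected obstacle.} This step may fail. Adding $\alpha$ to the middle part could break the restriction on $\Pi$ in $(D_K^+)$: $\Pi$ may contain only atoms, and $\alpha$ need not be one. If the last rule is $(D_K^+)$ with $D_G\alpha$ in the middle part, that rule could not have applied, because $\Pi$ contains no boxed formulas. So $D_G\alpha$ can only appear elsewhere, and in that case the conclusion is stated with $D_G\alpha$ sitting in the middle part and must be read accordingly.

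\emph{Fallback.} For that reason, my first attempt at this obstacle is to prove a strengthened lemma. The lemma would say that whenever $\Sigma\,|\,\Gamma, D_G\alpha\ra\Delta$ is derivable with height $n$, so is $D_G\alpha,\Sigma\,|\,\Gamma,\alpha\ra\Delta$. The proof is by induction on $n$:
\begin{itemize}
\item If $D_G\alpha$ is principal in $(D_T^+)$, the premise is the required sequent and we are done.
\item If the last rule is a logical rule or $(D_T^+)$ acting on another formula, I use the induction hypothesis and reapply the rule.
\item The $(D_K^+)$ case cannot occur, because its conclusion has no boxed formulas in the middle part.
\end{itemize}
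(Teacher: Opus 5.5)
Your final argument, the ``Fallback'' induction on height, is correct. The key observation is that $(D_K^+)$ can never end a derivation whose middle part contains a compound or boxed formula, or whose succedent contains a compound non-boxed formula, so that case is vacuous for every rule being inverted. This is the standard G3-style argument the paper has in mind: it states the proposition without proof and defers to the G3 literature.

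Drop the detour through height-preserving weakening. It is false here: adding a non-atomic formula to the middle part blocks $(D_K^+)$, so $(LW)$ is not height-preserving in $\gktdplus$. It is also beside the point, since the premise of $(D_T^+)$ is not a weakening of its conclusion. Your ``strengthened lemma'' is simply the invertibility claim itself, stated with ``height at most $n$''.
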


\begin{prop} 
    \label{prop:hp contraction of t plus}
    The following contraction rules are  height-preserving admissible in $\gktdplus$.

  \begin{center}

    \begin{scriptsize}
\begin{tabular}{lll}

            \AxiomC{$\Sigma|\Gamma \ra \Delta , A,A$}
\RightLabel{\scriptsize $(RC)$}
\UnaryInfC{$\Sigma|\Gamma \ra \Delta ,A $}
\DisplayProof

&

\AxiomC{$\Sigma|A,A,\Gamma \ra \Delta$}
\RightLabel{\scriptsize $(LC)$}
\UnaryInfC{$\Sigma|A,\Gamma \ra \Delta$}
\DisplayProof

&
\AxiomC{$\Sigma,D_G A,D_G A|\Gamma \ra \Delta$}
\RightLabel{\scriptsize $(LC^+)$}
\UnaryInfC{$\Sigma,D_G A|\Gamma \ra \Delta$}
\DisplayProof

\end{tabular}

\end{scriptsize}
    \end{center}
    
\end{prop}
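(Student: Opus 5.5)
We prove the three contraction rules simultaneously by induction on the height $n$ of the derivation of the premise, with a subsidiary induction on the weight of the contraction formula $A$ (or $D_G A$ for $(LC^+)$). Wherever the contraction formula is principal in the last rule we use height-preserving invertibility (Proposition \ref{prop:inverti of t plus rule}). The base case is immediate. If the premise of $(RC)$ or $(LC)$ is an initial sequent, so is the conclusion, since at most one copy of the duplicated formula is needed to witness $p\ra p$ or $\bot$ on the left. If the premise of $(LC^+)$ is initial, then $\Sigma$ plays no role, and the conclusion with the smaller $\Sigma$ is initial too.

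For the inductive step we split on the last rule $R$.

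\emph{Contraction formula not principal in $R$.} If $R$ is a logical rule or $(D_T^+)$, we apply the induction hypothesis to the premise(s), which still contain both copies and have smaller height, and then reapply $R$. If $R$ is $(D_K^+)$ and the two copies sit in the context $\Sigma$, $\Pi$ or $\Omega$, we reapply $(D_K^+)$ to the same premise with one copy deleted from the conclusion. This works because the side conditions on $\Sigma$, $\Pi$, $\Omega$ only restrict which kinds of formulas occur, not how many.

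\emph{Contraction formula principal in a logical rule.} We follow the standard G3 argument: invert the other copy with Proposition \ref{prop:inverti of t plus rule}, then contract the resulting subformulas by the induction hypothesis on weight. For example, $\Sigma|\alpha_1\wedge\alpha_2,\alpha_1\wedge\alpha_2,\Gamma\ra\Delta$ obtained by $(L\wedge)$ yields, after inverting the remaining conjunction, $\Sigma|\alpha_1,\alpha_2,\alpha_1,\alpha_2,\Gamma\ra\Delta$ at height $n-1$. Two weight-smaller contractions followed by $(L\wedge)$ finish this case.

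\emph{Contraction formula $D_G\alpha$ in the antecedent $\Gamma$, principal in $(D_T^+)$.} The premise is $D_G\alpha,\Sigma|\Gamma,D_G\alpha,\alpha\ra\Delta$ at height $n-1$. Inverting $(D_T^+)$ on the other copy of $D_G\alpha$ gives $D_G\alpha,D_G\alpha,\Sigma|\Gamma,\alpha,\alpha\ra\Delta$ at height $\le n-1$. We then apply $(LC^+)$ and $(LC)$ by the induction hypothesis on height and reapply $(D_T^+)$. So $(LC^+)$ and $(LC)$ must be proved together.

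\emph{Contraction formula principal in $(D_K^+)$.} For $(LC^+)$ or $(LC)$ on an outmost-boxed $D_{G_i}\alpha_i$ whose copies are both principal, the premise has the form $\emptyset|\alpha_i,\alpha_i,\ldots\ra\beta$. We contract $\alpha_i$ by the weight induction, since $\alpha_i$ is lighter than $D_{G_i}\alpha_i$, and reapply $(D_K^+)$. If only one copy is principal, we simply drop the other one, because it belongs to the weakenable context of the conclusion. For $(RC)$ on $D_G\beta$, one copy is principal and the other lies in $\Omega$, so we reapply $(D_K^+)$ with a single copy.

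The main obstacle is the interaction between the two places where a boxed formula can live: $\Sigma$ and the antecedent $\Gamma$. A copy moved into $\Sigma$ by $(D_T^+)$ is principal only in $(D_K^+)$. Moreover, in $(D_K^+)$ the formulas from $\Sigma$ and from the antecedent are both read off as $D_{G_i}\alpha_i$, while $\Pi$ is restricted to atoms. We will therefore check carefully that the instance of $(D_K^+)$ reconstructed after contraction still meets the side condition $\dagger$, in particular that a $G$-sub formula never has to remain in $\Sigma$. Should this fail, we would first invert $(D_T^+)$ so that both copies sit in $\Sigma$, reducing to the $(LC^+)$ case.
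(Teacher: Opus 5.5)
Your proposal is correct and is essentially the argument the paper intends. The paper states this proposition without a separate proof and relies on the scheme of its proof of Proposition~\ref{prop:hp contraction in k,d,t}: simultaneous induction, height-preserving invertibility for principal logical cases, and contraction inside the premise of the modal rule. Your handling of $(D_T^+)$ (invert the other copy, then apply $(LC^+)$ and $(LC)$ jointly) is the natural replacement for the repeated boxed formula in $(D_T)$.

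The hedge in your last paragraph is unnecessary. Condition $\dagger$ forbids $G$-sub formulas in the context $\Sigma$ and restricts $\Pi$ to propositional variables and $\bot$, so the ``only one copy principal'' subcase never arises. For the same reason, deleting a copy can never violate $\dagger$.
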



 An expected form of cut in the form of:

\begin{center}
 \AxiomC{$\Sigma|\Gamma \ra  \Delta, \lambda $}
\AxiomC{$\Sigma^\prime |\lambda,\Gamma ^{\prime} \ra \Delta ^{\prime} $}
\RightLabel{\scriptsize $(Cut^{\prime})$}
\BinaryInfC{$\Sigma^\prime, \Sigma|\Gamma,\Gamma^{\prime} \ra \Delta, \Delta  ^{\prime} $}
\DisplayProof
\end{center}
is not admissible as pointed out in  \cite[Lemma 4.8]{bilkova2007uniform}. However, since our goal is to show the uniform interpolation, the admissibility of the following forms of cut rules are sufficient.

\begin{prop} The following cut rules are admissible in $\gktdplus$.

\begin{center}
\begin{tabular}{lll}

 \AxiomC{$\emptyset|\Gamma \ra  \Delta, \lambda$}
\AxiomC{$\emptyset|\lambda,\Gamma ^{\prime} \ra \Delta ^{\prime} $}
\RightLabel{\scriptsize $(Cut_1)$}
\BinaryInfC{$\emptyset|\Gamma,\Gamma^{\prime} \ra \Delta, \Delta  ^{\prime} $}
\DisplayProof

&

 \AxiomC{$\Sigma|\Gamma \ra  \Delta, D_G \lambda$}
\AxiomC{$D_G \lambda,\Sigma^\prime |\Gamma ^{\prime} \ra \Delta ^{\prime} $}
\RightLabel{\scriptsize $(Cut_2)$}
\BinaryInfC{$\Sigma^\prime, \Sigma|\Gamma,\Gamma^{\prime} \ra \Delta, \Delta  ^{\prime} $}
\DisplayProof

\end{tabular}
\end{center}
\end{prop}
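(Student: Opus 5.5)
Both cut rules are handled by the same double induction on the complexity of the cut formula and the sum of the heights of the two premise derivations. As in Proposition \ref{prop:cut admissible of k,d,t}, I split into three cases: one premise is initial; the cut formula is not principal in the last rule of one premise; the cut formula is principal in both. Height-preserving invertibility (Proposition \ref{prop:inverti of t plus rule}), height-preserving contraction (Proposition \ref{prop:hp contraction of t plus}) and the weakening rules $(LW),(RW),(LW^+)$ are used freely. Since weakening is not height-preserving, I follow the scheme of \cite[Theorem 3.23]{Negri2001} and permute cuts upward, not the context-sharing style. I prove $(Cut_1)$ and $(Cut_2)$ simultaneously, because principal cuts on boxed formulas in $(Cut_1)$ produce instances of $(Cut_2)$, and conversely.

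\textbf{$(Cut_1)$.} Both premises have empty $\Sigma$-part, so permutations over logical rules keep the $\emptyset|$ form; initial sequents are routine. The left premise may end with $(D_T^+)$, with conclusion $\emptyset|\Gamma_0,D_H\beta\ra\Delta,\lambda$ and premise $D_H\beta|\Gamma_0,\beta\ra\Delta,\lambda$. The right premise may end similarly with $D_H\beta$ moved into the stored part. In these cases I cannot simply permute, since the premise no longer has empty $\Sigma$. Instead I apply invertibility to the other cut premise, or weaken it by $(LW^+)$, to give it the same stored formulas. I then cut at lower height, using a generalized induction hypothesis: cut with identical $\Sigma$ on both sides, $\Sigma|\Gamma\ra\Delta,\lambda$ and $\Sigma|\lambda,\Gamma'\ra\Delta'$ giving $\Sigma|\Gamma,\Gamma'\ra\Delta,\Delta'$. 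Finally I apply $(D_T^+)$ and $(LC^+)$.

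So I will in fact prove this shared-$\Sigma$ cut. Its only delicate case is $(D_K^+)$, which erases $\Sigma$ in its premise. When $\lambda$ is not principal in the $(D_K^+)$ step, both sides are treated by weakening the $(D_K^+)$ conclusion, since $\Sigma$ is context there. When $\lambda = D_G\lambda'$ is principal on the right in $(D_K^+)$ on the left and principal on the right side, there are two subcases. If the right premise ends with $(D_K^+)$, I cut the two $\emptyset|$ premises on $\lambda'$, which has lower complexity, and reapply $(D_K^+)$. If the right premise ends with $(D_T^+)$, I use the reduction displayed in Proposition \ref{prop:cut admissible of k,d,t}. I cut $\emptyset|\overrightarrow{\alpha}\ra\lambda'$ against the premise $D_G\lambda',\Sigma'|\lambda',\Gamma'\ra\Delta'$ on $\lambda'$, after first removing the stored $D_G\lambda'$ by a $(Cut_2)$ of lower height with the left conclusion. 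I then restore $\overrightarrow{D_{G_i}\alpha_i}$ from $\overrightarrow{\alpha}$ by $(D_T^+)$ and contraction.

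\textbf{$(Cut_2)$.} The cut formula $D_G\lambda$ sits in the stored part of the right premise, so it can be principal there only inside a $(D_K^+)$ step, as some $D_{G_i}\alpha_i$ with $G_i$ contained in the succedent group. I induct on the height of the right derivation. Through logical and $(D_T^+)$ steps it is context, so I permute the cut up. At a $(D_K^+)$ step where $D_G\lambda$ is not principal, it is discarded and I weaken instead. If it is principal, the right premise is $\emptyset|\lambda,\overrightarrow{\alpha}\ra\beta$. I then look at the left derivation of $\Sigma|\Gamma\ra\Delta,D_G\lambda$, permuting up it until $D_G\lambda$ is introduced by $(D_K^+)$ from $\emptyset|\overrightarrow{\gamma}\ra\lambda$. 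A $(Cut_1)$ on the lower-complexity $\lambda$ followed by $(D_K^+)$ and weakening gives the result.

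I expect the main obstacle to be the interplay between the stored context and $(D_K^+)$: proving that the shared-$\Sigma$ generalization is sound, and showing that the measure (complexity, height) decreases in the mixed $(D_K^+)$/$(D_T^+)$ principal case. If that fails, the fallback is semantic. It would suffice to show that $\gktdplus$ proves $\Sigma|\Gamma\ra\Delta$ iff $\gktd$ proves $\Sigma,\Gamma\ra\Delta$, and then apply cut admissibility in $\gktd$ (Proposition \ref{prop:cut admissible of k,d,t}). The right-to-left direction would require a completeness argument for $\gktdplus$ modelled on Lemma \ref{lem:truth lemma}, using saturated $\mathbf{T}$-sequents.
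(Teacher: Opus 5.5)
There is a genuine gap. Your induction hypothesis is the shared-$\Sigma$ cut: from $\Sigma|\Gamma\ra\Delta,\lambda$ and $\Sigma|\lambda,\Gamma'\ra\Delta'$ infer $\Sigma|\Gamma,\Gamma'\ra\Delta,\Delta'$. This rule is not admissible in $\gktdplus$.

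Take $\Sigma=\{D_Gq\}$ and $\lambda=D_Gq$:
\begin{itemize}
\item The left premise $D_Gq|\emptyset\ra D_Gq$ follows by $(D_K^+)$ from the initial sequent $\emptyset|q\ra q$.
\item The right premise $D_Gq|D_Gq\ra q$ follows by $(D_T^+)$ from the initial sequent $D_Gq,D_Gq|q\ra q$.
\item The conclusion $D_Gq|\emptyset\ra q$ is underivable. It is not initial, and its only unstored formula is the atom $q$ in the succedent, so no logical rule, no $(D_T^+)$ and no $(D_K^+)$ applies.
\end{itemize}
A stored formula can only be consumed by $(D_K^+)$; its reflexivity instance is gone.

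Your argument fails exactly in the mixed principal case, and the example above is that case with $\overrightarrow{\alpha}=\lambda'=q$. After cutting on $\lambda'$ you are left with $\overrightarrow{\alpha}$ in the unstored antecedent. $(D_T^+)$ cannot absorb these into the already stored $\overrightarrow{D_{G_i}\alpha_i}$: it only moves an unstored boxed formula into storage. So it produces a sequent with $\overrightarrow{D_{G_i}\alpha_i}$ on the unstored side, which is not the required conclusion. The obvious repair is to restrict the hypothesis to sequents whose stored formulas have their bodies in the unstored antecedent. That invariant is not preserved once a logical rule decomposes such a body.

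The same example refutes your fallback: $\gktd\vdash D_Gq\ra q$, but $\gktdplus\nvdash D_Gq|\emptyset\ra q$. The correspondence with $\gktd$ holds only for $\Sigma=\emptyset$.

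The empty case is all you need, and it requires no new cut elimination. Lemma~\ref{lem:equipplloent of gktn and gktnplus} states that $\gktd\vdash\Gamma\ra\Delta$ iff $\gktdplus\vdash\emptyset|\Gamma\ra\Delta$. Its proof is a plain induction on derivations that does not use the present proposition:
\begin{itemize}
\item the $(D_T)$ case uses invertibility of $(D_T^+)$, then $(LC)$, then $(D_T^+)$;
\item the $(D_K)$ case uses $(D_K^+)$, then $(LW)$ with the bodies of the boxed antecedent formulas, then repeated $(D_T^+)$.
\end{itemize}
Combined with Proposition~\ref{prop:empty sigma} and cut admissibility in $\gktd$ (Proposition~\ref{prop:cut admissible of k,d,t}, or its semantic corollary), this gives $(Cut_1)$ immediately.

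Your treatment of $(Cut_2)$ is then sound as it stands. You induct on the right derivation. Where $D_G\lambda$ is principal in a $(D_K^+)$ step, you induct on the left derivation until $D_G\lambda$ is introduced by $(D_K^+)$. You finish with $(Cut_1)$ on $\lambda$, then $(D_K^+)$, then $(LW^+)$. This is exactly the case the paper displays, and it uses $(Cut_1)$ only as a black box. So you need neither the simultaneous induction nor the shared-$\Sigma$ lemma.
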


\begin{proof}
We simultaneously prove these results by following a similar argument in Proposition \ref{prop:cut admissible of k,d,t}. Only the following case will be explained here, other cases are similar.
We consider $\mathtt{rule} (\mathcal{D}_1)$ is $(D_{K}^+)$, 
$\mathtt{rule} (\mathcal{D}_2)$ is also $(D_{K}^+)$ and cut formulas are principal in both rules.

\begin{center}
    \begin{scriptsize}
\noLine
\AxiomC{$\mathcal {D}_1$}
\UnaryInfC{$\emptyset|\overrightarrow{\gamma_m}\ra \lambda$}
\RightLabel{\scriptsize $(D_{K}^+)$}
\UnaryInfC{$\Sigma,\overrightarrow{D_{G_m}\gamma_m}| \Pi \ra D_{G}\lambda,\Omega$}
\noLine
\AxiomC{$\mathcal {D}_2$}
\UnaryInfC{$\emptyset|\lambda ,\overrightarrow{\gamma'_n} \ra  \beta $}
\RightLabel{\scriptsize $(D_{K}^+)$}
\UnaryInfC{$D_{G}\lambda,\Sigma',\overrightarrow{D_{H_n}\gamma'_n}| \Pi' \ra D_{H}\beta,\Omega'$}
\RightLabel{\scriptsize $(cut_2)$}
\BinaryInfC{$\Sigma, \Sigma^{\prime},  \overrightarrow{D_{G_m}\gamma_m} , \overrightarrow{D_{H_n}\gamma'_n} | \Pi,\Pi' \ra  D_{H}\beta,\Omega,\Omega^{\prime}$}
\DisplayProof
    \end{scriptsize}
\end{center}

Then we can transform the derivation into the following:

\begin{center}
    \begin{scriptsize}
\noLine
\AxiomC{$\mathcal {D}_1$}
\UnaryInfC{$\emptyset|\overrightarrow{\gamma_m}\ra \lambda$} 
\noLine
\AxiomC{$\mathcal {D}_2$}
\UnaryInfC{$\emptyset|\lambda ,\overrightarrow{\gamma'_n} \ra  \beta $} 
\RightLabel{\scriptsize $(cut_1)$}
\BinaryInfC{$\emptyset|\overrightarrow{\gamma_m} ,\overrightarrow{\gamma'_n} \ra  \beta $}
\RightLabel{\scriptsize $(D_{K}^+)$}
\UnaryInfC{$ \Sigma^{\prime},  \overrightarrow{D_{G_m}\gamma_m} , \overrightarrow{D_{H_n}\gamma'_n} | \Pi,\Pi' \ra  D_{H}\beta,\Omega,\Omega^{\prime}$}
\RightLabel{\scriptsize $(L{W}^+)^\ast$}
\UnaryInfC{$\Sigma, \Sigma^{\prime},  \overrightarrow{D_{G_m}\gamma_m} , \overrightarrow{D_{H_n}\gamma'_n} | \Pi,\Pi' \ra  D_{H}\beta,\Omega,\Omega^{\prime}$}
\DisplayProof
    \end{scriptsize}
\end{center}
In the transformed derivation, the application of $(cut_1)$ can be eliminated owing to the lower complexity of the cut formula.
The inclusion conditions can be easily checked.
\end{proof}


By observing all rules of $\gktdplus$ and $\gktd$ we  obtain the following result.
\begin{prop}
    \label{prop:empty sigma}
        Given multi-sets $\Gamma,\Delta$ of formulas in $\mathcal{L}$:
         if $ \gktdplus\vdash \Sigma|\Gamma\ra \Delta$ then $\gktd\vdash \Sigma,\Gamma\ra \Delta$.
\end{prop}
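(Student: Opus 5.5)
We argue by induction on the height of the derivation of $\Sigma|\Gamma\ra\Delta$ in $\gktdplus$, translating each derivation step by step into $\gktd$ by reading the bar as a comma. Concretely, we show that for every rule of $\gktdplus$, if the premises become derivable in $\gktd$ after erasing the bar, then so does the conclusion.

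For the initial sequents, $\Sigma|\Gamma,p\ra p,\Delta$ becomes $\Sigma,\Gamma,p\ra p,\Delta$, and $\Sigma|\bot,\Gamma\ra\Delta$ becomes $\Sigma,\bot,\Gamma\ra\Delta$. Both are initial sequents of $\gktd$, since the contexts there are arbitrary. Each logical rule of $\gktdplus$ acts only on $\Gamma,\Delta$ and carries $\Sigma$ along unchanged. After erasing the bar it is therefore literally an instance of the corresponding $\mathbf{G3}$ rule of $\gktd$ with antecedent context $\Sigma,\Gamma$.

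For $(D_T^+)$, the premise $D_G\alpha,\Sigma|\Gamma,\alpha\ra\Delta$ translates by the induction hypothesis to $D_G\alpha,\Sigma,\Gamma,\alpha\ra\Delta$. One application of $(D_T)$ of $\gktd$ then yields $\Sigma,\Gamma,D_G\alpha\ra\Delta$. This is precisely where the repetition of the boxed formula in $(D_T)$ matches the movement of $D_G\alpha$ into the left compartment.

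The remaining case is $(D_K^+)$. The premise $\emptyset|\alpha_1,\ldots,\alpha_n\ra\beta$ gives $\gktd\vdash\alpha_1,\ldots,\alpha_n\ra\beta$ by the induction hypothesis. The conclusion after erasing the bar is $\Sigma,D_{G_1}\alpha_1,\ldots,D_{G_n}\alpha_n,\Pi\ra D_G\beta,\Omega$. We need this to be an instance of $(D_K)$ in $\gktd$, whose side condition requires the left context to consist of propositional variables, $\bot$, or outmost-boxed formulas that are not $G$-sub formulas. That holds here: $\Sigma$ consists of outmost-boxed non-$G$-sub formulas, and $\Pi$ contains only propositional variables and $\bot$, by the side condition $\dagger$ of $(D_K^+)$. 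The right context $\Omega$ satisfies the same condition in both systems. So the rule applies directly.

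The only point requiring care is this side-condition comparison for $(D_K^+)$. Should some mismatch appear, for instance if $\Sigma$ were allowed to contain $G$-sub formulas, we would fall back on applying $(D_K)$ to the restricted sequent and then recovering the extra formulas by weakening (Proposition \ref{prop:weakening of k,d,t}), which is admissible in $\gktd$. Either way the induction goes through, which proves the claim.
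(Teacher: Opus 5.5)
Your proof is correct and makes explicit what the paper leaves implicit when it says the result follows ``by observing all rules of $\gktdplus$ and $\gktd$'': after erasing the bar, each $\gktdplus$ rule becomes an instance of the corresponding $\gktd$ rule, with $\Sigma,\Pi$ serving as the left context of $(D_K)$ and $(D_T^+)$ becoming $(D_T)$. The weakening fallback in your last paragraph is never needed, because the side conditions match exactly, so the translation even preserves derivation height.
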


\begin{lem}
\label{lem:equipplloent of gktn and gktnplus}
    Given multi-sets $\Gamma,\Delta$ of formulas in $\mathcal{L}$:
        $\gktd\vdash \Gamma\ra \Delta$ if and only if $\gktdplus\vdash \emptyset|\Gamma\ra \Delta$.
\end{lem}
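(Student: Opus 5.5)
The right-to-left direction is immediate from Proposition \ref{prop:empty sigma}: a derivation of $\emptyset|\Gamma\ra\Delta$ in $\gktdplus$ yields $\gktd\vdash \Gamma\ra\Delta$.

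For the left-to-right direction I will prove a stronger, more flexible statement by induction on the height of the $\gktd$-derivation:
\begin{center}
if $\gktd\vdash \Gamma\ra\Delta$, then $\gktdplus\vdash \Sigma|\Gamma'\ra\Delta$ for every split of $\Gamma$ into $\Sigma,\Gamma'$ with $\Sigma$ consisting of outmost-boxed formulas such that $\Sigma^{\flat_G}$-type reflexive consequences are already present, i.e. for each $D_G\alpha\in\Sigma$ we have $\alpha\in\Gamma'$ (up to contraction).
\end{center}
Since this invariant is awkward, my first attempt is simpler: show $\gktd\vdash \Sigma,\Gamma\ra\Delta$ implies $\gktdplus\vdash \Sigma|\Gamma,\Sigma^{\flat}\ra\Delta$, where $\Sigma^{\flat}=\{\alpha\mid D_G\alpha\in\Sigma\}$. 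Taking $\Sigma=\emptyset$ gives the lemma.

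Initial sequents and logical rules translate directly, using height-preserving invertibility (Proposition \ref{prop:inverti of t plus rule}) and the fact that the $\Sigma$-part is untouched by logical rules. For $(D_T)$ with principal $D_G\alpha$: if $D_G\alpha$ is already in the $\Sigma$-part, the premise $D_G\alpha,\alpha,\Gamma\ra\Delta$ gives by I.H. a derivation where $\alpha$ appears twice in the antecedent, and height-preserving $(LC)$ (Proposition \ref{prop:hp contraction of t plus}) removes the duplicate; if $D_G\alpha$ is in the ordinary antecedent, we apply I.H. with $D_G\alpha$ moved into $\Sigma$ and conclude with $(D_T^+)$, again contracting the extra copy of $\alpha$. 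For $(D_K)$: the premise $\overrightarrow{\alpha_n}\ra\beta$ gives by I.H. $\emptyset|\overrightarrow{\alpha_n}\ra\beta$, and we apply $(D_K^+)$, placing boxed formulas from both $\Sigma$ and the antecedent on the $\Sigma$-side of the conclusion. The mismatch is that $(D_K^+)$ requires the ordinary antecedent $\Pi$ to contain only variables and $\bot$, whereas our target has boxed formulas of $\Gamma$ and the formulas $\Sigma^\flat$ there. I handle this by first decomposing: boxed formulas in the ordinary antecedent are moved to $\Sigma$ by $(D_T^+)$ (read upward, which adds their bodies), and the non-atomic formulas of $\Sigma^{\flat}$ and the new bodies are broken down by invertible logical rules; each resulting branch is again a $\gktdplus$-sequent whose $\gktd$-reading is derivable via Proposition \ref{prop:empty sigma} and cut admissibility in $\gktd$, so the I.H. must be phrased on a measure allowing this.

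The main obstacle is exactly this $(D_K)$ case: the extra formulas $\Sigma^\flat$ are not removable by weakening in the reverse direction, and the decomposition may not terminate relative to the $\gktd$ height. To overcome it I would instead argue semantically: Proposition \ref{prop:empty sigma} plus a completeness argument for $\gktdplus$ in the style of Theorem \ref{thm:completeness of g without cut}, building a countermodel from a failed terminating proof search (Proposition \ref{prop:gktplus terminate}) and using the saturation given by $(D_T^+)$ to obtain reflexivity. Then $\gktdplus\nvdash\emptyset|\Gamma\ra\Delta$ gives a reflexive countermodel, hence by soundness (Theorem \ref{thm:soundness of g plus cut}) $\gktd\nvdash\Gamma\ra\Delta$.
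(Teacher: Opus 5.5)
Your right-to-left direction matches the paper. The left-to-right direction is not established, though. You abandon the syntactic induction at a point where it does not actually fail, and replace it with a completeness theorem for $\gktdplus$ that is neither in the paper nor proved in your proposal. The missing idea in the $(D_K)$ case is to use the admissible weakening rules of $\gktdplus$ (stated in the paper) together with $(D_T^+)$ applied downwards. Nothing ever has to be removed.

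Prove the plain statement ($\gktd\vdash\Gamma\ra\Delta$ implies $\gktdplus\vdash\emptyset|\Gamma\ra\Delta$) by induction on the derivation, as the paper does. Suppose the last rule is $(D_K)$ with conclusion $\Sigma,\overrightarrow{D_{G_n}\alpha_n}\ra D_G\beta,\Omega$.
The I.H. gives $\emptyset|\overrightarrow{\alpha_n}\ra\beta$.
Applying $(D_K^+)$ with empty contexts yields $\overrightarrow{D_{G_n}\alpha_n}|\emptyset\ra D_G\beta$.
Then $(LW)$ and $(RW)$ yield $\overrightarrow{D_{G_n}\alpha_n}|\Sigma,\overrightarrow{\alpha_n}\ra D_G\beta,\Omega$.
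Finally, $n$ applications of $(D_T^+)$ move each $D_{G_i}\alpha_i$ back into the ordinary antecedent, giving $\emptyset|\Sigma,\overrightarrow{D_{G_n}\alpha_n}\ra D_G\beta,\Omega$.

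Your decomposition idea also works, without any I.H. on the branches. Every critical leaf still has all $D_{G_i}\alpha_i$ in its boxed part and $D_G\beta$ in its succedent, so it closes directly by $(D_K^+)$, whose premise comes from the I.H. by $(LW)$. No cut and no detour through $\gktd$ is needed.

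The $(D_T)$ case is as you describe: invert $(D_T^+)$ on $\emptyset|D_G\alpha,\alpha,\Gamma\ra\Delta$, apply $(LC)$, and reapply $(D_T^+)$. The $\Sigma^{\flat}$-invariant is unnecessary.

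The semantic fallback does not get around the difficulty. If you run it in the style of the paper's completeness theorem, the modal case of the truth lemma needs the following: if $\gktdplus\vdash\emptyset|\Theta^{\flat_{\subseteq G}}\ra\beta$ and $D_G\beta\in\Pi$, then $\gktdplus\vdash\emptyset|\Theta\ra\Pi$. That is exactly the $(D_K)$ step you called the obstacle, so the detour moves the problem rather than removing it.

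The other option is to extract a reflexive countermodel from failed branches of the terminating proof search. That requires several things you only name:
\begin{itemize}
\item taking whole branches of invertible expansions as worlds, so that the bodies added by $(D_T^+)$ are true there;
\item adding reflexive loops;
\item securing $R_H\subseteq R_G$ for $G\subseteq H$;
\item treating every $G$-sub member of the boxed part as principal in the chosen $(D_K^+)$ premise.
\end{itemize}
This is a substantial argument in its own right. As written, the left-to-right direction rests on an unproved claim.
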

\begin{proof}
    The right-to-left direction can be obtained from the Proposition \ref{prop:empty sigma}.
    The left-to-right direction can be shown by induction on the derivation. 
\end{proof}

\subsection{Main theorem of $\gktd$}
\label{sec:main thm of T}

Then, similar to Definition \ref{dfn:Ap formula in Gkn}, we can define $\mathcal{A}$-formulas in $\mathbf{T}$-sequent as follows.

\begin{defn}
We say that a $\mathbf{T}$-sequent $\Sigma|\Gamma\ra\Delta$ is {\it critical}, if $\Gamma$ contains only propositional variables and $\bot$, $\Sigma$ contains only outmost boxed formulas and $\Delta$ contains only  propositional variables, $\bot$ and outmost boxed formulas
\end{defn}

\begin{defn}
\label{dfn:ap formula for t}
      Let $\Gamma,\Delta$ be finite multi-sets of formulas,  $\Sigma$ be a finite multi-set of out-most boxed formulas,  $p$ be a propositional variable, $a$ be an agent symbol.
      An $\mathcal{A}$-formula $\mathcal{A}_{(p,a)}(\Sigma|\Gamma;\Delta)$  is defined inductively as follows.

  \begin{scriptsize}
\[
\begin{array} 
{|c|c|c|c|}\hline & \Sigma|\Gamma\ra\Delta ~\text{is not critical and }~ \Sigma|\Gamma;\Delta\text{ matches} & \mathcal{A}_{(p,a)}(\Sigma|\Gamma;\Delta)\text{ equals} \\
\hline






~~1~~ & \Sigma|\Gamma',\alpha_1\wedge \alpha_2;\Delta & \mathcal{A}_{(p,a)}(\Sigma|\Gamma',\alpha_1,\alpha_2;\Delta) \\

2 & \Sigma|\Gamma;\alpha_1\wedge \alpha_2,\Delta' & \mathcal{A}_{(p,a)}(\Sigma|\Gamma;\alpha_1,\Delta')\land \mathcal{A}_{(p,a)}(\Sigma|\Gamma;\alpha_2,\Delta') \\

3 & \Sigma|\Gamma',\alpha_1\vee \alpha_2;\Delta & \mathcal{A}_{(p,a)}(\Sigma|\Gamma',\alpha_1;\Delta)\land \mathcal{A}_{(p,a)}(\Sigma|\Gamma',\alpha_2;\Delta) \\

~~4~~ & \Sigma|\Gamma;\alpha_1\vee \alpha_2,\Delta' & \mathcal{A}_{(p,a)}(\Sigma|\Gamma';\alpha_1,\alpha_2,\Delta') \\

~~5~~ & \Sigma|\Gamma',\neg \alpha;\Delta & \mathcal{A}_{(p,a)}(\Sigma|\Gamma';\alpha,\Delta) \\

6 & \Sigma|\Gamma;\neg \alpha,\Delta' & \mathcal{A}_{(p,a)}(\Sigma|\Gamma,\alpha;\Delta') \\

7 & \Sigma|\Gamma',\alpha_1\rightarrow \alpha_2;\Delta & \mathcal{A}_{(p,a)}(\Sigma|\Gamma';\Delta,\alpha_1)\land \mathcal{A}_p(\Sigma|\Gamma',\alpha_2;\Delta) \\
8 & \Sigma|\Gamma;\alpha_1\rightarrow \alpha_2,\Delta' & \mathcal{A}_{(p,a)}(\Sigma|\Gamma,\alpha_1; \Delta',\alpha_2) \\

9 & \Sigma|\Gamma',D_G \alpha;\Delta & \mathcal{A}_{(p,a)}(\Sigma,D_G \alpha|\Gamma',\alpha;\Delta) \\

\hline
\end{array}
\]

\[
\begin{array}
{|c|c|c|c|}
\hline 
& \Sigma|\Gamma\ra\Delta ~\text{is critical and }~ \Sigma|\Gamma;\Delta\text{ matches} & \mathcal{A}_{(p,a)}(\Sigma|\Gamma;\Delta)\text{ equals} \\
\hline




~~10~~ & \Sigma|\Gamma',p;\Delta',p & \top \\

~~11~~ &  \Sigma,  D_{\{a\}} \Gamma''|  \Gamma'   ; \Delta', D_{\{a\}} \alpha ~\text{with}~   \gktdplus\vdash \emptyset|  \Gamma'' \ra \alpha&  \top \\

12 \ddagger& ~~ \overrightarrow{D_{G_m}\gamma_m }  |\Phi; \overrightarrow{D_{H_n}\delta_n }, \Psi ~~  &  \mathsf{X} \\

\hline
\end{array}
\]
\end{scriptsize}

{\footnotesize 
$\ddagger:$ $\Phi$ and  $\Psi$ are multisets containing  propositional variables or $\bot$.
$\Phi\cup\overrightarrow{D_{G_m}\gamma_m } \cup \overrightarrow{D_{H_n}\delta_n }\cup \Psi$is not empty.}


 
The formula $\mathsf{X}$ is:
  \begin{scriptsize}
\[
\bigvee_{r\in\Phi/\{p\}}\neg r 
\vee 
\bigvee_{q\in\Psi/\{p\}}q
\vee
\bigvee_{\bot\in\Phi}\neg\bot
\vee
\bigvee_{\bot\in\Psi}\bot
\vee
\bigvee_{D_{G_j}\gamma_j \in \{\overrightarrow{D_{G_m}\gamma_m }\} 
   / \{\overrightarrow{D_{G_m}\gamma_m }\}^{\natural_{\ni a}}  }
\langle D_{G_j} \rangle
\mathcal{A}_{(p,a)} ( \emptyset|   \{ \overrightarrow{D_{G_m\gamma_m}} \}^{\flat_{\subseteq G_j}} ;\emptyset) 
\]

\[
  \vee \bigvee_{D_{H_i}\delta_i \in \{\overrightarrow{D_{H_n}\delta_n } \}
  / \{\overrightarrow{D_{H_n}\delta_n } \}^{\natural_{\ni a}} }
  D_{H_i} \mathcal{A}_{(p,a)} (  \emptyset|  \{\overrightarrow{D_{G_m}\gamma_m }  \}^{\flat_{\subseteq{H_i}}}; \delta_i   )
\]
\[
  \vee
     \bigvee_{D_{H_i}\delta_i \in  \{\overrightarrow{D_{H_n}\delta_n } \}^{\natural_{\ni a}}
     ~\text{with}~H_i \ne \{a\}}
  D_{H_i/\{a\}} \mathcal{A}_{(p,a)} (  \emptyset|  \{\overrightarrow{D_{G_m}\gamma_m }  \}^{\flat_{\subseteq{H_i}}}; \delta_i   )
\]
  \end{scriptsize}


The formula $\mathcal{A}_{(p,a)}(\Sigma|\Gamma;\Delta)$ is defined
in the following procedure: at first, the lines $1-9$ are repeatedly applied until $\Sigma|\Gamma\ra \Delta$ reaches a critical sequent. 
Next, we test whether it   matches the line $10$, $11$.
If no one is the case, the line $12$ is applied. We repeat the above procedure until $\Sigma|\Gamma;\Delta$ cannot  match any lines in the table, in this case $\mathcal{A}_{(p,a)}(\Sigma|\Gamma;\Delta)$ is defined as $\bot$.
If $X$ does not match any conditions, $X$ is $\bot$.
 Especially $\mathcal{A}_{(p,a)}(\Sigma|\emptyset;\emptyset)$ is defined as $\bot$.

We can define a well-order relation of $\mathcal{A}$-formulas as follows:
$\mathcal{A}_{(p,a)} (\Sigma|\Gamma; \Delta)\prec \mathcal{A}_{(p,a)} (\Sigma^\prime|\Gamma^\prime; \Delta^\prime) $ if and only if
$(\Sigma|\Gamma\ra \Delta)\prec (\Sigma^\prime|\Gamma^\prime\ra \Delta^\prime) $.
Given the fact that all back proof-search in $\gktdplus$ always terminates (in Proposition \ref{prop:gktplus terminate}), we can see that 
such a formula can always be determined.

\end{defn}

\begin{thm}
\label{thm:main theorem of gktnplus}
   Let $\Gamma,\Delta$ be finite multi-sets of formulas,  $\Sigma$ be a finite multi-set of out-most boxed formulas.
    For every propositional variable $p$ and an arbitrary agent symbol $a$ there exists a formula $\mathcal{A}_{(p,a)}(\Sigma|\Gamma;\Delta)$ such that:
    \begin{enumerate}[(i)]
        \item  
            $\mathsf{V}( \mathcal{A}_{(p,a)}(\Sigma|\Gamma,\Delta)) \subseteq \mathsf{V}(\Sigma\cup\Gamma\cup \Delta)\backslash\{p\}    $,
               $\mathsf{Agt}( \mathcal{A}_{(p,a)}(\Sigma|\Gamma,\Delta)) \subseteq \mathsf{Agt}(\Sigma\cup\Gamma\cup \Delta)\backslash\{a\}    $
        \item $\gktdplus\vdash    \Sigma|\Gamma , \mathcal{A}_{(p,a)}(\Sigma|\Gamma; \Delta) \ra \Delta$
        \item given finite multi-sets $\Pi, \Lambda$ of formulas, $\Theta$  of out-most boxed formulas, such that 
            $p\notin \mathsf{V}(\Pi\cup\Lambda\cup\Theta)$, $a\notin \mathsf{Agt}(\Pi\cup\Lambda\cup\Theta)$ and $\gktdplus \vdash \Theta,\Sigma|  \Pi,\Gamma \ra \Delta, \Lambda$
        then 
            $\gktdplus\vdash  \emptyset|\Theta,\Pi \ra  \mathcal{A}_{(p,a)}(\Sigma|\Gamma; \Delta) ,\Lambda$

    \end{enumerate}
    
\end{thm}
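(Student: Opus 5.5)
The proof will mirror that of Theorem \ref{thm:main theorem of gkn}, with all inductions on the well-order $\prec$ of $\mathbf{T}$-sequents from Proposition \ref{prop:gktplus terminate}. By that proposition and Definition \ref{dfn:ap formula for t}, $\mathcal{A}_{(p,a)}(\Sigma|\Gamma;\Delta)$ is well defined.

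\textbf{Part (i).} This follows by inspecting the table of Definition \ref{dfn:ap formula for t}. In line 12, $p$ is explicitly excluded from $\Phi,\Psi$. Every modality $D_{G_j}$, $D_{H_i}$ that is emitted either does not contain $a$ or is replaced by $D_{H_i/\{a\}}$, and lines 10 and 11 produce $\top$.

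\textbf{Part (ii).} This goes by induction along $\prec$, line by line.
\begin{itemize}
\item Lines 1--8 use the logical rules together with contraction and weakening.
\item Line 9 uses $(D_T^+)$: from the induction hypothesis $\Sigma,D_G\alpha|\Gamma',\alpha,\mathcal{A}\ra\Delta$ one step of $(D_T^+)$ yields $\Sigma|\Gamma',D_G\alpha,\mathcal{A}\ra\Delta$.
\item Line 11 applies $(D_K^+)$ to $\emptyset|\Gamma''\ra\alpha$ and then weakens with $(LW^+)$, $(LW)$, $(RW)$.
\item For line 12, each disjunct is derived as in the $\gkd$ proof, with $(D_K^+)$ in place of $(D_K)$. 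The $\langle D_{G_j}\rangle$-disjunct is obtained by $(R\neg)$, $(D_K^+)$, $(L\neg)$. The side conditions hold: the boxed formulas sit in $\Sigma$, the set $\{\overrightarrow{D_{G_m}\gamma_m}\}^{\flat_{\subseteq H}}$ matches the premise, and the context $\Pi=\Phi$ consists only of variables and $\bot$.
\item All disjuncts are then combined with $(L\lor)$.
\end{itemize}

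\textbf{Part (iii).} This is proved by induction on the height of the derivation of $\Theta,\Sigma|\Pi,\Gamma\ra\Delta,\Lambda$, with a secondary induction on $\prec$. I first apply the logical rules and $(D_T^+)$ invertibly, using Proposition \ref{prop:inverti of t plus rule}, to the $\Gamma,\Delta$ parts until $\Sigma|\Gamma\ra\Delta$ is critical.
\begin{itemize}
\item Line 9 is handled by invertibility of $(D_T^+)$: from $\Theta,\Sigma|\Pi,\Gamma',D_G\alpha\ra\dots$ we get $\Theta,\Sigma,D_G\alpha|\Pi,\Gamma',\alpha\ra\dots$, which is smaller.
\item When the last rule works on $\Pi$, $\Lambda$ (logical rules, or $(D_T^+)$ moving a formula of $\Pi$ into $\Theta$), the induction hypothesis is applied to the premise. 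In the $(D_T^+)$ case this gives $\emptyset|\Theta,D_H\beta,\Pi',\beta\ra\mathcal{A},\Lambda$. I then have to reach $\emptyset|\Theta,\Pi',D_H\beta\ra\dots$, which needs admissibility of the $\gktd$-style rule $D_H\beta,\beta,\Pi\ra\Delta \,/\, D_H\beta,\Pi\ra\Delta$ with empty $\Sigma$. I will obtain it by passing through Lemma \ref{lem:equipplloent of gktn and gktnplus}: use Proposition \ref{prop:empty sigma}, apply $(D_T)$ in $\gktd$, then translate back.
\item Initial sequents are handled as in Proposition \ref{prop:initial casein main thm in gkn}.
\end{itemize}

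The remaining case, $(D_K^+)$ with a critical sequent, is split exactly as in Theorem \ref{thm:main theorem of gkn}.
\begin{itemize}
\item If the principal $D_H\alpha$ lies in $\Lambda$ and some boxes of $\Sigma$ are principal, the induction hypothesis (with empty $\Theta$ and $\Sigma$) yields a $\langle D_{G_i}\rangle$-disjunct.
\item If no formula of $\Sigma$ is principal, the premise gives $\emptyset|\Theta,\Pi\ra\Lambda$ directly. This needs the small lemma that a $(D_K^+)$ conclusion with boxes in the stored part remains derivable when those boxes are moved into the ordinary antecedent. I will prove that lemma via Proposition \ref{prop:empty sigma} and Lemma \ref{lem:equipplloent of gktn and gktnplus}, followed by $(RW)$.
\item If the principal $D_H\alpha$ lies in $\Delta$ with $a\notin H$, or with $a\in H$ and $H\neq\{a\}$, we get the $D_H$ or $D_{H/\{a\}}$ disjunct. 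The boxes of $\Theta$ used have groups inside $H/\{a\}$ because $a\notin\mathsf{Agt}(\Theta)$.
\item If $H=\{a\}$, no boxes from $\Theta$ are used, so line 11 gives $\top$.
\end{itemize}

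\textbf{Main obstacle.} The hard step is the bookkeeping of $\Theta$ against $\Sigma$. In the conclusion, the stored boxes $\Theta$ move into the ordinary antecedent, so the sequent we build must be re-derived with $\Theta$ unfolded by $(D_T^+)$ and still admit the $(D_K^+)$ step. Both directions rest on Lemma \ref{lem:equipplloent of gktn and gktnplus}: I will translate to $\gktd$, where $\Theta$ and $\Pi$ are treated uniformly, and translate back.
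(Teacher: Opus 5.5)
Your plan follows the paper's proof closely, but it inherits a genuine gap in part (iii). The failing step is: ``if the principal $D_H\alpha$ lies in $\Lambda$ and some boxes of $\Sigma$ are principal, the induction hypothesis yields a $\langle D_{G_i}\rangle$-disjunct.''

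The stored context of $(D_K^+)$ may contain no $H$-sub formula. So the principal boxes taken from $\Sigma$ are exactly those with group $\subseteq H$, and the induction hypothesis gives $\emptyset|\overrightarrow{\pi_m}\ra\mathcal{A}_{(p,a)}(\emptyset|\Sigma^{\flat_{\subseteq H}};\emptyset),\alpha$. The diamond disjunct of $\mathsf{X}$ attached to $D_{G_i}\gamma_i$, however, is $\langle D_{G_i}\rangle\mathcal{A}_{(p,a)}(\emptyset|\Sigma^{\flat_{\subseteq G_i}};\emptyset)$. When $H$ is not the group of a single stored box, $\Sigma^{\flat_{\subseteq G_i}}$ is a proper part of $\Sigma^{\flat_{\subseteq H}}$. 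To reach that disjunct you would need $\emptyset|\overrightarrow{\pi_m},\Sigma^{\flat_{\subseteq G_i}}\ra\alpha$, which the derivation does not supply.

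The paper's proof makes the same silent identification. In fact, (iii) is false for the formula of Definition \ref{dfn:ap formula for t} as written. Take distinct agents $a,b,c$, $\Sigma=D_{\{b\}}p,D_{\{c\}}\neg p$, $\Gamma=\Delta=\Theta=\Pi=\emptyset$, and $\Lambda=D_{\{b,c\}}\bot$.
\begin{enumerate}
\item The hypothesis $D_{\{b\}}p,D_{\{c\}}\neg p\,|\;\ra D_{\{b,c\}}\bot$ holds: it follows by $(D_K^+)$ from $\emptyset|p,\neg p\ra\bot$.
\item The definition gives $\mathcal{A}_{(p,a)}(\emptyset|p;\emptyset)=\mathcal{A}_{(p,a)}(\emptyset|\neg p;\emptyset)=\bot$, so $\mathcal{A}_{(p,a)}(\Sigma|\emptyset;\emptyset)=\langle D_{\{b\}}\rangle\bot\lor\langle D_{\{c\}}\rangle\bot$.
\item The required conclusion $\emptyset|\;\ra\langle D_{\{b\}}\rangle\bot\lor\langle D_{\{c\}}\rangle\bot,D_{\{b,c\}}\bot$ fails in the one-point model with every $R_G=\{(w,w)\}$. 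By Proposition \ref{prop:empty sigma} and Theorem \ref{thm:soundness of g plus cut}, it is therefore not derivable.
\end{enumerate}

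The missing idea is that the diamond disjuncts must be indexed by unions of groups, not by single groups. Let that part of $\mathsf{X}$ range over all nonempty unions $K$ of groups $G_j$ of stored boxes with $a\notin G_j$, with disjunct $\langle D_K\rangle\mathcal{A}_{(p,a)}(\emptyset|\{\overrightarrow{D_{G_m}\gamma_m}\}^{\flat_{\subseteq K}};\emptyset)$.
\begin{enumerate}
\item Part (i) is unaffected.
\item Part (ii) goes through for each such disjunct exactly as you describe: $(R\neg)$, then $(D_K^+)$ with all $K$-sub stored boxes principal, then $(L\neg)$.
\item In the problematic case of (iii), take $K:=\bigcup\{G_j : G_j\subseteq H\}$. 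Then $K\subseteq H$ and $a\notin K$. Also $\Sigma^{\flat_{\subseteq K}}=\Sigma^{\flat_{\subseteq H}}$, which matches the induction hypothesis. Finally, $D_K\neg\mathcal{A}$ may be principal in $(D_K^+)$ with succedent $D_H\alpha$.
\end{enumerate}

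Two smaller points. First, in (ii), line 12, the disjuncts $D_{H_i}\mathcal{A}$ and $D_{H_i/\{a\}}\mathcal{A}$ sit in the ordinary antecedent, not in $\Sigma$, so $(D_K^+)$ cannot use them directly. Derive the sequent with the box stored, weaken by its body, and close with $(D_T^+)$, as the paper does. Second, your detour through Lemma \ref{lem:equipplloent of gktn and gktnplus} is sound, both for $(D_T^+)$ acting on $\Pi$ and for moving stored boxes into the ordinary antecedent. The paper's tools do the same job more directly: height-preserving invertibility of $(D_T^+)$ with $(LC)$ for the first, and $(LW)$ followed by $(D_T^+)$ for the second.
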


\begin{proof}

We proceed similarly to the proof of Theorem \ref{thm:main theorem of gkn}.
The proof of (i) can be obtained by inspecting the table in Definition \ref{dfn:ap formula for t}.
The proof of  (ii) can be proved by induction on the weight of $\mathcal{A}_{(p,a)}(\Sigma|\Gamma,\Delta)$. We prove $\gktdplus\vdash   \Sigma| \Gamma , \mathcal{A}_{(p,a)}(\Sigma|\Gamma; \Delta) \ra \Delta$ for each line of the table in  Definition \ref{dfn:ap formula for t}.
The cases of lines from 1 to 11 are easy.
We only concentrate on the case of  line 12.

The proof is similar to Theorem \ref{thm:main theorem of gkn}, only significant cases are shown here.


 


\begin{itemize}
    \begin{scriptsize}
    \item for each  $D_{G_j}\gamma_j \in \{\overrightarrow{D_{G_m}\gamma_m }\}
    /\{\overrightarrow{D_{G_m}\gamma_m }\}^{\natural_{\ni a}}   $, we have  $\emptyset|\{ \overrightarrow{D_{G_m\gamma_m}} \}^{\flat_{\subseteq G_j}},\mathcal{A}_{(p,a)} (  \emptyset| \{ \overrightarrow{D_{G_m\gamma_m}} \}^{\flat_{\subseteq G_j}} ;\emptyset)  \ra $ is derivable from induction hypothesis. After applying $(R\neg)$, $(D_{K}^+)$, $(L\neg)$ we obtain that 
    $ \overrightarrow{D_{G_m\gamma_m}}  |\Phi,  \langle D_{G_j} \rangle  \mathcal{A}_{(p,a)} (   \{ \overrightarrow{D_{G_m\gamma_m}} \}^{\flat_{\subseteq G_j}};\emptyset)   \ra
  \overrightarrow{D_{H_n}\delta_n },\Psi   $ is derivable.



\item for each $D_{H_i}\delta_i \in \{\overrightarrow{D_{H_n}\delta_n } \}
  / \{\overrightarrow{D_{H_n}\delta_n } \}^{\natural_{\ni a}}$, by induction hypothesis, we obtain:
 $ \emptyset|\{\overrightarrow{\Box_{g_m}\gamma_m }  \}^{\flat_{di}}, \mathcal{A}_{(p,a)} ( \emptyset|  \{\overrightarrow{\Box_{g_m}\gamma_m }  \}^{\flat_{di}}; \delta_i   )  \ra \delta_i  .$ Then after applying $(D_{K}^+)$, weakening rules and $(D_{T}^+)$, we obtain
$ \overrightarrow{D_{G_m}\gamma_m }  | \Phi  ,
        D_{H_i}\mathcal{A}_{(p,a)} (  \emptyset|  \{\overrightarrow{D_{G_m}\gamma_m }  \}^{\flat_{\subseteq{H_i}}} ; \delta_i ) \ra  
        \overrightarrow{D_{H_n}\delta_n },\Psi   $.



\item for each $ D_{H_i}\delta_i \in  \{\overrightarrow{D_{H_n}\delta_n } \}^{\natural_{\ni a}}$ and $H_i \ne \{a\}$.
From induction hypothesis,  we have $  \emptyset|\{\overrightarrow{D_{G_m}\gamma_m }  \}^{\flat_{\subseteq{H_i}}}   ,  \mathcal{A}_{(p,a) }(\emptyset|  \{\overrightarrow{D_{G_m}\gamma_m }  \}^{\flat_{\subseteq{H_i}}}; \delta_i  )  \ra \delta_i $ is derivable.
Then after applying $(D_{K}^+)$, weakening rules and $(D_{T}^+)$, we obtain
$\overrightarrow{D_{G_m}\gamma_m } |\Phi     ,  D_{H_i/\{a\}}\mathcal{A}_{(p,a) }(  \{\overrightarrow{D_{G_m}\gamma_m }  \}^{\flat_{\subseteq{H_i}}}; \delta_i  )  \ra   \overrightarrow{D_{H_n}\delta_n },\Psi $.


    \end{scriptsize}
\end{itemize}

After apply $(L\lor)$ for finite many times, we can obtain the desired result.

Next, in the proof of (iii), we consider the last rules applied in the derivation.
If the last rule is $(D^{+}_{T})$, we need to consider the cases:
when the principal formula $D_G A $ appears in $\Pi$, we need to apply the height-preserving invertibility in Proposition \ref{prop:inverti of t plus rule} and contraction rules;
when the principal formula $D_G A $ appears in $\Gamma$, the proof can be obtained directly from the definition.


When the last rule is $(D^{+}_{K})$, we assume that the derivation is ended in the form of 
$\Theta,\Sigma|\Pi, \Gamma \ra \Delta, \Lambda$ and 
 $p\notin \mathsf{V}(\Pi\cup\Lambda\cup\Theta)$, $a\notin \mathsf{Agt}(\Pi\cup\Lambda\cup\Theta)$.
The arguments are divided into the following cases:
\begin{enumerate}






\item  

 The right principal formula $D_H \alpha$  is in the multiset $\Lambda$.

 \begin{center}

\AxiomC{$\emptyset|\overrightarrow{\pi_m},\overrightarrow{\gamma_n} \ra \alpha$}
\RightLabel{$(D_{K}^+) $}
\UnaryInfC{$ \Theta',\overrightarrow{D_{P_m}\pi_m},\Sigma',\overrightarrow{D_{G_n}\gamma_n}| \Pi, \Gamma \ra
D_H \alpha, \Lambda',  \Delta$}
\DisplayProof

\end{center}
with $ \bigcup_{\overrightarrow{P_m}}\cup \bigcup_{\overrightarrow{G_n}} \subseteq H
$,  
where $\Theta',\overrightarrow{D_{P_m}\pi_m}$ is $\Theta$,  $\Sigma',\overrightarrow{D_{G_n}\gamma_n}$ is $\Sigma$
and $D_H \alpha, \Lambda'$ is $\Lambda$.

 
    In this case, $a\notin H$ since $ a\notin \mathsf{Agt}(\Lambda)$.

        \begin{enumerate}
        \item Some formulas in $\Sigma$ are principal, that is $\overrightarrow{\gamma_n}$ is not empty.
        In this case, we obtain that  $p$ and $a$  do not occur
         in formulas $\overrightarrow{\pi_m},\alpha$ from assumption. After applying induction 
        hypothesis, we obtain that:  $\emptyset|\overrightarrow{\pi_m} \ra 
          \mathcal{A}_{(p,a) }(\emptyset|\overrightarrow{\gamma_n} ;   \emptyset ),\alpha$  is derivable.

After applying $(L\neg) $, $(D_{K}^+) $, $(R\neg) $, weakening rules and $(D_{T}^+) $,
then $\emptyset | \Theta',\overrightarrow{D_{P_m}\pi_m}, \Pi \ra
 \langle D_{G_i}\rangle  \mathcal{A}_{(p,a) } (\emptyset|\overrightarrow{\gamma_n} ;    \emptyset ) ,
D_{H} \alpha, \Lambda'$ is derivable for some $i$ among $n$. 
Finally, we apply $(R\lor)$ for finite many times.


        

         \item All formulas in $\Gamma$ are not principal.  That is  $\overrightarrow{\gamma_n}$ is  empty.  We can derive the desired result by applying $(D_{K}^+) $ and weakening rules.

    \end{enumerate}







        \item The right principal formula $D_H \alpha$  is in the multiset $\Delta$, and $a\notin H$.

 \begin{center}
\AxiomC{$\emptyset|\overrightarrow{\pi_m},\overrightarrow{\gamma_n} \ra \alpha$}
\RightLabel{$(D_{K}^+) $}
\UnaryInfC{$ \Theta',\overrightarrow{D_{P_m}\pi_m},\Sigma',\overrightarrow{D_{G_n}\gamma_n}| \Pi, \Gamma \ra
D_H \alpha,  \Delta',\Lambda$}
\DisplayProof

\end{center}
with $ \bigcup_{\overrightarrow{P_m}}\cup \bigcup_{\overrightarrow{G_n}} \subseteq H
$, 
where   $\Theta',\overrightarrow{D_{P_m}\pi_m}$ is $\Theta$,  $\Sigma',\overrightarrow{D_{G_n}\gamma_n}$ is $\Sigma$
and $D_H \alpha, \Delta'$ is $\Delta$.
From assumption, $p$ and $a$ are not in $\overrightarrow{\pi_m}$. By induction hypothesis, we derive
$\emptyset|\overrightarrow{\pi_m} \ra  \mathcal{A}_{(p,a) }(\emptyset|\overrightarrow{\gamma_n};\alpha)$. 
After applying   $(D_{K}^+) $,  weakening rules and $(D_{T}^+) $,
then $ \emptyset|\Theta',\overrightarrow{D_{P_m}\pi_m}, \Pi \ra
D_{H}   \mathcal{A}_{(p,a) } (\emptyset|\overrightarrow{\gamma_n};\alpha ) ,
 \Lambda$ is derivable. 
Finally, we apply $(R\lor)$ for finite many times.

       \item The right principal formula $D_H \alpha$ is in the multiset $\Delta$, and $a\in H$.


 \begin{center}
\AxiomC{$\emptyset|\overrightarrow{\pi_m},\overrightarrow{\gamma_n} \ra \alpha$}
\RightLabel{$(D_{K}^+) $}
\UnaryInfC{$ \Theta',\overrightarrow{D_{P_m}\pi_m},\Sigma',\overrightarrow{D_{G_n}\gamma_n}| \Pi, \Gamma \ra
D_H \alpha,  \Delta',\Lambda$}
\DisplayProof

\end{center}
with $ \bigcup_{\overrightarrow{P_m}}\cup \bigcup_{\overrightarrow{G_n}} \subseteq H
$, 
where   $\Theta',\overrightarrow{D_{P_m}\pi_m}$ is $\Theta$,  $\Sigma',\overrightarrow{D_{G_n}\gamma_n}$ is $\Sigma$
and $D_H \alpha, \Delta'$ is $\Delta$.

Depending on whether $H$ is equal to $\{a\}$, we distinguish the following subcases.

\begin{enumerate}
    \item  When $H\ne \{a\}$,   $p$ and $a$  
        does not occur in formulas $\overrightarrow{\pi_m}$ from assumption. Then, according to induction hypothesis, 
        $\emptyset|\overrightarrow{\pi_m} \ra  \mathcal{A}_{(p,a) }(\emptyset|\overrightarrow{\gamma_n};\alpha)$ is derivable.
After applying   $(D_{K}^+) $,  weakening rules and $(D_{T}^+) $,
then $ \emptyset|\Theta',\overrightarrow{D_{P_m}\pi_m} \Pi \ra
D_{H/\{a\}}   \mathcal{A}_{(p,a) } (\emptyset|\overrightarrow{\gamma_n};\alpha) ,
 \Lambda$ is derivable. 
Finally, we apply $(R\lor)$ for finite many times.



 
\item
When $H= \{a\}$, we have $ \emptyset|\overrightarrow{\gamma_n} \ra \alpha$ is derivable, then we can apply 
the line 11  in Definition \ref{dfn:ap formula for t}.   It is obvious that $ \vdash \Theta|\Pi\ra \top, \Lambda.$ \qedhere

    

\end{enumerate}

\end{enumerate}
\end{proof}

Then, we can transfer the above results of $\gktdplus$ to $\gktd$.

\begin{cor}
\label{cor: cor main theorem of gktn}

    Let $\Gamma,\Delta$ be finite multi-sets of formulas. For every propositional variable $p$, agent symbol $a$, there exists a formula $\mathcal{A}_{(p,a)}(\Gamma;\Delta)$ such that:
    \begin{enumerate}[(i)]
        \item  
            $\mathsf{V}( \mathcal{A}_{(p,a)}(\Gamma;\Delta)) \subseteq \mathsf{V}(\Gamma\cup \Delta)\backslash\{p\}    $,
            $\mathsf{Agt}( \mathcal{A}_{(p,a)}(\Gamma;\Delta)) \subseteq \mathsf{Agt}(\Gamma\cup \Delta)\backslash\{a\}    $
        \item $\gktd\vdash    \Gamma , \mathcal{A}_{(p,a)}(\Gamma; \Delta) \ra \Delta$
        \item given finite multi-sets $\Pi, \Lambda$ of formulas such that 
            $p\notin \mathsf{V}(\Pi\cup\Lambda)$,$a\notin \mathsf{Agt}(\Pi\cup\Lambda)$ and $\gktd\vdash  \Pi,\Gamma \ra \Delta, \Lambda$
        then 
            $\gktd\vdash  \Pi \ra  \mathcal{A}_{(p,a)}(\Gamma; \Delta) ,\Lambda$
      
    \end{enumerate}
    
\end{cor}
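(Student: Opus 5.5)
We define $\mathcal{A}_{(p,a)}(\Gamma;\Delta)$ to be the formula $\mathcal{A}_{(p,a)}(\emptyset|\Gamma;\Delta)$ of Definition \ref{dfn:ap formula for t}. Each clause of the corollary then follows from the matching clause of Theorem \ref{thm:main theorem of gktnplus} with $\Sigma=\emptyset$. The transfer between $\gktdplus$ and $\gktd$ uses Lemma \ref{lem:equipplloent of gktn and gktnplus} and Proposition \ref{prop:empty sigma}.

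For (i), we instantiate Theorem \ref{thm:main theorem of gktnplus}(i) with $\Sigma=\emptyset$. Since $\mathsf{V}(\emptyset\cup\Gamma\cup\Delta)=\mathsf{V}(\Gamma\cup\Delta)$, and likewise for $\mathsf{Agt}$, this gives exactly the required variable and agent conditions.

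For (ii), Theorem \ref{thm:main theorem of gktnplus}(ii) with $\Sigma=\emptyset$ gives $\gktdplus\vdash \emptyset|\Gamma,\mathcal{A}_{(p,a)}(\emptyset|\Gamma;\Delta)\ra\Delta$. Applying Proposition \ref{prop:empty sigma}, or the right-to-left direction of Lemma \ref{lem:equipplloent of gktn and gktnplus}, yields $\gktd\vdash \Gamma,\mathcal{A}_{(p,a)}(\Gamma;\Delta)\ra\Delta$.

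For (iii), assume $\gktd\vdash \Pi,\Gamma\ra\Delta,\Lambda$, where $p\notin\mathsf{V}(\Pi\cup\Lambda)$ and $a\notin\mathsf{Agt}(\Pi\cup\Lambda)$. By the left-to-right direction of Lemma \ref{lem:equipplloent of gktn and gktnplus}, $\gktdplus\vdash\emptyset|\Pi,\Gamma\ra\Delta,\Lambda$. We then apply Theorem \ref{thm:main theorem of gktnplus}(iii) with $\Theta=\emptyset$ and $\Sigma=\emptyset$; the side conditions hold trivially because $\Theta$ is empty. This gives $\gktdplus\vdash\emptyset|\Pi\ra\mathcal{A}_{(p,a)}(\emptyset|\Gamma;\Delta),\Lambda$. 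A second application of Lemma \ref{lem:equipplloent of gktn and gktnplus} transfers this back to $\gktd\vdash\Pi\ra\mathcal{A}_{(p,a)}(\Gamma;\Delta),\Lambda$.

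The only delicate point is matching the instance in (iii). Theorem \ref{thm:main theorem of gktnplus}(iii) is stated for premises of the form $\Theta,\Sigma|\Pi,\Gamma\ra\Delta,\Lambda$, and here we use the degenerate case with both boxed prefixes empty. This case is covered by the theorem as stated, so no further induction is needed. All the real work lies in Theorem \ref{thm:main theorem of gktnplus} and in the equipollence lemma; the corollary itself is a direct specialization.
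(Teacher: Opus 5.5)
Your proposal is correct and takes the same route as the paper: you set $\mathcal{A}_{(p,a)}(\Gamma;\Delta)$ to be $\mathcal{A}_{(p,a)}(\emptyset|\Gamma;\Delta)$ and transfer each clause of Theorem~\ref{thm:main theorem of gktnplus}, taking $\Sigma=\Theta=\emptyset$, to $\mathsf{G}(\mathbf{KT}_D)$ via Lemma~\ref{lem:equipplloent of gktn and gktnplus}. You have simply written out the instantiation that the paper's one-line proof leaves implicit.
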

\begin{proof}
    Let $\mathcal{A}_{(p,a)}(\Gamma;\Delta) $ be $\mathcal{A}_{(p,a)}(\emptyset|\Gamma;\Delta) $. Then, we apply Lemma \ref{lem:equipplloent of gktn and gktnplus} and Theorem \ref{thm:main theorem of gktnplus}. \end{proof}

    By  Theorem \ref{thm:main theorem of gkn} and Corollary \ref{cor: cor main theorem of gktn}, we can directly obtain the UIP for the
post-interpolant formula (i.e., $\mathcal{A}$-formula) in $\gkd,\gkdd$ and $\gktd$ with a single agent symbol and a single propositional variable.
For the next step, we need to show that we can derive UIP for both pre-interpolant and post-interpolant formulas   with multiple agent symbols and propositional variables.


\begin{defn}
   Let $p$ be a propositional variable, $a$ be an agent symbol and $\beta$ be a formula. We define $\mathcal{A}_{(p,a)}(\beta)$ as $\mathcal{A}_{(p,a)}(\emptyset;\beta)$.
    Furthermore, we define $\mathcal{E}_{(p,a)}(\beta)$ as $\neg\mathcal{A}_{(p,a)}(\neg \beta)$, namely $\neg \mathcal{A}_{(p,a)}(\emptyset;\neg \beta)$.
\end{defn}

In the following contents, the formula
 $\alpha(\overrightarrow{p_n},\overrightarrow{a_m})$ 
denotes all occurrences  of propositional variables and agent symbols in the formula $\alpha$.
\begin{cor}
\label{cor:uip in k, kd, kt multiple}
 Let $\mathbf{L}\in \{ \mathbf{K}_D, \mathbf{KD}_D,\mathbf{KT}_D\}$.   Uniform interpolation properties are satisfied in $\mathsf{G}(\mathbf{L})$.
For any formula $\alpha(\overrightarrow{p}, \overrightarrow{q},\overrightarrow{a},\overrightarrow{b})$, such that all $p$ and $q$ are distinguished propositional variables, all $a$ and $b$ are distinguished agent symbols,
 there exists a formula (pre-interpolant) $\mathcal{I}_{pre(\overrightarrow{q},\overrightarrow{b})} (\alpha)   $  such that:
    1, all $ \overrightarrow{q}$ and  $ \overrightarrow{b}$ do not occur in 
    $\mathcal{I}_{pre(\overrightarrow{q},\overrightarrow{b})} (\alpha) $;
    2, $\mathcal{I}_{pre(\overrightarrow{q},\overrightarrow{b})} (\alpha)  \ra \alpha$
    is derivable in $\mathsf{G}(\mathbf{L})$;
    3, for any formula $\beta(\overrightarrow{p},\overrightarrow{r},\overrightarrow{a},\overrightarrow{c})$, where
   all $p,q,r$ are distinguished propositional variables, all $a,b,c$   are distinguished agent symbols,
    if $\beta  \ra \alpha$ is derivable in $\mathsf{G}
    (\mathbf{L})$ then   
$\beta
    \ra
\mathcal{I}_{pre(\overrightarrow{q},\overrightarrow{b})} (\alpha)
    $ is derivable in $\mathsf{G}(\mathbf{L})$.
   Furthermore,  
for any formula $\alpha(\overrightarrow{p}, \overrightarrow{q},\overrightarrow{a},\overrightarrow{b})$, such that all $p$ and $q$ are distinguished propositional variables, all $a$ and $b$ are distinguished agent symbols,
 there exists a formula (post-interpolant) $\mathcal{I}_{post(\overrightarrow{q},\overrightarrow{b})} (\alpha)  $  such that:
   1, all $ \overrightarrow{q}$ and  $ \overrightarrow{b}$ do not occur in   $\mathcal{I}_{post(\overrightarrow{q},\overrightarrow{b})} (\alpha) $; 
    2, $\alpha\ra\ \mathcal{I}_{post(\overrightarrow{q},\overrightarrow{b})} (\alpha)   $ is derivable in $\mathsf{G}(\mathbf{L})$;
    3,  for any formula $\beta(\overrightarrow{p},\overrightarrow{r},\overrightarrow{a},\overrightarrow{c})$, where
   all $p,q,r$ are distinguished propositional variables, all $a,b,c$   are distinguished agent symbols,
    if $\alpha  \ra  \beta $ is derivable in $\mathsf{G}
    (\mathbf{L})$ then   $\mathcal{I}_{post(\overrightarrow{q},\overrightarrow{b})} (\alpha)\ra \beta  $ is derivable in $\mathsf{G}
    (\mathbf{L})$.
\end{cor}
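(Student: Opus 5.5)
We obtain both interpolants from the single-variable, single-agent $\mathcal{A}$-formulas of Theorem \ref{thm:main theorem of gkn} (for $\gkd,\gkdd$) and Corollary \ref{cor: cor main theorem of gktn} (for $\gktd$), and then iterate over the variables and agents to be removed.

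\textbf{Step 1: one variable and one agent.} Fix $q$ and $b$. We set
\[
\mathcal{I}_{pre(q,b)}(\alpha):=\mathcal{A}_{(q,b)}(\emptyset;\alpha), \qquad \mathcal{I}_{post(q,b)}(\alpha):=\neg\mathcal{A}_{(q,b)}(\emptyset;\neg\alpha)=\mathcal{E}_{(q,b)}(\alpha).
\]

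For the pre-interpolant, clause (i) gives that $q,b$ do not occur. Clause (ii) with $\Gamma=\emptyset$, $\Delta=\alpha$ gives $\mathcal{A}_{(q,b)}(\emptyset;\alpha)\ra\alpha$. For clause (iii), take $\Pi=\{\beta\}$ and $\Lambda=\emptyset$. Then $\beta\ra\alpha$ yields $\beta\ra\mathcal{A}_{(q,b)}(\emptyset;\alpha)$, because $q\notin\mathsf{V}(\beta)$ and $b\notin\mathsf{Agt}(\beta)$.

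For the post-interpolant, clause (ii) applied to $\neg\alpha$ gives $\mathcal{A}(\emptyset;\neg\alpha)\ra\neg\alpha$. Using $(L\neg)$ and the invertibility of $(R\neg)$ (Proposition \ref{prop:invertibility of all logical rules in Gkn}), we get $\alpha,\mathcal{A}(\emptyset;\neg\alpha)\ra$, and then $\alpha\ra\neg\mathcal{A}(\emptyset;\neg\alpha)$ by $(R\neg)$. For the universal property, suppose $\alpha\ra\beta$. Using $(L\neg)$ and $(R\neg)$ we obtain $\neg\beta\ra\neg\alpha$. Clause (iii) with $\Pi=\{\neg\beta\}$ gives $\neg\beta\ra\mathcal{A}(\emptyset;\neg\alpha)$, and rearranging the negations yields $\neg\mathcal{A}(\emptyset;\neg\alpha)\ra\beta$. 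In $\gktd$ all of this holds via Lemma \ref{lem:equipplloent of gktn and gktnplus}.

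\textbf{Step 2: several variables and agents.} We pad the lists $\overrightarrow{q}=q_1,\dots,q_k$ and $\overrightarrow{b}=b_1,\dots,b_l$ to equal length $N=\max(k,l)$. If $N$ exceeds the number of variables, we pair the extra agents with a fresh variable not occurring anywhere. If $N$ exceeds the number of agents, we repeat an agent, which is harmless since removing an already absent agent changes nothing. We then define recursively
\[
\mathcal{I}_{pre}^{0}=\alpha, \qquad \mathcal{I}_{pre}^{i+1}=\mathcal{I}_{pre(q_{i+1},b_{i+1})}(\mathcal{I}_{pre}^{i}),
\]
and analogously for $\mathcal{I}_{post}$.

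Clause (i) at each stage removes $q_{i+1}$ and $b_{i+1}$ and introduces no new symbols, so no earlier $q_j,b_j$ reappear. Condition 2 follows by chaining $\mathcal{I}^{i+1}\ra\mathcal{I}^{i}$ with the admissible cut rule (Proposition \ref{prop:cut admissible of k,d,t}). For condition 3, suppose $\beta\ra\alpha$ where $\beta$ avoids all $\overrightarrow{q},\overrightarrow{b}$. By induction, if $\beta\ra\mathcal{I}^{i}$, then Step 1 applies because $\beta$ avoids $q_{i+1}$ and $b_{i+1}$, giving $\beta\ra\mathcal{I}^{i+1}$. The post case is dual.

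\textbf{Main obstacle.} The delicate point is condition 3 with respect to the fresh variable used for padding. It must lie outside $\beta$ as well. We cannot choose it after $\beta$, so instead we choose one variable $s$ not occurring in $\alpha$. If $\beta$ happens to contain $s$, we substitute $s$ by a variable $t$ absent from both $\beta$ and $\alpha$; derivability is closed under such substitution. We then note that the resulting interpolant depends on $s$ only vacuously, since $s\notin\mathsf{V}(\alpha)$ and clause (i) guarantees that $\mathsf{V}$ never grows. A cleaner alternative avoids padding altogether: if $q\notin\mathsf{V}(\alpha)$, then by inspection of Definition \ref{dfn:Ap formula in Gkn} the $p$-specific lines never fire, so $\mathcal{A}_{(q,b)}$ acts as removal of $b$ alone. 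This lets us pair each extra agent with an already removed $q_1$, which I would try first.
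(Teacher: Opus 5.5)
Your proposal is correct where the paper's construction applies, and it is essentially the paper's proof. The paper likewise nests the single-step operators $\mathcal{A}_{(q_i,b_i)}$ (resp.\ $\mathcal{E}_{(q_i,b_i)}$) and pads the shorter list by repeating an already-used symbol, and your Step 1 plus the cut-chaining and induction of Step 2 are precisely the checks it calls straightforward. With that padding your ``main obstacle'' disappears whenever $\overrightarrow{q}$ is non-empty: clause (iii) only needs $\beta$ to avoid the repeated $q_1$, which it does, so no inspection of the definition is needed. Your renaming argument is a genuine addition only for $\overrightarrow{q}=\emptyset$, where the paper's formula is undefined.

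There is one case that neither you nor the paper covers: $\overrightarrow{b}=\emptyset$. There is then no agent to repeat, and since $\mathsf{Agt}$ is finite there may be no harmless agent to pair with. For example, with $\mathsf{Agt}=\{a\}$ and $\alpha=D_{\{a\}}r\vee q$, the only option gives $\mathcal{A}_{(q,a)}(\emptyset;\alpha)=\bot$, so condition 3 fails for $\beta=D_{\{a\}}r$. That case needs an agent-free variant $\mathcal{A}_{q}$ of the construction.
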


\begin{proof}
First, we show the case of pre-interpolatant. 
Let $m,n,l\in \mathtt{N}$.
 Given an arbitrary formula $\alpha(\overrightarrow{p},\overrightarrow{q},\overrightarrow{a},\overrightarrow{b})$.
 The pre-interpolant formula is defined as follows:

\begin{equation}
\nonumber
\mathcal{I}_{pre(\overrightarrow{q},\overrightarrow{b})} (\alpha)  = 
    \begin{cases}
        \mathcal{A}_{(q_1,b_1 )} (    \cdots (\mathcal{A}_{(q_m,b_n )} (\alpha)) \cdots ), & \text{if } m =n. \\
        \mathcal{A}_{(q_1,b_1 )} (    \cdots  (\mathcal{A}_{(q_{n+l-1},b_n )} ( \mathcal{A}_{(q_{n+l},b_n )} (\alpha))) \cdots ) , & \text{if}~ m=n+l~\text{for}~l\neq0. \\
         \mathcal{A}_{(q_1,b_1 )} (    \cdots (\mathcal{A}_{(q_m,b_{m+l-1} )} (\mathcal{A}_{(q_m,b_{m+l} )} (\alpha))) \cdots ) , & \text{if}~ n=m+l~\text{for}~l\neq0.
    \end{cases}
\end{equation}
The second and third condition state that if $m\neq n$, we 
repeat the  occurrences of the same variable or agent symbol 
until all symbols have been dealt with.

Next, in the case of post-interpolant.  
Given an arbitrary formula $\alpha(\overrightarrow{p},\overrightarrow{q},\overrightarrow{a},\overrightarrow{b})$.
The post-interpolant formula is defined similarly as follows:

\begin{equation}
\nonumber
\mathcal{I}_{post(\overrightarrow{q},\overrightarrow{b})} (\alpha)  = 
    \begin{cases}
        \mathcal{E}_{(q_1,b_1 )} (    \cdots (\mathcal{E}_{(q_m,b_n )} (\alpha)) \cdots ), & \text{if } m =n. \\
        \mathcal{E}_{(q_1,b_1 )} (    \cdots  (\mathcal{E}_{(q_{n+l-1},b_n )} ( \mathcal{E}_{(q_{n+l},b_n )} (\alpha))) \cdots ) , & \text{if}~ m=n+l~\text{for}~l\neq0. \\
         \mathcal{E}_{(q_1,b_1 )} (    \cdots (\mathcal{E}_{(q_m,b_{m+l-1} )} (\mathcal{E}_{(q_m,b_{m+l} )} (\alpha))) \cdots ) , & \text{if}~ n=m+l~\text{for}~l\neq0.
    \end{cases}
\end{equation}

The conditions can be checked straightforward.
\end{proof}

 \section{Conclusion and Future Direction}
This paper provides the purely syntactic proof of UIP in epistemic logic with distributed knowledge
 $\mathbf{K}_D$, $\mathbf{KD}_D$  (in Theorem \ref{sec:main thm of gkd and gkdd}) , $\mathbf{KT}_D$ (in Corollary \ref{cor: cor main theorem of gktn}). In Corollary \ref{cor:uip in k, kd, kt multiple}, we show 
uniform interpolation properties for both pre-interpolant and post-interpolant formulas   with multiple agent symbols and propositional variables in these systems.
 Furthermore, not only propositional variables but also agent symbols are taken into the common language of the interpolant formula.

In the next step, it is interesting to show the UIP for distributed knowledge in modal logic $\mathbf{S5}$.
One well-known difficulty is that, the sequent calculus for  $\mathbf{S5}$ 
does not satisfy the cut elimination theorem (cf. \cite{takano92}). As a result, the  method in this paper cannot be directly applied.
In this direction, B{\'\i}lkov{\'a}, Fussner and Kuznets \cite{Bilkova2025AgentIF} have applied a method based on a combination of the features of hypersequents and nested sequents, and already proved the Lyndon (and hence Craig) interpolation property of multi-agent modal logic $\mathbf{S5}$.
Another direction is to  prove UIP in the intuitionistic modal system. To do this, we may need to make use of the $\mathbf{G4}$-style sequent calculus in Pitts \cite{pitts1992interpretation}.
Some progress have been made in this direction
\cite{Giessen2020ProofTF,Iemhoff2022TheGA,Feree2024}.









\bibliography{bib_sl.bib}

\end{document}